\newcommand{\paren}[1]{{\left({#1}\right)}}
\newcommand{\bparen}[1]{{\bigl({#1}\bigr)}}
\newcommand{\Bparen}[1]{{\Bigl({#1}\Bigr)}}
\newcommand{\bbparen}[1]{{\biggl({#1}\biggr)}}
\newcommand{\BBparen}[1]{{\Biggl({#1}\Biggr)}}
\newcommand{\nparen}[1]{({#1})}
\providecommand{\@prob}{{\mathrm{Prob}}}
\newcommand{\prob}{\@prob}
\newcommand{\Prob}[2][]{\@prob_{#1}\paren{#2}}
\newcommand{\bProb}[2][]{\@prob_{#1}\bparen{#2}}
\newcommand{\BProb}[2][]{\@prob_{#1}\Bparen{#2}}
\newcommand{\bbProb}[2][]{\@prob_{#1}\bbparen{#2}}
\newcommand{\BBProb}[2][]{\@prob_{#1}\BBparen{#2}}
\newcommand{\nProb}[2][]{\@prob_{#1}\nparen{#2}}
\providecommand{\@Exp}{{\mathrm{E}}}
\newcommand{\Exp}[2][]{\@Exp_{#1}\left[{#2}\right]}
\newcommand{\bExp}[2][]{\@Exp_{#1}\bigl[{#2}\bigr]}
\newcommand{\BExp}[2][]{\@Exp_{#1}\Bigl[{#2}\Bigr]}
\newcommand{\bbExp}[2][]{\@Exp_{#1}\biggl[{#2}\biggr]}
\newcommand{\BBExp}[2][]{\@Exp_{#1}\Biggl[{#2}\Biggr]}
\newcommand{\nExp}[2][]{\@Exp_{#1}[{#2}]}
\newcommand{\CondExp}[3][]{\Exp[#1]{\,#2\,\left\vert\vphantom{#2#3}\right.\,#3\,}}
\newcommand{\IlIf}[2]{\KwSty{if} #1 \KwSty{then} #2}
\newcommand{\IlRepeat}[2]{\KwSty{repeat} #2 \KwSty{until} #1}
\def\read{\text{\tt read}\xspace}
\newcommand{\xwrite}{\text{\tt write}\xspace}
\newcommand{\fetchInc}{\text{\tt fetch\&inc}\xspace}
\newcommand{\fetchDec}{\text{\tt fetch\&dec}\xspace}
\newcommand{\fetchSet}{\text{\tt fetch\&set}\xspace}
\newcommand{\llsc}{\text{\sf load-linked/store-conditional}\xspace}
\newcommand{\cas}{\text{\tt compare\&swap}\xspace}
\newcommand{\SC}{\text{\tt SC}\xspace}
\newcommand{\LL}{\text{\tt LL}\xspace}
\newcommand{\collect}{\text{\tt collect}\xspace}
\newcommand{\eps}{\varepsilon}
\renewcommand{\O}{\ensuremath{{O}}}
\newtheoremstyle{coolstyle}
    {9pt}% Space above
    {9pt}% Space below
    {\slshape}% Body font
    {}% Indent amount
    {\bfseries}% Theorem head font
    {.}% Punctuation after theorem head
    {.5em} % Space after theorem head
    {}% Theorem head spec (can be left empty, meaning `normal')
\theoremstyle{coolstyle}
\newtheorem{theorem}{Theorem}[section]
\newtheorem{claim}[theorem]{Claim}
\newtheorem{lemma}[theorem]{Lemma}
\newtheorem{definition}[theorem]{Definition}
\newtheorem{observation}[theorem]{Observation}
\newtheorem{remark}[theorem]{Remark}
\renewcommand{\AA}{\mathcal{A}}
\newcommand{\BB}{\mathcal{B}}
\newcommand{\EE}{\mathcal{E}}
\newcommand{\HH}{\mathcal{H}}
\newcommand{\MM}{\mathcal{M}}
\newcommand{\PP}{\mathcal{P}}
\newcommand{\QQ}{\mathcal{Q}}
\renewcommand{\SS}{\mathcal{S}}
\newcommand{\VV}{\mathcal{V}}
\newcommand{\WW}{\mathcal{W}}
\newcommand{\IIN}{\mathds{N}}
\newcommand{\vc}{{\vec{c}}}
\newcommand{\vd}{{\vec{d}}}
\newcommand{\tc}{{\tilde{c}}}
\newcommand{\close}[1]{\ensuremath{\text{close}\left(#1\right)}}
\newcommand{\obj}{\ensuremath{\mathit{O}}}
\newcommand{\op}{\ensuremath{\mathit{op}}}
\newcommand{\inv}[1]{\ensuremath{inv(#1)}}
\newcommand{\rsp}[1]{\ensuremath{rsp(#1)}}
\begin{document}

\title{Linearizable Implementations Do Not Suffice for Randomized Distributed Computation }
\author{Wojciech Golab\thanks{Research conducted mostly during a postdoctoral fellowship at the University of Calgary.}\,\,\thanks{
Research partially supported by the Natural Sciences and Engineering Research Council (NSERC) of Canada.}\\
HP Labs\\
\url{wojciech.golab@hp.com}
\and
Lisa Higham{$^\dag$}\\
University of Calgary\\
\url{higham@ucalgary.ca}
\and Philipp Woelfel{$^\dag$}\\
University of Calgary\\
\url{woelfel@ucalgary.ca}
}

\begin{titlepage}
\maketitle
\thispagestyle{empty}
\todo{
   A general note: We have too many LaTeX macro definitions. E.g., I believe the reason for having macros such as {\tt \textbackslash obj} is to allow an easy change of the symbol we use for objects.
   This doesn't work, though: It is impossible for all co-authors to keep track of all the macros, so this will always lead to inconsistencies (e.g., right now {\tt\textbackslash obj} is not used consistently), and then the benefit of using the macro in the first place goes away. Moreover, sometimes we have to use different symbols for objects, and I'd much prefer to type {\tt O} over typing {\tt\textbackslash obj}
   IMO using macros so extensively does not work well for multi-author documents.
   Of course it's o.k.\ to use macros, when it helps to reduce the typesetting effort significantly (as e.g., our macro for {\tt\textbackslash BB} does). But I would much prefer if we could avoid (and get rid of) all macros that don't fall in this category.
}

\begin{abstract}
Linearizability is the gold standard among algorithm designers for deducing the correctness of a distributed algorithm
using implemented shared objects from the correctness of the corresponding algorithm using atomic versions of the same objects.
We show that linearizability does not suffice for this purpose when processes can exploit randomization,
and we discuss the existence of alternative correctness conditions.
This paper makes the following contributions:
\begin{itemize}

\item
Various examples demonstrate that using well-known linearizable implementations of objects (e.g., snapshots)
in place of atomic objects can change the probability distribution of the outcomes that the adversary is able to generate.
In some cases,
an oblivious adversary can create a probability distribution of outcomes for an algorithm with implemented,
linearizable objects,
that not even a strong adversary can generate for the same algorithm with atomic objects.

\item
A new correctness condition for shared object implementations, called \emph{strong linearizability}, is defined.
We prove that a strong adversary (i.e., one that sees the outcome of each coin flip immediately) gains no
additional power when atomic objects are replaced by strongly linearizable implementations.
In general, no strictly weaker correctness condition suffices to ensure this.
We also show that strong linearizability is a local and composable property.

\item
In contrast to the situation for the strong adversary,
for a natural weaker adversary (one that cannot see a process' coin flip until its next operation on a shared object)
we prove that there is no correspondingly general correctness condition.
Specifically, any linearizable implementation of counters
from atomic registers and load-linked/store-conditional objects,
that satisfies a natural locality property,
necessarily gives the weak adversary more power than it has with atomic counters.

\end{itemize}
\end{abstract}

\end{titlepage}

\pagestyle{plain}

\SetAlCapNameSty{textsc}
\SetAlCapSty{textsc}
\SetFuncSty{textsc}

%%%%% Processing file intro.tex
\section{Introduction}

Linearizability is the gold standard among algorithm designers for deducing the correctness of
a distributed algorithm using implemented shared objects from the correctness of the corresponding
algorithm using atomic%
\footnote{
  In this paper, an \emph{atomic} operation is one that happens instantaneously, i.e., it is indivisible.
  But in the literature, the notion of atomicity is not used consistently.
  E.g., in her textbook \cite{Lynch_DistributedAlgorithms1996}, Lynch defines atomic objects to be linearizable,
        but Anderson and Gouda \cite{journals/ipl/AndersonG88} define atomicity in terms of instantaneous operations.}
versions of the same objects.
We explore this in more detail, showing that linearizability does not suffice for this purpose
when processes can exploit randomization.

In an asynchronous distributed system, processes collaborate
by executing an algorithm that applies operations to a collection of shared objects.
If the operations on these objects are atomic, then the result of the execution
is the same as some sequential execution that could arise from an arbitrary interleaving
of the processes' steps.
Alternatively, some objects could be replaced by a set of software methods for the different operations on those objects.
Processes would then invoke the appropriate method in order to simulate the intended atomic operation.
In this case, there is a finer granularity to the interleaving of process steps.
Consequently, we need to be sure that each possible result (e.g., the algorithm's return value for each process) that can arise from using the software methods
could also have arisen if the operations were atomic.

This requirement is ensured if the methods provided for each object constitute
a \emph{linearizable implementation} \cite{her:lin} of the object.
Linearizability is an especially useful and important correctness condition because it is  a \emph{local} property.
That is, if each object in a collection of objects is replaced by its linearizable implementation,
then the result of any execution that can arise from the concurrent use of the whole collection is one
that could have also happened if the objects were atomic.

Linearizable implementations, however, do not preserve the probability distribution of the possible results
as we transform the atomic system to the implemented one.
An \emph{adversary}, which schedules process steps, can ``stretch out'' a method call that was originally an atomic operation,
and concurrently inspect the outcome of other processes' coin flips.
Based on the outcomes, the scheduler can choose between alternative executions of the ongoing method call.
As we will illustrate through examples,
the consequences of this additional flexibility can be powerful and subtle,
allowing the behaviour of the implemented system to differ dramatically from that
of the atomic system.
In particular, the adversary can manipulate executions so that low-probability worst-case results
in the atomic system become much more probable in the implemented system.

We will see that our ability to curtail an adversary's additional power, which it can gain
when atomic objects are replaced by linearizable implementations,
depends in part upon the original power of the adversary.
Various adversaries have been defined in literature,
differing in their ability to base scheduling decisions on the random choices made by the algorithm
(see \cite{Aspnes2003_DistrComp} for an overview of adversary models).
The main results in this paper concern two adversary models.
Informally,
when a process is scheduled by a \emph{strong} adversary, the process executes only its next atomic operation,
whether on a local or a shared object.  (Coins are local objects.)
When a process is scheduled by a \emph{weak} adversary
it executes up to and including its next step on a shared object.
Thus, a strong adversary can intervene between a coin flip and the next step by the same process,
whereas a weak adversary cannot.
Further discussion of these adversaries, including formal definitions, appears in Section \ref{model.sec}.

\subsection*{Summary of contributions}

\noindent
\textbf{1.} Several examples demonstrate that using linearizable implemented objects in place of atomic objects in randomized algorithms allows the adversary to change the probability distribution of results.
Therefore, in order to safely use implemented objects in randomized algorithms,
it does not suffice to simply claim that these implementations are linearizable.

\noindent
\textbf{2.} A new correctness condition for shared object implementations, called \emph{strong linearizability},
which is strictly stronger than linearizability, is defined.
      We prove that a strong adversary against a randomized algorithm using strongly linearizable objects
      has exactly the same power as a strong adversary against the same algorithm using atomic objects.
      Conversely, if the set of histories that arise from a strong adversary scheduling an algorithm with implemented
      linearizable objects
      is  ``equivalent'' to the set of histories that can arise from some strong adversary scheduling the same algorithm
      with atomic objects, then the former set of histories must be strongly linearizable.
      We also show that several known universal constructions of linearizable objects with common progress properties
      (e.g., wait-freedom) provide strong linearizability.
      Finally, we prove that strong linearizability, like linearizability, is both a local and a composable property.

\noindent
\textbf{3.} In contrast to the situation for strong adversaries,
for weak adversaries strong linearizability has no counterpart.
For example, for some randomized algorithms,
weak adversaries always gain additional power when strong counters
(that support \fetchInc and \fetchDec operations)
are replaced with ``natural'' linearizable implementations based on a set of base objects supporting
reads, writes and \llsc operations.
Consequently, to prevent weak adversaries from gaining additional power, the implementation
of the counter would require additional base object types beyond what is necessary for linearizability.
This result is obtained by a technically involved proof;
it holds even for randomized implementations with fairly weak progress conditions (e.g., lock-freedom).

\medskip
Randomization has become an important technique in the design of distributed algorithms;
it allows us to circumvent some substantial impossibilities and complexity lower bounds of deterministic algorithms.
Our results impact the design of randomized algorithms that use shared objects
not directly supported through atomic primitives in hardware.
First, simulating the required shared objects in software using ``only'' linearizable implementations can break the algorithm.
Second, such algorithms are much easier to fix (using strong linearizability)
if they are designed from the outset to work against strong adversaries,
but not so if they are designed only to work against weak adversaries.
Third, since there are strongly linearizable universal constructions using consensus objects, which can be implemented using \cas,
any system that provides \cas in hardware can implement any object in a strongly linearizable way.

\section{Examples}

We begin with two examples to provide intuition and motivation,
and  delay the model details, which are needed for our technical results, until the next section.
The examples illustrate how an adversary in a randomized algorithm gains additional power when atomic objects
are replaced with implemented ones.

\paragraph{Atomic versus linearizable snapshots.}
An $n$ process snapshot object is a vector $(x_1, \ldots, x_n)$ of length $n$
that supports the atomic operations $\Update_p{}$ and $\Scan_p{}$ by any process $p \in \{1, \ldots , n \}$.
Operation $\Update_p(v)$ writes $v$ to $x_p$ while leaving all $x_i, i\neq p$ unchanged; and
$\Scan_p(v)$ returns the vector of values ($x_1, \ldots, x_n$) to $p$.

Initialize a snapshot object for three processes to $(x_p,x_q,x_r) = (0,0,0)$.
Suppose the processes $p,q$ and $r$ are executing the following code,
and the adversary is trying to minimize the sum of the values returned in $p$'s scan.

\begin{quote}
\medskip
\noindent
$p$:  $\Scan_p()$  \\
$r$:  $\Update_r(2)$;\quad  $\Update_r(0)$\\
$q$:  $\Update_q(6)$;\quad  $c :=$uniform-random$\{-1,1\}$;\quad $\Update_q(8 \cdot c)$
\end{quote}
\medskip

To keep the sum in $p$'s \Scan\ low,
the adversary can schedule either both or neither of $r$'s \Update\ operations before $p$'s \Scan.
If the adversary is weak, the same holds for $q$'s \Update\ operations.
Thus, under the best strategy for a weak adversary, the expected value of the sum in $p$'s \Scan is 0.
If the adversary is strong,
its best strategy is to schedule $p$'s \Scan before $q$'s second \Update if $q$'s coin flip returns 1
and after if it returns~$-1$.
Thus, under the best strategy for a strong adversary, the expected value
of the sum in $p$'s \Scan\ is $(6-8)/2=-1$.

Now suppose instead that \Update\ and \Scan\ are implemented from atomic registers
by the well-known wait-free linearizable algorithm due to Afek, Attiya, Dolev, Gafni, Merritt and Shavit \cite{aadgms:snapshots}.
In this algorithm, the snapshot object is implemented as an array $A[1:n]$ of registers.
Let a \emph{collect} denote a series of $n$ atomic reads, one for each element of $A$, in some fixed order.
To perform a \Scan, each process $p$ repeatedly collects until either two successive collects are
identical (a \emph{successful double collect}),
or $p$ observes that another process, say $r$, has executed at least two \Update\ operations to $A[r]$ during $p$'s \Scan.
In the second case, $p$ returns the last \Scan written (as we explain shortly) by $r$ during an \Update (a \emph{borrowed scan}).
To perform an \Update, each process $r$ must first perform a \Scan\ and then write the result of the \Scan\
together with its \Update\ argument into $A[r]$.
This ensures that if a \Scan\ has enough failed double collects, then a borrowed \Scan is possible.
With this implementation, the adversary can maneuver $p$, $q$ and $r$ as shown in Figure~\ref{fig:snapshot}.

\begin{figure*}[htbp]
\begin{center}
%%%%% Processing file snapshot_example.tex
\begin{tikzpicture}[
>=latex, %use latex arrows
very thick,
on grid,
auto,
]
\footnotesize

\newcommand{\outerop}[4][]{%
  \fill[fill=gray!30] ($(#2)+(0,-.5)$) rectangle ($(#3)+(0,.5)$);
  \draw[-]  (#2) -- node[swap,#1] {#4} (#3);
  \draw[very thick] ($(#2)+(0,-.4)$) -- ($(#2)+(0,.4)$);
  \draw[very thick] ($(#3)+(0,-.4)$) -- ($(#3)+(0,.4)$);
}

\newcommand{\innerop}[4][]{%
  \fill[fill=gray!70] ($(#2)+(0,-.4)$) rectangle ($(#3)+(0,.4)$);
  \draw[<->]  (#2) -- node[#1] {#4} (#3);
}

\node (r) {\normalsize $r$};
\node[below = 1.5 of r] (q) {\normalsize $q$};
\node[above = 1.5 of r] (p) {\normalsize $p$};

\coordinate[right = .4 of p ] (pScanBegin);
\coordinate[right = 16 of p ] (pScanEnd) ;
\outerop{pScanBegin}{pScanEnd}{\Scan}

\coordinate[right =.6 of p ] (pCollect1Begin) ;
\coordinate[right = 1.4 of pCollect1Begin ] (pCollect1End) ;
\innerop{pCollect1Begin}{pCollect1End}{\collect}

\coordinate[right =11.6 of p ] (pCollect2Begin) ;
\coordinate[right = 1.5 of pCollect2Begin ] (pCollect2End) ;
\innerop{pCollect2Begin}{pCollect2End}{\collect}

\coordinate[right =13.9 of p ] (pCollect3Begin) ;
\coordinate[right = 1.5 of pCollect3Begin ] (pCollect3End) ;
\innerop{pCollect3Begin}{pCollect3End}{\collect}

\coordinate[right = 5.5 of q ] (qUpdate2Begin) ;
\coordinate[right = 1.9 of qUpdate2Begin ] (qUpdate2End) ;
\outerop{qUpdate2Begin}{qUpdate2End}{$\Update_q(6)$}

\node[circle,text centered,fill=blue!20,draw=blue!200,thick,
  right = 8.35 of q] (coinflip) {$c$};

\coordinate[right = 9.3 of q] (qUpdate3Begin);
\coordinate[right = 2.3 of qUpdate3Begin ] (qUpdate3End);
\outerop{qUpdate3Begin}{qUpdate3End}{$\Update_q(8\cdot c)$}

\coordinate[right = 2 of r ] (rUpdate1Begin) ;
\coordinate[right = 1.9 of rUpdate1Begin ] (rUpdate1End) ;
\outerop{rUpdate1Begin}{rUpdate1End}{$\Update_r(2)$}

\coordinate[right = 4.3 of r] (rUpdate2Begin) ;
\coordinate[right = 16 of r ] (rUpdate2End) ;
\outerop{rUpdate2Begin}{rUpdate2End}{$\Update_r(0)$}

\coordinate[right = 4.5 of r ] (rScanBegin) ;
\coordinate[right = 1 of rScanBegin ] (rScanEnd) ;
\innerop{rScanBegin}{rScanEnd}{$\Scan$}

\coordinate[right = 13.5 of r] (rWrite1);
\coordinate[right = 15.7 of r] (rWrite2);

\node (label) at ($ (rWrite1)+(.3,-1.5) $) {$A[r].\xwrite(0)$};

\path[->,shorten >= 2pt] (label) edge[pos=0.4] node {$c=1$} (rWrite1);
\path[->,shorten >= 2pt] (label) edge[pos=0.4] node[swap] {$c=-1$} (rWrite2);

\draw[dotted,thick] ($ (rWrite1)+(0,0) $) -- ($ (rWrite1)+(0,2.1) $);
\draw[dotted,thick] ($ (rWrite2)+(0,0) $) -- ($ (rWrite2)+(0,2.1) $);

\end{tikzpicture}
%%%%% Done processing snapshot_example.tex
\end{center}
\vspace{-1em}
\caption{A ``bad'' scheduling using an implemented linearizable snapshot.}
\label{fig:snapshot}
\end{figure*}

In this execution, $r$ applies a \Scan\ that returns a view $S$ with sum~2 as the first part of its second \Update.
Then, the adversary chooses where to schedule the remainder of $r$'s second \Update,
which is the write to $A[r]$ of $(S, 0)$.
If $q$'s coin flip is~$-1$, it schedules this write after $p$'s third collect.
In this case, $p$ will have a successful double collect, which returns a view with sum $2+(-8)=-6$.
If $q$'s coin flip is~1, the adversary schedules $r$'s write between $p$'s second and third collects.
In this case, $p$ will have a failed double collect but will have seen $r$ \Update\ twice.
Accordingly, $p$ borrows $r$'s \Scan, and so $p$'s \Scan\ also returns the view $S$ with sum~2.
Thus, the adversary can force an expected sum in $p$'s \Scan\ of only $(-6+2)/2=-2$.
Notice, furthermore, that only a weak adversary was used to achieve this execution in the system with an implemented
snapshot object.

\paragraph{Atomic versus linearizable registers.}
Since the implemented method calls give the adversary more power than it has when operations are atomic,
we might conjecture that this additional power could be curtailed by appropriately restricting the adversary.
The next example shows that this is not always possible.

Let $R$ denote a multi-valued atomic single-reader/single-writer (SRSW) register initialized to~1.
Let processes $w$ and $p$ execute the following code:

\begin{quote}
\medskip
\noindent
$w$: $R.\Write(2)$;\quad  $c :=$uniform-random$\{0,2\}$;\quad   $R.\Write(c)$  \\
$p$:  $R.\Read()$
\medskip
\end{quote}

Suppose that a strong adversary is trying to minimize the value that $p$ reads.
Then the adversary's best strategy is to have $p$ execute its \Read\ either before or after both of $w$'s \Write\ operations.
In either case, the expected value of $p$'s \Read\ is~1.

Now suppose, instead, that $R$ is implemented
using Vidyasankar's linearizable implementation of
single-reader/single-writer (SRSW) multivalued registers from
SRSW atomic bits \cite{vid:registers}.
In this construction, an array $A[0\dots \ell]$ of
SRSW binary registers is used to represent a register with domain
$\{0, \ldots , \ell\} $.
Value $v$ is represented by $A[v] = 1$ and $A[0]=\dots=A[v-1]=0$.
The implementation is shown in Figure~\ref{fig:multivalued_registers}.

\begin{figure} % using package float
\begin{minipage}[t]{.4\textwidth}
\begin{function}[H]
  $A[v].\xwrite(1)$\;
  \For{$i=v-1,\dots,0$}{
    $A[i].\xwrite(0)$\;
  }
  \caption{Write($v$)}
\end{function}
\end{minipage}\hfill
\begin{minipage}[t]{.5\textwidth}
\begin{function}[H]
  $i:=-1$\;
  \IlRepeat{$A[i].\read()=1$}{$i:=i+1$}\;
  $val:=i$\;
  \For{$i=val-1,\dots,0$}{
    \IlIf{$A[i].\read()=1$}{$val:=i$}
  }
  \Return{$val$}
  \caption{Read()()}
\end{function}
\end{minipage}
\caption{Linearizable implementation of multivalued SRSW registers from atomic bits.}
\label{fig:multivalued_registers}
\end{figure}

Under this implementation, if the register is initialized with the value~1,
the adversary's best strategy is to schedule as follows:
First $p$ reads ``up'' seeing  $A[0] = 0$ and then $A[1] = 1$.
Next, $w$ takes all of its steps, then finally $p$ takes its remaining steps where it reads ``down''.
With probability 1/2, $w$ executed $A[0].\xwrite(1)$ and $p$ will return~0;
with probability 1/2, $w$ executed $A[2].\xwrite(1)$ and $p$ will return~1.
Hence the expected value returned by $p$'s \Read\ is~1/2.

In this example, the adversary makes all its scheduling decisions in advance;
it does not exploit knowledge of the outcome of coin flips while the computation proceeds.
Even reducing the power of the adversary from strong to this weakest \emph{oblivious} one
does not curtail its power sufficiently to retain the expected behaviour of the algorithm
when $R$ is an atomic register.

These examples motivate our central question:
What is required to preserve the behaviour of a randomized algorithm
when atomic operations are replaced by method calls?
The rest of this paper addresses this question.

%%%%% Done processing intro.tex
%%%%% Processing file model.tex
\section{Model and Definitions}
\label{model.sec}

We consider a distributed shared memory system consisting of a set $\PP$ of $n$ processes communicating via a set of globally shared base objects.

A shared object is an instance of a \emph{type}, which supports some set of operations.
Each such operation \op\ consists of an \emph{invocation} including operation arguments, denoted \inv{\op},
and a \emph{matching response} including the return value, denoted \rsp{\op}.
A \emph{type} is defined by a \emph{sequential specification}, which determines the set of sequences of operations that can occur on any object of that type \cite{her:lin}.
A sequence is \emph{valid} for object $\obj$ if it is in the sequential specification of the type of \obj.

In this paper, we restrict ourselves to deterministic types (except for coin objects as described below).
I.e., if $\op_1,\dots,\op_k$ and $\op_1,\dots,\op_{k-1},\op_k'$ are valid sequences and $\inv{\op_k}=\inv{\op_k'}$, then $\rsp{\op_k}=\rsp{\op_k'}$.\todo{Check!}

A process is a sequential thread of control that invokes operations on shared base objects and receives the responses of such operations.
Processes also have access to independent random experiments.
Let $\Omega$ be an arbitrary countable set, called the \emph{coin flip domain}.
A process step can invoke a \emph{flip} operation (with no arguments) on a \emph{coin} object,
which returns a \emph{coin flip} in $\Omega$ as the matching response.

An \emph{implementation} of a \emph{target type} $T$ is a distributed method using other implemented or base objects.
It takes as input the description of an operation invocation, and outputs a response, such that if multiple processes call the method multiple times \emph{sequentially}, then the resulting sequence of method invocations and responses matches the sequential specification of $T$. \todo{Check the  part ``such that if multiple\dots''!}
An implementation is deterministic, if it uses no coin objects; in this paper we consider only deterministic implementations of types.
An \emph{implemented object} is a method that implements a type.\todo{Not sure how to phrase this properly}

Each individual process $p$ executes its program by executing a sequence of operations on shared objects, where the first operation is fixed and the $k$-th operation invocation, $k>1$, is a function of the responses $p$ received from the preceding $k-1$ operations (including flip operations).

Steps of multiple processes interleave, resulting in a \emph{history} $H$, which is a sequence of \emph{steps}, i.e., invocations and responses corresponding to the operations executed by all processes on all base objects and all implemented objects.

Thus, the projection of $H$ onto the steps of any process, $p$, denoted $H|p$, is a sequence of steps consistent with $p$'s program.

We say that an operation $\op$ is \emph{atomic in history $H$}, if $\op$'s invocation is either the last step in $H$, or else is followed immediately in $H$ by a matching response.
(Note that in related literature, an atomic operation is typically represented by a single event.
However, for technical reasons that become more clear in Section~\ref{strongAdversary.sec},  the invocation/response representation is more convenient in this paper.)
Operations on implemented objects are never atomic, while operations on base objects may or may not be atomic.
(We assume that an operation on an implemented object internally applies at least one base object operation.)
A history $H$ is \emph{sequential} if all operations in $H$ are atomic.

A history, $H$, defines a partial \emph{happens before} order $\prec_H$ on its operations, where, for operations $\op$ and $\op'$, $\op\prec_H \op'$ if and only if in $H$ the response of $\op$ occurs before the invocation of $\op'$.
(The relation $\prec_H$ is a total order if and only if $H$ is sequential.)

A sequential history, $H$, is \emph{valid} if, for any object \obj, the projection of $H$ onto the steps associated with \obj, denoted $H|\obj$, is in the sequential specification of the type of \obj.
The new history formed from concatenating history $H$ to the end of history $G$ is denoted $G \circ H$.\todo{Definition of concatenation seems out of place; this paragraph is about sequential histories}

A  history that arises from an algorithm that uses an implemented object \obj\
can be \emph{interpreted} as a history $\Gamma(H)$ of the same algorithm using a base object of the same type:
$\Gamma(H)$ is obtained from $H$ by omitting, for each operation \op\ on \obj, say by process $p$,
  all the steps that appear in $H|p$ after the invocation $\inv{\op}$ and before the matching response $\rsp{op}$.
Thus, for each operation \op\ on \obj\ in $\Gamma(H)$,
\inv{\op} corresponds to the method invocation that simulates operation \op,
\rsp{\op} corresponds to the response of that method call, and
all operations on the base objects within the method call are omitted.
The \emph{set of histories of an implementation} is the set of histories where processes access an object instantiated using the implementation (and no other implemented object).\todo{I find the ``and nothing else'' part rather confusing. It is unclear, whether these histories are interpreted or not; I believe they are not.}
If $\HH$ is a set of histories, then $\Gamma(\HH)=\{\Gamma(H)\,|\,H\in\HH\}$ denotes the set of interpretations of histories in $\HH$.

For correctness, an interpreted history should ``correspond'' to one that could arise from an atomic object.
This is captured by the correctness property called \emph{linearizability} \cite{her:lin}.
(Note that in literature sometimes the term \emph{atomic object} is used to denote a linearizable object, see e.g.\ \cite{Lynch_DistributedAlgorithms1996}.)
An operation, \op, is \emph{complete in a history $H$} if $H$ contains both \inv{\op} and a matching \rsp{\op}.
Since a process is a sequential thread of control,
\todo{``From the definition of a process'' seems wrong. And why do we need this sentence, anyway?}
we see that every operation in $H|p$, except possibly the last one, is complete.
A \emph{linearization} of a history $H$ is a valid sequential history $H'$
that contains all completed operations of $H$ and possibly some non-completed ones (with matching responses added),
and where $\prec_{H'}$ extends $\prec_H$.
A history $H$ is \emph{linearizable} if it has at least one linearization.
(Note that a history containing operations on implemented objects is not linearizable in general because it encodes operations
  on base objects, but its interpretation might be linearizable.)
\todo{Perhaps we should introduce the convention that when we talk about the linearization of a history $H$ we mean the linearization of $\Gamma(H)$?}

An implementation of a shared object type is \emph{linearizable} if its set of histories
contains only histories whose interpretations are linearizable.

Flip operations on a coin object are always atomic,
and return a value from the set $\Omega$ defined earlier.
A vector $\vc=(c_1,c_2,\dots)\in\Omega^\infty$ is called a \emph{coin flip vector}.
History $H$ \emph{observes} the coin flip vector $\vc=(c_1,c_2,\dots)$,
if the $i$-th flip operation in $H$ returns value $c_i$.
For a history $H$ that contains $k$ flip operations,
let $H[k]$ denote the prefix of $H$ that ends with the $k$-th invocation of a flip operation; if fewer than $k$ flips occur during $H$, then $H[k]$ denotes $H$.

The order in which steps of processes interleave is given by a
\emph{schedule}, which is a (possibly infinite) sequence of process IDs.
History $H$ \emph{observes} schedule $\sigma=(\sigma_1,\sigma_2,\dots)$,
if in $H$ the $i$-th step is one executed by process $\sigma_i$.

Schedules are generated by an \emph{adversary}.
Typically, adversaries take the past execution into account to schedule the next process.
We are concerned primarily with two adversaries.
Informally, a \emph{weak adversary} cannot intervene between a flip operation and the next operation invocation by the same process.
This means that in any history, any flip operation by a process $p$ is immediately followed by an invocation step by $p$.
In contrast, a \emph{strong adversary} can use the response of the coin flip to determine which process takes the next step.
The following definitions serve to unify these adversaries, and can easily be seen to capture these informal notions.
An \emph{adversary} is a mapping $\AA:\Omega^\infty\to\PP^\infty$.
An algorithm $\MM$ together with an adversary $\AA$ and a coin flip vector $\vc=(c_1,c_2,\dots)\in\Omega^\infty$
generates the unique history, denoted $H_{\MM,\AA,\vc}$,
that observes the schedule $\AA(\vc)$ and the coin flip vector $\vc$,
and where  all processes perform steps as dictated by $\MM$.

\vspace{-2pt}
\begin{itemize}
\setlength{\itemsep}{-2.8pt}
\item
An adversary without additional restrictions is called an \emph{offline adversary}.
(An offline adversary can ``see'' all the coin flips in advance and can use them to make current scheduling decisions.)
\item
Adversary $\AA$ is \emph{strong for algorithm $\MM$}
if, for any two coin flip vectors $\vc$ and $\vec{d}$ that have a common prefix of length $k$,
 $H_{\MM,\AA,\vc}[k+1]=H_{\MM,\AA,\vd}[k+1]$.
(A strong adversary cannot use future coin flips to make current scheduling decisions.)
\item
Adversary $\AA$ is \emph{weak for algorithm $\MM$}
if it is  strong for algorithm $\MM$ and is additionally constrained so that,
in $H_{\MM,\AA,\vc}$, every flip by process $p$ is followed immediately by the invocation of some operation by $p$.
(A weak adversary cannot use future coin flips or the current coin flip to make the next scheduling decision.)
\item
Adversary $\AA$ is \emph{oblivious} if $\AA$ is a constant function, that is, $\AA(\vc)$ is the same for all $\vc\in\Omega^\infty$.
(An oblivious adversary cannot use coin flips at all to make scheduling decisions.)
\end{itemize}

A strong adversary is commonly considered in the distributed algorithm literature.
Our weak adversary is similar to other adversaries in the literature,
such as that assumed by Chor, Israeli and Li \cite{CIL1987_PODC},
and further discussed by Abrahamson \cite{Abrahamson88_PODC}.
However, while their adversary cannot intervene between flip operations and writes,
it can intervene between flips and reads. (No other atomic operations are considered.)
Our goal is to compare the behaviour of systems with atomic objects to those with implemented objects,
for arbitrary objects that could support stronger operations than just reads and writes.
Consequently, we assume that an adversary treats all operations consistently;
it cannot intervene between a flip and some operations but not others.
Furthermore, always binding a flip operation to the next step of the same process,
instead of binding only if that next step is a write,
serves to strengthen our impossibility result for weak adversaries in Section~\ref{weakAdversary.sec}.

As we compare the powers of different adversaries in the remainder of the paper,
we will refer repeatedly to the following notion of equivalence:
\begin{definition}\label{def_advequiv}
  Let $\MM$ and $\MM'$ be two algorithms and $\AA$ and $\AA'$ be two adversaries.
  We say that $(\MM,\AA)$ and $(\MM',\AA')$ are \emph{equivalent}
  if for any coin flip vector $\vc$,
  there exists a sequential history that is a linearization of $\Gamma(H_{\MM,\AA,\vc})$ and
  of $\Gamma(H_{\MM',\AA',\vc})$.
\end{definition}

Some of the results discussed in Sections~\ref{strongAdversary.sec} and \ref{weakAdversary.sec} refer
to well-known progress requirements.
An implementation of a shared object type is \emph{wait-free} if in any history,
each method call incurs a finite number of steps.
An implementation is \emph{lock-free} if in any history, either
each method call takes finitely many steps, or else infinitely many method calls complete.
An implementation is \emph{terminating} if in any history, either
each method call takes finitely many steps, or else some process that takes finitely many
steps invokes a method call that it does not complete.

%%%%% Done processing model.tex
%%%%% Processing file strongAdversary.tex
\label{strongAdversary.sec}\sloppy

In this section, we discuss a novel technique for limiting the additional power a strong adversary may gain
against an algorithm when atomic objects used by the algorithm are replaced with implemented objects.

\section{Strong Linearizability}
We define a correctness property stronger than linearizability, called \emph{strong linearizability},
and prove that under any strong adversary,
strongly linearizable implementations of shared objects preserve
the probability space of computations of an algorithm using such objects.
We also show that strong linearizability maintains
locality and composability---powerful properties that facilitate algorithm design.

For a set of histories $\HH$, let \close{\HH} denote the prefix-closure of $\HH$.
That is, $G \in \close {\HH} $ if and only if there is a sequence, $S$, of invocation and response steps such that
$G\circ S \in \HH$. (Recall that the operator $\circ$ denotes concatenation.)
Consider a function $f$ that maps a set $\HH$ of histories to a set $\HH'$ of histories.
We say that $f$ is \emph{prefix preserving}, if for any two histories $G,H\in\HH$, where $G$ is a prefix of $H$, $f(G)$ is a prefix of $f(H)$.

\begin{definition}\label{def:prefix-linearizability}
  A set of histories $\HH$ is \emph{strongly linearizable} if there exists a function $f$ mapping
  histories in $\close{\HH}$ to sequential histories, such that\\[-4ex]
  \begin{enumerate}\advance\itemsep-1ex
   \item[(L)] for any $H\in \close{\HH}$, $f(H)$ is a linearization of the interpreted history $\Gamma(H)$, and
   \item[(P)] $f$ is prefix-preserving.
  \end{enumerate}\vskip-1ex
A function satisfying properties (L) and (P) is called a \emph{strong linearization function for $\HH$}.
\end{definition}

An implementation of a type is \emph{strongly linearizable} if the set of histories formed by
interpreting each history in the set of histories of the implementation is strongly linearizable.

We emphasize some differences between the concept of linearizability and strong linearizability:
\begin{enumerate}
 \item In order to determine whether an implementation of a type is linearizable it suffices to look at every single history individually; However, property (P) from the definition of strong linearizability is defined for sets of histories, so we have to consider all possible histories together.
 \item Linearizability is defined in terms of interpreted histories. I.e., it does not matter how an object is implemented, as long as all possible sequences of high-level invocations and responses satisfy the linearizability property.
 For strong linearizability the low-level (i.e., non-interpreted) histories have to satisfy property (P), so the implementation of the object seems to be more important.
\end{enumerate}

In the following we consider sets of histories $\HH$ that are generated by a (fixed) strong adversary $\AA$ for a given algorithm $\MM$.
It will prove helpful to note that in this case it does not make a difference for the strong linearizability of $\HH$ whether $\HH$ is a set of low-level histories or of interpreted histories:
\begin{observation}\label{obs:adversary->interpreted=low-level}
  Let $\MM$ be an algorithm and $\AA$ a strong adversary.
  Then $\HH:=\left\{H_{\MM,\AA,\vc}\,|\,\vc\in\Omega^\infty\right\}$ is strongly linearizable if and only if $\Gamma(\HH)=\left\{\Gamma(H_{\MM,\AA,\vc})\,|\,\vc\in\Omega^\infty\right\}$ is strongly linearizable.
\end{observation}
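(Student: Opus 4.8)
The plan is to relate the two strong linearization functions through the interpretation map $\Gamma$, using two elementary properties of $\Gamma$. First, $\Gamma$ is idempotent: after one application no internal base-object step of an $\obj$-operation remains to be deleted, so $\Gamma(G)=G$ for every history $G$ in the image of $\Gamma$, in particular for every $G\in\close{\Gamma(\HH)}$. Second, $\Gamma$ is prefix-preserving: whether a step counts as internal to an $\obj$-operation is decided the same way in a history and in any prefix of it (all steps following a still-pending invocation are deleted), so if $H$ is a prefix of $H'$ then $\Gamma(H)$ is a prefix of $\Gamma(H')$; consequently $\Gamma$ maps $\close{\HH}$ onto $\close{\Gamma(\HH)}$. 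With these in hand, "$\Gamma(\HH)$ is strongly linearizable" means precisely that there is a prefix-preserving $g$ on $\close{\Gamma(\HH)}$ such that each $g(G)$ is a linearization of $G$ itself.

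The direction "$\Gamma(\HH)$ strongly linearizable $\Rightarrow$ $\HH$ strongly linearizable" needs nothing about the adversary: given such a $g$, set $f:=g\circ\Gamma$. For $H\in\close{\HH}$ the history $\Gamma(H)$ lies in $\close{\Gamma(\HH)}$, so $f(H)=g(\Gamma(H))$ is defined and is a linearization of $\Gamma(\Gamma(H))=\Gamma(H)$ (property (L)); and $f$ is prefix-preserving because $\Gamma$ and $g$ both are (property (P)).

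The other direction is where strongness of the adversary enters. Given a strong linearization function $f$ for $\HH$, I would put $g:=f\circ\iota$, where $\iota(G)$, the \emph{minimal lift} of $G\in\close{\Gamma(\HH)}$, is defined as follows: choose $\vc$ with $G$ a prefix of $\Gamma(H_{\MM,\AA,\vc})$ and let $\iota(G)$ be the prefix of $H_{\MM,\AA,\vc}$ ending at the step that corresponds to the last step of $G$ (and $\iota(\emptyset)=\emptyset$). By construction $\Gamma(\iota(G))=G$, and $\iota$ is prefix-preserving; so once $\iota$ is shown to be well defined, $g(G)=f(\iota(G))$ is a linearization of $\Gamma(\iota(G))=G=\Gamma(G)$ (property (L)), and $g$ is prefix-preserving because $\iota$ and $f$ are (property (P)).

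The whole weight of the argument therefore rests on showing that $\iota(G)$ is independent of the chosen $\vc$, and this is the one place "strong adversary" is invoked. Suppose $G$ is a prefix of both $\Gamma(H_{\MM,\AA,\vc})$ and $\Gamma(H_{\MM,\AA,\vd})$; if $\vc=\vd$ there is nothing to show, so let $k$ be the length of the longest common prefix of $\vc$ and $\vd$, with $c_{k+1}\ne d_{k+1}$. Since the implementation of $\obj$ uses no coin objects, every flip occurs directly in $\MM$'s code, so flip invocations and responses are not deleted by $\Gamma$; and since a flip is atomic, each flip response immediately follows its invocation in $\Gamma(H_{\MM,\AA,\vc})$ as well as in $H_{\MM,\AA,\vc}$. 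Hence, were $G$ to contain the response of the $(k+1)$st flip, its value would be forced to be $c_{k+1}$ (reading it off $\Gamma(H_{\MM,\AA,\vc})$) and also $d_{k+1}$ (reading it off $\Gamma(H_{\MM,\AA,\vd})$), a contradiction; so $G$ reaches no further than the $(k+1)$st flip invocation, which means $G$ is already a prefix of $\Gamma\bigl(H_{\MM,\AA,\vc}[k+1]\bigr)$. But $H_{\MM,\AA,\vc}[k+1]=H_{\MM,\AA,\vd}[k+1]$ by the definition of a strong adversary, so, computed relative to either $\vc$ or $\vd$, $\iota(G)$ is the minimal lift of $G$ into the single history $H_{\MM,\AA,\vc}[k+1]$, an object that no longer refers to $\vc$ or $\vd$. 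This independence step is the main obstacle; the remaining verifications are routine manipulations of $\Gamma$, prefixes, and the two definitions.
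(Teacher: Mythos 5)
Your proof is correct and follows essentially the same route as the paper's: the easy direction is handled by composing with $\Gamma$, and the hard direction by constructing a section of $\Gamma$ whose well-definedness rests on the fact that two histories generated by a strong adversary can diverge only at a flip invocation whose response survives interpretation. The only cosmetic difference is that you lift each interpreted prefix to its \emph{shortest} preimage, whereas the paper uses the \emph{longest} one (via its claim that all preimages of a given interpreted history form a prefix chain); both hinge on the identical use of $H_{\MM,\AA,\vc}[k+1]=H_{\MM,\AA,\vd}[k+1]$ together with the determinism of the object implementations.
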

\begin{proof}
  Assume w.l.o.g.\ that $\HH$ is prefix-closed (and then so is $\Gamma(\HH)$).
  First note that for any $H,H'\in\HH$ the following is true:
  \begin{equation}\label{eq:interpretation-property}
    \text{If $\Gamma(H)=\Gamma(H')$, then either $H$ is a prefix of $H'$ or vice versa.}\tag{$\ast$}
  \end{equation}
  Let $G$ be the longest common prefix of $H$ and of $H'$.
  For the purpose of a contradiction assume that $G$ is a proper prefix of $H$ and of $H'$.
  Since $H$ and $H'$ are generated from the same algorithm $\MM$ and the scheduling of a strong adversary, $G$ must end with the invocation of a flip operation $fl$.
  Hence, the response of that flip operation is in $H$ and $H'$ but its return value is different in these two histories.
  But since all object implementations are deterministic, $fl$ must occur (and respond) in $\Gamma(H)=\Gamma(H')$---a contradiction.

  Now suppose that $\HH$ is strongly linearizable.
  Let $f$ be a strong linearization function for $\HH$.
  For any history $H'\in\Gamma(\HH)$ let $\max_{\HH}(H')$ be the longest history in $\HH$ with $\Gamma(\max_{\HH}(H'))=H'$.
  (By (\ref{eq:interpretation-property}) all histories $H$ with $\Gamma(H)=H'$ are prefixes of $\max_{\HH}(H')$.)
  Define $f'(H')=f(\max_{\HH}(H'))$.
  Then $f'$ is a strong linearization function of $\Gamma(\HH)$:
  If $G'$ is a prefix of $H'$, for $G',H'\in\Gamma(\HH)$, then $\max_{\HH}(G')$ is a prefix of $\max_{\HH}(H')$ and so $f'(G')=f(\max_{\HH}(G'))$ is a prefix of $f'(H')=f(\max_{\HH}(H'))$.
  Moreover, since $f$ satisfies property (L), $f'(H')$ is a linearization of $\Gamma(\max_{\HH}(H'))=H'$. %=\Gamma(H')$.
  Thus, $f'$ satisfies properties (P) and (L).

  Now suppose that $\Gamma(\HH)$ is strongly linearizable and that $g$ is a strong linearization function for it.
  For each history $H\in\HH$ we define $g'(H):=g(\Gamma(H))$.
  Then it is immediate that $g'$ inherits properties (P) and (L) from $g$, so $g'$ is a strong linearization function for $\HH$.
\end{proof}

\subsection{Strong Linearizability is Necessary}
In the following we show that if the power of the set of (at most) strong adversaries is not enhanced by the implemented objects, then the algorithm using implemented objects generates a strongly linearizable set of histories.
\todo{This implies more than what we thought, namely that strong linearizability is necessary even for all weaker (than strong) adversary models. We need to emphasize this somewhere.}
\begin{theorem}\label{thm:prefix-lin-necessary}
Let $\MM$ be an algorithm that uses only atomic objects,
and let $\MM'$ be the algorithm obtained from $\MM$ by replacing some objects with linearizable implementations.
Further, let $\AA'$ be an adversary that is strong for $\MM'$.
If there exists an adversary $\AA$ that is strong for $\MM$ such that $(\AA,\MM)$ and $(\AA',\MM')$ are equivalent,
then $\HH'=\left\{H_{\MM',\AA',\vc}\,|\,\vc\in\Omega^\infty\right\}$ is strongly linearizable.
\end{theorem}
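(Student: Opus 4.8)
The plan is to construct an explicit strong linearization function $f$ for $\HH'$ directly from the equivalence hypothesis. By Observation~\ref{obs:adversary->interpreted=low-level}, it suffices to exhibit a strong linearization function for $\Gamma(\HH')=\{\Gamma(H_{\MM',\AA',\vc})\,|\,\vc\in\Omega^\infty\}$, and we may assume this set is prefix-closed. For each $\vc\in\Omega^\infty$, equivalence of $(\AA,\MM)$ and $(\AA',\MM')$ gives a sequential history that linearizes both $\Gamma(H_{\MM,\AA,\vc})$ and $\Gamma(H_{\MM',\AA',\vc})$. The key observation is that $\MM$ uses only atomic objects, so $H_{\MM,\AA,\vc}$ is \emph{already} sequential (every operation is atomic), hence $\Gamma(H_{\MM,\AA,\vc})=H_{\MM,\AA,\vc}$ is its own unique linearization. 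Therefore the shared linearization is forced: it must equal $H_{\MM,\AA,\vc}$ itself. So the natural definition is $f(\Gamma(H_{\MM',\AA',\vc})) := H_{\MM,\AA,\vc}$; equivalently, $f$ sends the interpreted low-level history produced by $\MM'$ on coins $\vc$ to the atomic history produced by $\MM$ on the same $\vc$. Property (L) is then immediate from the equivalence hypothesis.

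The remaining work is to check that $f$ is well-defined and prefix-preserving, and here I expect the main obstacle. Well-definedness requires: whenever two coin vectors $\vc,\vd$ yield the same interpreted history $\Gamma(H_{\MM',\AA',\vc})=\Gamma(H_{\MM',\AA',\vd})$, they must yield the same atomic history $H_{\MM,\AA,\vc}=H_{\MM,\AA,\vd}$. I would argue this via the determinism/strong-adversary reasoning already used in the proof of Observation~\ref{obs:adversary->interpreted=low-level}: property~(\ref{eq:interpretation-property}) shows that equal interpreted histories for a strong adversary are prefix-comparable as low-level histories, and the interpreted histories determine which coin-flip responses actually occurred; since $\AA$ is strong for $\MM$, $H_{\MM,\AA,\vc}$ depends only on the prefix of $\vc$ up to the last flip that is actually invoked, and that data is recoverable from $\Gamma(H_{\MM',\AA',\vc})$ (flip operations are atomic and survive interpretation). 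So $f$ is well-defined; I would package this as a short lemma or inline claim.

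For prefix-preservation (P): suppose $G'$ is a prefix of $H'$ with $G',H'\in\Gamma(\HH')$, say $H'=\Gamma(H_{\MM',\AA',\vc})$ and $G'=\Gamma(H_{\MM',\AA',\vd})$. Using the prefix-closure of $\Gamma(\HH')$ and, crucially, the fact that a strong adversary's behaviour on a coin vector is determined step-by-step by the responses seen so far, one can choose $\vd$ to be a prefix of $\vc$ (extended arbitrarily), or more carefully: the prefix $G'$ corresponds to running $\MM'$ under $\AA'$ on the \emph{same} coins $\vc$ but stopping earlier, so the atomic execution of $\MM$ under $\AA$ on those same coins, stopped at the matching point, is a prefix of $H_{\MM,\AA,\vc}$. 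The delicate point is matching "where $G'$ stops" inside $H'$ to a well-defined stopping point inside the atomic execution $H_{\MM,\AA,\vc}$; since interpretation only deletes internal base-object steps of stretched method calls and $\MM$'s operations are atomic, each completed high-level operation in $G'$ corresponds to exactly one atomic operation in $H_{\MM,\AA,\vc}$, and I would use this correspondence to show $f(G')$ is literally the prefix of $f(H')=H_{\MM,\AA,\vc}$ ending at that operation. Once (L) and (P) are verified, $f$ is a strong linearization function for $\Gamma(\HH')$, and by Observation~\ref{obs:adversary->interpreted=low-level} $\HH'$ is strongly linearizable, completing the proof. The single hardest step is the bookkeeping that aligns prefixes of the implemented execution with prefixes of the atomic execution via a common coin vector; everything else follows mechanically from the definitions and from Observation~\ref{obs:adversary->interpreted=low-level}.
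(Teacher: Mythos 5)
Your overall strategy is the same as the paper's: send the full interpreted history $\Gamma(H_{\MM',\AA',\vc})$ to the atomic history $H_{\MM,\AA,\vc}$ (which is indeed forced, since $H_{\MM,\AA,\vc}$ is sequential and hence essentially its own unique linearization), and send a proper prefix $G'$ to a prefix of $H_{\MM,\AA,\vc}$ determined by the operations that complete in $G'$. The genuine gap sits exactly in the step you defer as ``bookkeeping.'' First, a proper prefix $G'\in\close{\HH'}$ is in general \emph{not} of the form $\Gamma(H_{\MM',\AA',\vd})$ for any coin vector, and it is a prefix of $H_{\MM',\AA',\vc}$ for \emph{many} different $\vc$ (all vectors consistent with the flips resolved in $G'$); your definition of $f(G')$ refers to one particular $H_{\MM,\AA,\vc}$, yet you check well-definedness only in the irrelevant case where two coin vectors produce the same \emph{full} interpreted history. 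The missing statement is the paper's Claim~\ref{clm:prefix-lin-necessary-extends}: if $G'$ is a common prefix of $H_{\MM',\AA',\vc}$ and $H_{\MM',\AA',\vd}$, the assigned linearization prefix is the same for both. Its proof is not mechanical: the right definition is the \emph{shortest} prefix of $H_{\MM,\AA,\vc}$ containing \emph{all} operations that complete in $G'$ (not ``the prefix ending at the last completed operation''---the linearization may interleave operations still pending in $G'$), and one must first show this prefix cannot contain the $(k+1)$-st coin flip, where $k$ is the length of the common prefix of $\vc$ and $\vd$; only then do strongness of $\AA'$ (giving $H_{\MM',\AA',\vc}[k+1]=H_{\MM',\AA',\vd}[k+1]$) and strongness of $\AA$ (giving agreement of $H_{\MM,\AA,\vc}$ and $H_{\MM,\AA,\vd}$ up to that flip) yield equality. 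Without this, $f$ is not even a well-defined function on $\close{\HH'}$, and property (P) cannot be established.

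A second, smaller omission: property (L) of Definition~\ref{def:prefix-linearizability} must hold for \emph{every} history in $\close{\HH'}$, not only for full histories, so it is not ``immediate from the equivalence hypothesis.'' For a proper prefix $G'$ one must show that the shortest prefix of $H_{\MM,\AA,\vc}$ containing all operations completed in $G'$ is a linearization of $G'$; in particular, that every operation appearing in it has its invocation in $G'$, which requires the minimality argument of Claim~\ref{clm:prefix-lin-necessary-lin}. These two claims constitute the substance of the paper's proof; your proposal identifies the right construction but does not supply either of them.
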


  Due to Observation~\ref{obs:adversary->interpreted=low-level}, it suffices to consider only interpreted histories when proving this theorem.
  Thus, in the following proof we only consider interpreted histories.
  For the ease of notation we simply write $H$ instead of $\Gamma(H)$ for every history $H$ considered.

Let $\HH^\ast=\close{\HH'}$. %\left\{H_{\MM',\AA',\vc}\,|\,\vc\in\Omega^\infty\right\}}$.
Since $(\AA,\MM)$ and $(\AA',\MM')$ are equivalent, and all histories of $\MM$ are sequential,
each history $H'=H_{\MM',\AA',\vc}$, where $\vc\in\Omega^\infty$, has a linearization $\ell(H')=H_{\MM,\AA,\vc}$.
Let $H'\in\HH'$ and let $G'\in\HH^\ast$ be a prefix of $H'$.
Define
$g(G',H')$ to be the shortest prefix of $\ell(H')$ that contains all operations that complete in $G'$.
Two claims help clarify the proof of Theorem~\ref{thm:prefix-lin-necessary}.

\begin{claim}\label{clm:prefix-lin-necessary-lin}
$G=g(G',H')$ is a linearization of $G'$.
\end{claim}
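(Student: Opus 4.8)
The goal is to show that $G := g(G',H')$ — the shortest prefix of the sequential history $\ell(H') = H_{\MM,\AA,\vc}$ that contains every operation completing in $G'$ — is a linearization of $G'$. Recall (per the proof setup) that we work entirely with interpreted histories, so $G'$ is a prefix of $H' = H_{\MM',\AA',\vc}$, and $\ell(H')$ is a valid sequential history linearizing $H'$ with $\prec_{\ell(H')}$ extending $\prec_{H'}$. Since $G$ is a prefix of a valid sequential history, $G$ is itself valid and sequential, so the only things to check are: (i) $G$ contains all completed operations of $G'$ (and possibly some not-yet-completed ones, which is allowed by the definition of linearization), and (ii) $\prec_G$ extends $\prec_{G'}$.

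Part (i) is essentially by construction: $G$ is defined to be the shortest prefix of $\ell(H')$ containing all operations that complete in $G'$. The only subtlety is that an operation $op$ that appears in $G$ but does not complete in $G'$ must still be an operation "of $G'$'' in the sense that $\inv{op}$ occurs in $G'$ — but this may fail, and that is fine: the definition of linearization of $G'$ permits $G$ to contain non-completed operations of $G'$ with matching responses added. I need to argue that any operation in $G$ that is not completed in $G'$ is nonetheless invoked in $G'$ (so that $G$ really linearizes $G'$ and not some larger history); this should follow because $G$ is the *shortest* such prefix, so its last operation is one that completes in $G'$, hence is invoked in $G'$, and $\prec_{\ell(H')}$ extending $\prec_{H'}$ forces every earlier operation in $G$ to have been invoked before that last operation completes — and since $G'$ is a prefix of $H'$ closed under "response of a completing operation,'' all those invocations lie in $G'$ as well. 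This is the step I expect to require the most care: pinning down exactly why every operation in $G$ has its invocation in $G'$, using minimality of $G$ together with the fact that $\prec_{\ell(H')}$ respects $\prec_{H'}$.

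For part (ii), suppose $op_1 \prec_{G'} op_2$, i.e., $\rsp{op_1}$ precedes $\inv{op_2}$ in $G'$. Since $G'$ is a prefix of $H'$, also $op_1 \prec_{H'} op_2$, and since $\prec_{\ell(H')}$ extends $\prec_{H'}$, we get $op_1 \prec_{\ell(H')} op_2$, so $op_1$ precedes $op_2$ in the total order of $\ell(H')$. Because $op_1 \prec_{G'} op_2$ implies $op_1$ completes in $G'$, we have $op_1 \in G$; and then since $op_2$ comes after $op_1$ in $\ell(H')$, either $op_2 \in G$ (in which case $op_1 \prec_G op_2$ as required) — and I need $op_2 \notin G$ to be impossible only when $op_2$ itself completes in $G'$, which again holds since $op_2 \prec_{G'}$-something or is required to be present. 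Wrapping up: $\prec_G$ is exactly the restriction of the total order of $\ell(H')$ to the operations appearing in $G$, and I have shown this restriction extends $\prec_{G'}$. Combining (i) and (ii) with validity and sequentiality of $G$ gives that $G$ is a linearization of $G'$, proving the claim.
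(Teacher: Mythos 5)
Your proposal is correct and follows essentially the same route as the paper's proof: validity and sequentiality come for free because $G$ is a prefix of the sequential history $\ell(H')$; the real-time order is preserved because $\prec_{G'}\subseteq\prec_{H'}$ is respected by $\ell(H')$ and hence by its prefix $G$; and the key step — that every operation appearing in $G$ is invoked in $G'$ — is argued exactly as in the paper, via minimality of $G$ (its last operation completes in $G'$), the fact that $\ell(H')$ extends $\prec_{H'}$, and prefix-closedness of $G'$. No substantive difference from the paper's argument.
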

\begin{proof}
Suppose that $\op\prec_{G'} \op'$.
Since $G'$ is a prefix of $H'$, $\op\prec_{H'} \op'$ holds, and thus $\op'\not\prec_{H} \op$ for the linearization $H=\ell(H')$ of $H'$.
History $G$ is a prefix of $H$, and so $\op'\not\prec_{G} \op$.

By construction, $G$ contains all completed operations from $G'$, and so it suffices to show that if $G$ contains an operation, then that operation's invocation occurs also in $G'$.
For contradiction let $\op$ be any operation in $G$ such that $\inv{\op}$ does not occur in $G'$.
By construction, some operation $\op'$ must follow $\op$ in $G$,
and so $\op'$ completes in $G'$ (otherwise, $G$ would not be the shortest prefix of $H$ that contains all operations that complete in $G'$).
However, since $\op\prec_G \op'$ and $G$ is a prefix of $H$, we know that $\op\prec_{H} \op'$ and thus $\op'\not\prec_{H'} \op$.
Hence, in $H'$ $\rsp{\op'}$ occurs only after $\inv{\op}$.
But then, since $G'$ is a prefix of $H'$ that contains $\rsp{\op'}$, $\inv{\op}$ must occur in $G'$ as well---a contradiction.
\end{proof}

\begin{claim}\label{clm:prefix-lin-necessary-extends}
Let $H_c'=H_{\AA',\MM',\vc}, H_d'=H_{\AA',\MM',\vd}\in\HH'$ and $G'\in\HH^\ast$ be a common prefix of both $H_c'$ and $H_d'$.
Then $g(G',H_c')=g(G',H_d')$.
\end{claim}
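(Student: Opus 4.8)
The plan is to reduce the claim to a statement about a single common sequence. Write $C$ for the set of operations that complete in $G'$; since $G'$ is literally the same prefix of $H_c'$ and of $H_d'$, $C$ is the same set of operations in both cases — in particular the flip operations in $C$ are the same steps, returning the same values. By construction $g(G',H_c')$ is the shortest prefix of $\ell(H_c')=H_{\MM,\AA,\vc}$ that contains $C$, and $g(G',H_d')$ is the shortest prefix of $\ell(H_d')=H_{\MM,\AA,\vd}$ that contains $C$. So it suffices to exhibit a common prefix $P$ of these two linearizations that contains all of $C$: then each $g(G',\cdot)$ is forced to be the shortest prefix of $P$ containing $C$, and the two coincide.

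To build $P$, let $r$ be the number of flip operations that complete in $G'$. Their responses are identical in $H_c'$ and $H_d'$, so $\vc$ and $\vd$ agree on their first $r$ coordinates; since $\AA$ is strong for $\MM$, this yields $H_{\MM,\AA,\vc}[r+1]=H_{\MM,\AA,\vd}[r+1]$, and we take $P$ to be this common sequence. The remaining task — the step I expect to require the most care — is to show that $P$ contains $C$. If fewer than $r+1$ flips occur in $H_{\MM,\AA,\vc}$, then $P=H_{\MM,\AA,\vc}$ and there is nothing to do, so assume otherwise and fix an operation $\op\in C$; note $\op$ is not the $(r+1)$-th flip, since that flip does not complete in $G'$. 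I claim $\op$ precedes the $(r+1)$-th flip operation in the happens-before order $\prec_{H_c'}$: indeed $\rsp{\op}$ occurs in $G'$, and if the invocation of the $(r+1)$-th flip occurred before $\rsp{\op}$ in $H_c'$, then — flips being atomic and $G'$ being a prefix — both that flip's invocation and its matching response would lie in $G'$, contradicting the choice of $r$. Since $\ell(H_c')=H_{\MM,\AA,\vc}$ is a linearization of $H_c'$, its order $\prec_{H_{\MM,\AA,\vc}}$ extends $\prec_{H_c'}$, so $\op$ precedes the $(r+1)$-th flip's invocation in $H_{\MM,\AA,\vc}$; as the linearization is sequential, $\inv{\op}$ and $\rsp{\op}$ both lie in $P=H_{\MM,\AA,\vc}[r+1]$. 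Hence every operation of $C$ occurs in $P$.

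To conclude: $g(G',H_c')$ is the shortest prefix of $H_{\MM,\AA,\vc}$ containing $C$, and $P$ is itself such a prefix, so $g(G',H_c')$ is a prefix of $P$; consequently $g(G',H_c')$ is the shortest prefix of the sequence $P$ that contains $C$. The identical argument with $\vd$ shows $g(G',H_d')$ is also the shortest prefix of $P$ containing $C$. Since $P$ and $C$ are the same objects in both cases, $g(G',H_c')=g(G',H_d')$. The only delicate points are the two situations flagged above — ruling out that the $(r+1)$-th flip slips past $\op$ within $G'$, and the case where the linearization contains fewer than $r+1$ flips — and everything else follows directly from the definitions of $g$, of linearization, and of a strong adversary.
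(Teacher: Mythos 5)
Your proof is correct and follows essentially the same route as the paper's: both identify the first flip operation not completed in $G'$, use strongness of the adversary to make the two linearizations agree up to that flip, and observe that $g(G',\cdot)$ cannot extend past it. The only cosmetic differences are that you index the critical flip by the number $r$ of flips completing in $G'$ rather than by the length $k\geq r$ of the longest common prefix of $\vc$ and $\vd$, and that you argue directly that every operation of $C$ precedes that flip, instead of deriving a contradiction from the assumption that $g(G',H_c')$ contains it.
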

\begin{proof}
The claim is trivially true if $H_c' = H_d'$, so assume that $H_c' \neq H_d'$.
Suppose that the longest common prefix of $\vc$ and $\vd$ has length $k$.
Since $G'$ is a common prefix of $H_c'$ and $H_d'$, it cannot contain the response of the $(k+1)$-th flip operation. %, and thus it is a prefix of $H_c'[k+1]=H_d'[k+1]$.
Thus, $G'$ is a prefix of both $H_c'[k+1]$ and $H_d'[k+1]$, and furthermore $H_c'[k+1]=H_d'[k+1]$ since $\AA'$ is a strong adversary.
Let $G_c = g(G',H_c')$ and $G_d = g(G',H_d')$.
We will show that neither $G_c$ nor $G_d$ contains the $(k+1)$-th coin flip.
Suppose for contradiction that $G_c$ does.  (The proof for $G_d$ is analogous.)
Since the coin flip is not complete in $G'$, by construction of $G_c$ some operation $\op$ that is complete in $G'$ must follow the coin flip in $G_c$.
Since $G_c$ is a linearization of $G'$ by Claim~\ref{clm:prefix-lin-necessary-lin},
the invocation of the coin flip must precede the response of $\op$ in $G'$.
But that contradicts $G'$ being a prefix of $H_c'[k+1]$, which ends with the coin flip's invocation.
Now since $G_c$ does not contain the $(k+1)$-st coin flip, it is a prefix not only of $\ell(H_c')$
but also of $\ell(H_c'[k+1])$, and similarly $G_d$ is a prefix of $\ell(H_d'[k+1])$.
Since $H_c'[k+1]=H_d'[k+1]$ holds, as noted earlier, this implies that $G_c=G_d$.
\end{proof}

\begin{proof}[Proof of Theorem~\ref{thm:prefix-lin-necessary}]
For all histories $G'\in\HH^\ast$, we define $f(G')=g(G',H')$, where $H'$ is an arbitrary history in $\HH'$ such that $G'$ is a prefix of $H'$.
(By Claim~\ref{clm:prefix-lin-necessary-extends}, all such histories $H'$ yield the same $g(G',H')$.)
We show that $f$ is a linearization function for $\HH'$.
By Claim~\ref{clm:prefix-lin-necessary-lin}, $f$ satisfies property~(L), so it suffices to show that it also satisfies property~(P).
Let $F',G'\in\HH^\ast$, such that $F'$ is a  prefix of $G'$.
Choose an arbitrary history $H'\in\HH'$ such that $G'$ is a prefix of $H'$.
By construction and Claim~\ref{clm:prefix-lin-necessary-extends}, $f(G')$ and $f(F')$ are the shortest prefixes of $\ell(H')$ that contain all completed operations in $G'$ and $F'$, respectively.
Since the set of completed operations in $F'$ is a subset of the completed operations in $G'$, $f(F')$ is a prefix of $f(G')$,
completing the proof of Theorem \ref{thm:prefix-lin-necessary}.
\end{proof}

\subsection{Strong Linearizability is Sufficient for the Strong Adversary}
Under strong linearizability the strong adversary is prevented from using the outcome of the flip
to schedule future events in such a way that they influence the order of past operations in a linearization,
because, once a coin is flipped, the operations that precede the coin flip in the linearization are already determined.
This is made precise in the following theorem, the proof of which appears later in this subsection.

\begin{theorem}\label{thm:prefix-lin->equivalence}
  Let $\MM$ be an algorithm that uses only atomic objects,
  and let $\MM'$ be the algorithm obtained from $\MM$ by replacing some atomic objects with strongly linearizable implementations.
  For any adversary $\AA'$ that is strong for $\MM'$, there exists an adversary $\AA$ that is strong for $\MM$,
  such that $(\MM,\AA)$ and $(\MM',\AA')$ are equivalent.
\end{theorem}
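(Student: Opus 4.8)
The plan is to pick a strong linearization function for the histories that $\AA'$ produces on $\MM'$, use it to prescribe the order in which $\MM$'s (atomic) operations ought to be linearized, and then have $\AA$ schedule $\MM$ so as to realize that order; the whole difficulty is then to verify that this $\AA$ is \emph{strong}. By Observation~\ref{obs:adversary->interpreted=low-level} we may work with interpreted histories throughout, writing $H$ for $\Gamma(H)$. Since every base object used by $\MM'$ is strongly linearizable (those inherited from $\MM$ trivially, the replaced ones by hypothesis), locality and composability of strong linearizability yield a single strong linearization function $f$ on $\close{\HH'}$, where $\HH'=\{H_{\MM',\AA',\vc}\mid\vc\in\Omega^\infty\}$. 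Fix a coin vector $\vc$. Applying $f$ to the successive finite prefixes of $H_{\MM',\AA',\vc}$ gives, by property~(P), a chain of sequential histories under the prefix order; let $L(\vc)$ be their limit. Property~(L) together with the nesting makes $L(\vc)$ a valid sequential history that contains every completed operation of $H_{\MM',\AA',\vc}$ and extends $\prec_{H_{\MM',\AA',\vc}}$; that is, $L(\vc)$ is a linearization of $H_{\MM',\AA',\vc}$.

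Suppose for the moment that $\AA$ is defined so that $H_{\MM,\AA,\vc}$ equals some rearrangement $\ell(\vc)$ of $L(\vc)$ that is still a linearization of $H_{\MM',\AA',\vc}$ and still observes $\vc$. Then $(\MM,\AA)$ and $(\MM',\AA')$ are equivalent: since $\MM$ uses only atomic objects, $\Gamma(H_{\MM,\AA,\vc})=H_{\MM,\AA,\vc}=\ell(\vc)$, which is a valid sequential history and hence a linearization of itself, while $\ell(\vc)$ linearizes $H_{\MM',\AA',\vc}$, so $\ell(\vc)$ is the required common linearization. That such an $\AA$ exists --- i.e.\ that driving $\MM$ by the process-id sequence read off $\ell(\vc)$ reproduces $\ell(\vc)$ --- is a routine induction on step count, using that all object types and all process programs are deterministic and that the $i$-th flip in $\ell(\vc)$ returns $c_i$. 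The real task is to choose $\ell(\vc)$ so that $\AA$ is strong, i.e.\ so that $H_{\MM,\AA,\vc}[k+1]$ depends only on $c_1,\dots,c_k$.

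The naive choice $\ell(\vc)=L(\vc)$ fails: a strong linearization function is free to place operations between the ``committed'' part of the linearization and the next flip, and where it places them may legitimately depend on that flip's value, so $L(\vc)[k+1]$ need not depend only on $c_1,\dots,c_k$. The fix is to normalize the positions of the flips. Fix $k$ and set $H^{*}:=H_{\MM',\AA',\vc}[k+1]$; since $\AA'$ is strong, $H^{*}$ depends only on $c_1,\dots,c_k$, and it ends with the invocation of the $(k{+}1)$-st flip $fl$. Let $\Lambda:=f(H^{*})$, which by~(P) is a prefix of $L(\vc)$ and which likewise depends only on $c_1,\dots,c_k$. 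The key sub-lemma is that $fl\notin\Lambda$: were $\Lambda$ to contain $fl$ it would have to assign it a response value $v'$, but (assuming $\Omega$ has at least two elements, the other case being trivial) for a coin vector agreeing with $\vc$ in its first $k$ coordinates and having a different $(k{+}1)$-st coordinate $v\neq v'$, the history $H^{*}$ extended by the response of $fl$ with value $v$ is a longer prefix in $\close{\HH'}$ whose $f$-image must, by~(L), contain $fl$ with value $v$ and, by~(P), extend $\Lambda$ --- a contradiction. Hence in $L(\vc)$ the flip $fl$ lies strictly after the prefix $\Lambda$. Define $\ell(\vc)$ by moving, for every $k$, the operation $fl$ to the position immediately following $\Lambda=f(H_{\MM',\AA',\vc}[k+1])$. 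Since $fl$ is an operation on a coin object, every operation it is moved past is neither a $\prec$-predecessor of $fl$ (those are precisely the operations completed within $H^{*}$, all of which lie in $\Lambda$) nor a $\prec$-successor (we are moving $fl$ earlier), and the sequential specification of a coin object accepts $fl$'s value wherever it is placed; so the move preserves validity and the $\prec$-extension, leaves each process's internal order undisturbed, and keeps $\ell(\vc)$ a linearization of $H_{\MM',\AA',\vc}$. Moreover the moves of $fl_1,\dots,fl_k$ all occur inside $\Lambda$, so $\ell(\vc)[k+1]$ is just $\Lambda$ (internally reordered in a fixed way) followed by the invocation of $fl$ --- a function of $c_1,\dots,c_k$ alone. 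Thus the schedule realizing $\ell(\vc)$ is a strong adversary for $\MM$, which finishes the proof.

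I expect the step just described --- proving $\AA$ strong --- to be the crux: one must notice that the obvious adversary is \emph{not} strong and that the flips have to be canonically repositioned, and then carefully check that this repositioning keeps $\ell(\vc)$ a valid linearization and really does make the pre-flip prefix depend only on past coin flips. A minor supporting point is the appeal to locality and composability of strong linearizability needed to obtain the single function $f$ when more than one object is replaced.
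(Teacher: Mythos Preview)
Your proposal is correct and follows the same overall strategy as the paper: obtain a single strong linearization function via locality (Theorem~\ref{thm:locality}), observe that its raw output need not yield a strong adversary, and repair this by repositioning the flips. The paper factors the repositioning out as a separate Lemma~\ref{lem:prefix-linearizable-normal-form}, which converts any strong linearization function into a \emph{normalized} one satisfying property~(N) (each flip sits immediately after a $\prec_H$-predecessor), and then argues from~(N) that the induced adversary is strong. You instead move each flip $fl_{k+1}$ to the position immediately following the prefix $\Lambda=f\bigl(H_{\MM',\AA',\vc}[k+1]\bigr)$ and argue directly that this prefix --- and hence everything preceding the flip in $\ell(\vc)$ --- depends only on $c_1,\dots,c_k$. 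Your placement rule is less aggressive than the paper's, since $\Lambda$ may contain pending operations that are not $\prec$-predecessors of $fl_{k+1}$; but it suffices precisely because the target position is tied to $f$ rather than to $\prec_H$, and your sub-lemma that $fl\notin\Lambda$ (via the contradiction with property~(P) under a different $(k{+}1)$-st coin value) is exactly what makes this work. The gain is that you avoid proving the full normalization lemma; the cost is that you do not produce a reusable statement. One minor point: composability is not actually needed here --- locality alone assembles the single strong linearization function, since the theorem replaces atomic objects directly by strongly linearizable implementations from base objects.
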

The proof is postponed to a later section.

\subsection{Normalized Strong Linearizations.}
Let $\HH_{\MM, \AA}$ denote the set of all interpreted histories
that are generated by an algorithm $\MM$ and the adversary $\AA$ over all coin flip vectors.
That is
$\HH_{\MM, \AA} = \left\{\Gamma(H_{\MM,\AA,\vc})\,|\,\vc\in\Omega^\infty\right\}$.
A natural way to try to prove Theorem~\ref{thm:prefix-lin->equivalence} would be
to apply the strong linearization function to each history in the set $\HH_{\MM', \AA'}$
to obtain a set $\HH$ of linearizations of $\HH_{\MM',\AA'}$.
Then it would suffice to prove that there is a strong adversary $\AA$ that can generate the histories in $\HH$.

\newcommand{\fl}{\mathit{cf}}
Unfortunately, this is not always possible.
For example, consider an algorithm, $\MM$, where
process $p$ (respectively, $q$) executes the single operation $op_p$ (resp.\ $op_q$)
and process $r$ first executes $op_r$ and then executes a flip operation, $\fl$.
Suppose a strong adversary schedules an implementation of $\MM$
so that, for coin flip $i \in \{0,1\}$, it produces the history:
\begin{eqnarray*}
H'_i &=& \inv{op_p}, \inv{op_q}, \inv{op_r}, \rsp{op_r}, \inv{\fl_r}, \rsp{\fl_r, i}, \rsp{op_p}, \rsp{op_q}
\end{eqnarray*}
Let $G'$ be the common prefix of $H'_0$ and $H'_1$ that ends with \inv{\fl_r}.
Define the function $f$ on $\{G', H'_0, H'_1\}$ by:
\begin{eqnarray*}
f(G') &=& \inv{op_r}, \rsp{op_r}\\
f(H'_0) &=&  \inv{op_r}, \rsp{op_r}, \inv{op_p}, \rsp{op_p}, \inv{op_q}, \rsp{op_q}, \inv{\fl_r}, \rsp{\fl_r, 0}\\
f(H'_1) &=&  \inv{op_r}, \rsp{op_r}, \inv{\fl_r}, \rsp{\fl_r, 1}, \inv{op_q}, \rsp{op_q}, \inv{op_p}, \rsp{op_p}
\end{eqnarray*}
Then, according to Definition~\ref{def:prefix-linearizability},
$f$ can be extended to a strong linearization function for $\{H'_0, H'_1 \}$,
but the histories $f(H'_0)$ and $f(H'_1)$ cannot both be produced by the same strong adversary.
This difficulty is remedied by proving that whenever such a problematic strong linearization function
for a set of histories $\HH$ occurs,
there is another strong linearization for $\HH$ that avoids this problem.
The idea is to move coin flips in $f(H)$ to the earliest point possible,
without violating the happens-before order of $H$.
The following technical lemma makes this precise.

\begin{definition}\label{def:norm-pref-lin}
Let $\HH$ be a strongly linearizable set of histories.
A \emph{normalized strong linearization function for $\HH$} is any
strong linearization function $f^\ast$ for $\HH$
such that:
  \vspace{-3pt}
  \begin{enumerate}
   \item[(N)] for any history $H\in \close{\HH}$, if in $f^\ast(H)$ some flip operation $\fl$ immediately follows
some other operation $\op$, then $\op \prec_H \fl$.
  \end{enumerate}
\end{definition}

\begin{lemma}\label{lem:prefix-linearizable-normal-form}
For any strongly linearizable set $\HH$ of histories there exists a normalized strong linearization function.
\end{lemma}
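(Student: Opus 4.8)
The plan is to start from an arbitrary strong linearization function $f$ for $\HH$ (which exists by hypothesis) and transform it into a normalized one by repeatedly ``pulling flip operations to the left'' in each linearized history. Concretely, given $H \in \close{\HH}$, I would define $f^\ast(H)$ to be the sequential history obtained from $f(H)$ by moving each flip operation $\fl$ as far toward the front as possible, subject only to two constraints: (i) it must stay after every operation $\op$ with $\op \prec_H \fl$, and (ii) the relative order of flips among themselves is preserved (flips are totally ordered in $H$ since they are all performed — recall flips are atomic and the $i$-th flip is well defined). Since all non-flip operations keep their relative order and we only move flips earlier past non-flips they do not happen-after in $H$, the resulting order still extends $\prec_H$; and since the type of a coin object is such that a flip's response does not depend on anything (it returns the prescribed value from $\vc$), moving flips earlier cannot invalidate the sequential history. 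Thus $f^\ast(H)$ is still a linearization of $\Gamma(H)$, giving property (L). Property (N) holds essentially by construction: if $\fl$ immediately follows $\op$ in $f^\ast(H)$ and $\op \not\prec_H \fl$, then $\fl$ could have been moved one step further left, contradicting maximality.

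The main work is verifying property (P), that $f^\ast$ is prefix-preserving. Here I would use that $f$ itself is prefix-preserving: if $G$ is a prefix of $H$ (with $G,H \in \close{\HH}$), then $f(G)$ is a prefix of $f(H)$, so $f(G)$ consists exactly of some subset of the operations of $f(H)$ appearing in the same relative order, and moreover — crucially — $G$ and $H$ induce the same happens-before constraints on the operations common to both (since $\prec_G$ is the restriction of $\prec_H$ to operations of $G$). I would argue that the normalization procedure ``commutes with taking prefixes'': the position to which a given flip $\fl$ (present in $f(G)$, hence in $f(H)$) gets moved is determined only by the set of operations that must precede it, and in $f(G)$ this set is exactly the intersection with $f(G)$'s operations of the corresponding set in $f(H)$. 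The cleanest way to make this rigorous is to give an explicit, canonical definition of the normalized order rather than an iterative one: for each operation $x$ in $f(H)$, assign it the ``rank'' which is its position among non-flip operations if $x$ is not a flip, and for a flip $\fl$, place it immediately after the last non-flip operation $\op$ in $f(H)$ with $\op \prec_H \fl$ (or at the very front if there is none), breaking ties among flips landing in the same slot by their order in $H$. Then $f^\ast(H)$ is the total order induced by these ranks. With this closed-form definition, prefix-preservation follows by checking that the rank of each common operation is the same in $f^\ast(G)$ and $f^\ast(H)$, which reduces to: the last non-flip operation preceding $\fl$ in happens-before order, among those in $G$, is a prefix-consistent notion. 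The one subtlety to watch is whether a non-flip operation $\op'$ that lies in $f(H)$ but not in $f(G)$ could ``block'' a flip in $f(H)$ that should be unblocked in $f(G)$ — but this is fine, because $f(G)$ being a prefix of $f(H)$ means everything in $f(G)$ precedes everything in $f(H) \setminus f(G)$ within $f(H)$, so removing the tail only removes operations that were already to the right of both $\fl$ and its blocking operation.

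I expect the prefix-preservation argument to be the main obstacle, precisely because the naive ``move flips left'' description is iterative and it is not a priori obvious that doing this independently to $f(G)$ and to $f(H)$ yields consistent results. Committing early to the closed-form rank definition above, and then proving it coincides with the intuitive ``leftmost legal position'' description, is the way I would sidestep this; after that, (L), (N), and (P) each become short checks. A secondary point to be careful about: one must confirm that flips are indeed linearly ordered in every $H \in \close{\HH}$ and that this order is preserved under taking prefixes — this follows from the fact that flip operations are atomic (their invocation is immediately followed by their response) and that $H|p$ is a prefix of its counterpart, so the $i$-th flip in $G$ is the $i$-th flip in $H$. Finally, I would double-check that $f^\ast(H)$ remains a \emph{valid} sequential history: validity is a per-object condition, and for every object other than coins the projection is unchanged (we never reorder non-flip operations), while for a coin object the sequential specification accepts any sequence of flips with the prescribed return values regardless of position, so validity is preserved.
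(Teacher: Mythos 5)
Your construction of $f^\ast$ (keep the non-flip operations of $f(H)$ in place, move each flip to its leftmost position consistent with $\prec_H$) matches the core of the paper's normalization, and your verification of (L) and (N) is fine; but the prefix-preservation argument has a genuine gap, and in fact property (P) can fail for the $f^\ast$ you define. The problem is not the one you flag (non-flip operations of $f(H)\setminus f(G)$ ``blocking'' a flip), but the opposite: a flip that lies in $f(H)\setminus f(G)$ can be moved \emph{into the middle of the $f(G)$ region}, because $f(G)$ is allowed to contain operations that were invoked but not completed in $G$ (with responses added), and such operations need not happen-before the flip in $H$. Concretely, let $G$ consist only of $\inv{\op_p}$ and let $H=G\circ\inv{\fl}\circ\rsp{\fl}$ for a flip $\fl$ by another process, and let $f$ be a strong linearization function with $f(G)=\inv{\op_p}\,\rsp{\op_p}$ and (by (P)) $f(H)=\inv{\op_p}\,\rsp{\op_p}\,\inv{\fl}\,\rsp{\fl}$. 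Since $\rsp{\op_p}$ does not occur in $H$, we have $\op_p\not\prec_H\fl$, so your rule places $\fl$ at the very front of $f^\ast(H)$, while $f^\ast(G)=f(G)$ still begins with $\inv{\op_p}$; hence $f^\ast(G)$ is not a prefix of $f^\ast(H)$. Your closed-form ``rank'' check only shows that operations common to $f(G)$ and $f(H)$ keep consistent positions; it does not rule out new flips landing before the end of the $f^\ast(G)$ block, which is exactly what happens here.

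The paper avoids this with an extra preprocessing step that your proposal is missing: before reinserting the flips, it repeatedly deletes from the \emph{end} of $f(H)$ every operation whose response does not occur in $H$, obtaining $f'(H)$ whose last operation is guaranteed to complete in $H$ (and it checks that this truncation preserves (L) and (P)). That guarantee is what makes the reinsertion prefix-preserving: if a flip $\fl$ does not appear in $G$, then the last operation $\op'$ of $f'(G)$ completes in $G$, hence $\op'\prec_H\fl$, so $\fl$ must be inserted after the entire $f^\ast(G)$ block rather than inside it. (Note that incomplete operations sitting in the \emph{interior} of $f(G)$ are harmless for this argument, which is why stripping only a suffix suffices.) To repair your proof you would need to add this truncation step and rerun your (P) analysis on the truncated linearizations, essentially reproducing the paper's case analysis.
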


\begin{proof}
 Let $f$ be a strong linearization function for $\HH$.
  Let $H\in\close{\HH}$ be an arbitrary history and let $cf_1,cf_2,\dots$ be the flip operations in $H$, occurring in that order.
  We obtain $f^\ast(H)$ from $f(H)$ as follows:
  \begin{enumerate}
   \item Let $H'=f(H)$.
   As long as the response of the last operation $op$ in $H'$ does not appear in $H$, remove $op$ from $H'$.
   \item Remove all flip operations from $H'$.
   \item For $i=1,2,\dots$, insert $cf_i$ at the earliest possible position in $H'$, where it doesn't violate $\prec_H$.
  \end{enumerate}

  Let $f'(H)$ be the history obtained from $f(H)$ after execution of Step~1.
  It is immediate from Step~1 that $f'(H)$ satisfies the following:
  \begin{equation}\label{eq:normalization_additional_property}
    \text{The response of the last operation in $f'(H)$ occurs in $H$.}\tag{$\ast$}
  \end{equation}
  Moreover, $f'$ satisfies property (L):
  If $f(H)$ is a linearization of $\Gamma(H)$ that ends with an operation $op$ which does not complete in $H$, then we can remove $op$ without destroying validity or changing the order of any of the remaining pairs of operations in $f(H)$.
  Similarly, property (P) is maintained:
  Since removals from histories occur only at the end, such removals can only violate property (P) for a history $H$ with prefix $G$, if an operation $op$ is removed from $f(H)$ and $op$ occurs also in $f(G)$ but is not removed from $f(G)$.
  If that were to happen, then $f(G)=f(H)$ since $f(G)$ is a prefix of $f(H)$ and $op$ is the last operation of both histories.
  But if $op$ is removed from the end of $f(H)$, then $op$ does not complete during $H$.
  And then $op$ does not complete in $G$, either, and so it is removed from $f(G)$, too.

  Now suppose that $f'$ satisfies (L), (P), and (\ref{eq:normalization_additional_property}) after Step~1.
  We show that after Steps~2 and 3, the resulting mapping, which we call $f^\ast$, satisfies (L), (P), and (N).

  First we show that $f^\ast$ satisfies (N):
  Suppose in Step~3, the flip operation $cf_i$ is inserted immediately after some other operation $op$.
  Then $op\prec_H cf_i$, or else $cf_i$ could have been inserted before $op$.
  For all flip operations $cf_j$ that are inserted after $cf_i$, $cf_i\prec_H cf_j$ holds, and so they are inserted into $H'$ behind $cf_i$.
  Thus, at the end of Step~3, $cf_i$ immediately follows $op$.

  Next we show that $f^\ast$ satisfies (L), i.e., that $f^\ast(H)$ is a linearization of $\Gamma(H)$.
  Since $f'(H)$ and $f(H)$ differ only in the position of the (atomic) flips, and the relative order of flips is preserved, $f(H)$ is valid.
  Now consider two operations $op,op'$ that both appear in $f^\ast(H)$, and where $op\prec_H op'$.
  Then $op$ and $op'$ both occur in $f'(H)$, and since $f'$ satisfies (L), $op\prec_{f'(H)} op'$.
  If neither of the two operations is a flip, then the order of them is preserved when we construct $f^\ast(H)$.
  If one of the operations is a flip, then the insertion rule guarantees that the flip operation is inserted in such a way
  that it does not violate $\prec_H$.
  Finally, only operations that don't complete in $H$ are removed from $f(H)$ in Step~1, so $f'(H)$ and thus also $f^\ast(H)$ contain all operations that complete in $H$.

  It remains to show that $f^\ast$ satisfies (P).
  Let $G$ be an arbitrary prefix of $H$ and let $O$ be the set of non-flip operations in $H$.
  Further, let $O_i=O\cup\{cf_1,\dots,cf_i\}$, for $0\leq i\leq z$, where $z$ is the number of flip operations in $H$.
  We show by induction on $i$, $0\leq i\leq z$, that
  \begin{displaymath}
    f^\ast(G)|{O_i}\text{ is a prefix of }f^\ast(H)|O_i.
  \end{displaymath}
  First consider the base case, $i=0$.
  Note that $O_0=O$.
  By construction, $f^\ast(G)$ contains exactly the same set of operations as $f'(G)$, and $f^\ast(H)$ contains the same set of operations as $f'(H)$.
  Since the order of non-flip operations does not change during Steps~2 and~3, $f^\ast(G)|{O}=f'(G)|O$, and $f^\ast(H)|{O}=f'(H)|O$.
  Since $f'$ satisfies (P), it follows that $f^\ast(G)|O$ is a prefix of $f^\ast(H)|{O}$.

  Now suppose $i\geq 1$ and that $f^\ast(G)|{O_{i-1}}$ is a prefix of $f^\ast(H)|O_{i-1}$.
  Let $G^\ast:=f^\ast(G)|O_i$ and $H^\ast=f^\ast(H)|O_i$.
  Below, we prove the following statement:
 \begin{equation}\label{eq:Proof_of_N}
  \text{for any operation $op$ in $G^\ast$:}\ \
    cf_i\prec_{H^\ast} op
    \ \Leftrightarrow\ %\quad\text{if and only if}\quad
    cf_i\prec_{G^\ast} op.
  \end{equation}
  Thus, if $cf_i$ appears in both $G^\ast$ and $H^\ast$, it appears in the same position.
  On the other hand, if $cf_i$ does not appear in $G^\ast$, then $cf_i\not\prec_{G^\ast} op$ for all operation $op$ that appear in $G^\ast$; hence, in $H^\ast$ $cf_i$ is not inserted before some operation $op$ that appears in $G^\ast$.
  Therefore, from (\ref{eq:Proof_of_N}) we can conclude that $G^\ast$ is a prefix of $H^\ast$, and so it suffices to prove (\ref{eq:Proof_of_N}).

  First suppose $cf_i\prec_{G^\ast} op$, and for the purpose of a contradiction assume
  $cf_i\not\prec_{H^\ast} op$.
  Since all operations in $G^\ast$ occur also in $H^\ast$, $op\prec_{H^\ast} cf_i$.
  Let $op'$ be the operation that immediately precedes $cf_i$ in $H^\ast$.
  Then $op\preceq_{H^\ast} op'$.\footnote{We write $a\preceq b$ to denote ``either $a\prec b$ or $a=b$''.}
  By the semantics of Step~3, $op'\prec_H cf_i$, or else $cf_i$ would have been inserted in front of $op'$.
  Since $cf_i$ completes during $G$, and $G$ is a prefix of $H$, $op'$ completes during $G$ as well.
  Hence, $op'\prec_G cf_i$ and since $G^\ast$ is a linearization of $G|O_i$, $op'\prec_{G^\ast} cf_i$.
  By induction hypothesis, from $op\preceq_{H^\ast} op'$, we have $op\preceq_{G^\ast} op'$ and thus by transitivity $op\prec_{G^\ast} cf_i$---a contradiction.

  Now suppose $cf_i\prec_{H^\ast} op$.
  For the purpose of a contradiction assume that $cf_i\not\prec_{G^\ast} op$.
  Then either $cf_i$ does not occur in $G^\ast$ at all or $op \prec_{G^\ast} cf_i$.
  First consider the latter case.
  Let $op'$ be the last operation in $G^\ast$ that precedes $cf_i$, so $op\preceq_{G^\ast} op'$.
  Since $cf_i$ was not inserted in front of $op'$, by the semantics of Step~3 we have $op'\prec_H cf_i$.
  Since $H^\ast$ is a linearization of $H|O_i$, $op'\prec_{H^\ast} cf_i$.
  By the induction hypothesis, $op\preceq_{H^\ast} op'$, and so by transitivity $op\prec_{H^\ast} cf_i$---a contradiction.

  Finally, assume that $cf_i$ is not in $G^\ast$.
  Let $op'$ be the last operation in $G^\ast$.
  If $op'$ is a flip, then it appears atomic in $G$, and thus completes in $G$.
  Otherwise, $op'$ is also the last operation in $f'(G)$, and thus by (\ref{eq:normalization_additional_property}), $op'$ completes in $G$.
  On the other hand, since $cf_i$ is a atomic but does not appear in $G^\ast$, it does not appear in $G$, either.
  Since $G$ is a prefix of $H$, it follows that $op'\prec_H cf_i$.
  But then, since $H^\ast$ is a linearization of $H|O_i$, $op'\prec_{H^\ast} cf_i$.
  Since $op$ is in $G^\ast$, but $op'$ is the last operation in $G^\ast$, $op\preceq_{G^\ast} op'$.
  By the induction hypothesis, $G^\ast|\{op,op'\}$ is a prefix of $H^\ast|\{op,op'\}$, and so $op\preceq_{H^\ast} op'\prec_{H^\ast} cf_i$---a contradiction.
\end{proof}

\subsection{Strong Linearizability is a Local Property.}
We could now proceed with the proof of Theorem~\ref{thm:prefix-lin->equivalence}.
However, the proof is simplified by exploiting a locality property for strong linearizability.
Herlihy and Wing proved the following locality property for linearizability \cite{her:lin}:
A history $H$
over multiple shared objects is linearizable if, for each such object $\obj$ the history $H|\obj$ is linearizable.
Thus, to establish that any result that can arise from an algorithm using implemented objects,
is a result that could be produced by the same algorithm using atomic objects,
it suffices to show separately for each shared object, \obj,
that the implementation of any algorithm over \obj\ is linearizable.
The analogous property for strong linearizability is
given in the following lemma.
For a shared implemented object $\obj$ and a history $H$, we denote by $H\|\obj$
the history $H$ projected to all steps that any process executes while performing an operation on $\obj$, i.e., all steps executed by the process during an interval that starts with an invocation on $\obj$ and ends with the matching response on $\obj$.

\begin{theorem}\label{thm:locality}
  A set of histories $\HH$ over implemented objects $\obj_1, \obj_2, \ldots, \obj_n$
  is strongly linearizable if for each object $\obj_i$, $1 \leq i \leq n$,
  the set $\HH_i = \left\{(H\|\obj_i)\,\mid\,H \in \HH\right\}$ is strongly linearizable.\todo{Definition of ``history over implemented objects\dots'' is unclear. Here, we want histories that contain all invocations and responses on the implemented objects as well as the base objects used in the implementation.}
\end{theorem}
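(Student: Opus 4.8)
The plan is to build a strong linearization function $f$ for $\HH$ out of the given strong linearization functions $f_1,\dots,f_n$ for $\HH_1,\dots,\HH_n$. First I would dispatch the elementary bookkeeping about projections: for $H\in\close{\HH}$ and any prefix $G$ of $H$, the projection $H\|\obj_i$ lies in $\close{\HH_i}$ and $G\|\obj_i$ is a prefix of $H\|\obj_i$; hence by property~(P) of $f_i$ the sequential history $f_i(G\|\obj_i)$ is a prefix of $f_i(H\|\obj_i)$. Write $<_i^H$ for the total order that $f_i(H\|\obj_i)$ imposes on the (high-level) operations on $\obj_i$, and let $O_H$ be the set of all operations occurring in some $f_i(H\|\obj_i)$; these will be exactly the operations of $f(H)$. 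Note that $\Gamma(H)|\obj_i$ and $\Gamma(H\|\obj_i)$ coincide, so $f_i(H\|\obj_i)$ is a valid linearization of $\Gamma(H)|\obj_i$.

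The core idea is that, rather than topologically sorting the union $\prec_{\Gamma(H)}\cup\bigcup_i{<_i^H}$ (the classical Herlihy--Wing recipe, which would be acyclic but offers no obviously prefix-preserving choice of linear extension), I order $O_H$ by \emph{commit time}. For $\op\in O_H$ on object $\obj_i$, let $t(\op)$ be the length of the shortest prefix $H_t$ of $H$ with $\op\in O_{H_t}$, and let $r(\op)$ be the position of $\op$ in $f_i(H\|\obj_i)$; both are insensitive to enlarging $H$, since $O$ only grows along prefixes and $f_i$ is prefix-preserving. Define $\prec^{*}$ to be the lexicographic order on the pairs $(t(\op),r(\op))$, and let $f(H)$ list $O_H$ in $\prec^{*}$-order. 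The structural fact that makes this work: the step $s_t$ that turns $H_{t-1}$ into $H_t$ lies inside an operation on a single object $\obj_{i(t)}$, so $O_{H_t}\setminus O_{H_{t-1}}$ consists only of operations on $\obj_{i(t)}$; hence any two operations with equal commit time are on the same object and are comparable under that object's order, making $(t,r)$ injective on $O_H$ and $\prec^{*}$ a genuine total order.

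With this definition I would verify the two requirements. For property~(L), i.e.\ that $f(H)$ is a linearization of $\Gamma(H)$: (i) \emph{validity} — for each $\obj_i$, $f(H)|\obj_i$ lists the operations of $f_i(H\|\obj_i)$ in the order $<_i^H$ (because among operations on a common object $\prec^{*}$ agrees with that object's order), and $f_i(H\|\obj_i)$ is a valid linearization of $\Gamma(H)|\obj_i$, so by Herlihy--Wing locality of linearizability $f(H)$ is valid; (ii) \emph{completeness} — a completed operation of $\Gamma(H)$ on $\obj_i$ lies in $f_i(H\|\obj_i)\subseteq O_H$; (iii) \emph{order} — here I would prove the monotonicity lemma $\op\prec_{\Gamma(H)}\op'\Rightarrow t(\op)\le t(\op')$: at step $t(\op')$ the invocation of $\op'$ is present (a linearization contains no operation without its invocation), hence so is $\rsp{\op}$ (as $H_{t(\op')}$ is a prefix of $H$ containing $\inv{\op'}$), so $\op$ is complete in $H_{t(\op')}\|\obj_i$ and thus in $O_{H_{t(\op')}}$; if commit times are equal the observation above forces $\op,\op'$ onto the same object, where $<_i^H$ (hence $\prec^{*}$) already orders them in agreement with $\prec_{\Gamma(H)}$. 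Together with the trivial monotonicity of $t$ along each $<_i^H$, this shows $\prec^{*}$ extends $\prec_{\Gamma(H)}$.

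Property~(P) then falls out of the same monotonicity lemma: if $G$ is a prefix of $H$, $\op'\in O_G$, $\op\in O_H$, and $\op\prec^{*}\op'$, then $t(\op)\le t(\op')\le|G|$, so $\op\in O_{H_{|G|}}=O_G$; thus $O_G$ is downward closed under $\prec^{*}$ in $O_H$, and since the $(t,r)$-values are computed identically in $G$ and in $H$, $f(G)=f(H)|O_G$ is a prefix of $f(H)$. The step I expect to be the main obstacle is precisely the choice of linear extension so that (P) holds \emph{globally} across all of $\close{\HH}$: the commit-time device, together with the fact that simultaneous commits occur within a single object, is exactly what lets the per-object prefix-preservation of the $f_i$ compose into global prefix-preservation; the remainder is careful bookkeeping about projections and prefixes.
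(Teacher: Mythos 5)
Your construction is correct and is essentially the paper's own: ordering operations lexicographically by (commit time, rank within $f_i$) reproduces exactly the sequence the paper builds inductively by appending, after each step of $H$, the block of operations newly added to $f_j$ of the single object $\obj_j$ to which that step belongs, and your monotonicity lemma ($\op\prec_{\Gamma(H)}\op'\Rightarrow t(\op)\le t(\op')$) does the work of the paper's inductive verification of its properties (a)--(d). The one point to tighten is the case of infinite $H$, where an operation of $f_i(H\|\obj_i)$ could appear in no finite-prefix linearization and hence have no finite commit time; as in the paper's limit construction, such operations must simply be omitted from $f(H)$, which is harmless because every operation that completes in $H$ commits at some finite step.
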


  \begin{remark}\todo{PLEASE LOOK AT THIS REMARK. This is quite odd.}
     Typically one would apply the theorem for objects $O_1,\dots,O_n$ that are implemented from distinct sets $\BB_1,\dots,\BB_n$ of base objects.
     This is not required for the correctness of the lemma, though.
     However, suppose that $n=2$ and the implementations of $O_1$ and $O_2$ share the same base object $B$.
     It is easy to construct a history $H$ such that $(H\|O_1)|B$ and $(H\|O_2)|B$ are valid, but $H|B$ is not.
     Or, it might be the case that $H|B$ is valid, but $(H\|O_1)|B$ is not.
     However, even if $\HH$ contains such a history, the lemma is correct, because a strong linearization function $f$ maps $H$ to a history $f(H)$ that does not contain any steps on $B$.
  \end{remark}

\begin{proof}[Proof of Theorem~\ref{thm:locality}]
Let $H^{(k)}$ denote the prefix of history $H$ that has length $\min\{k,|H|\}$, and
let $\HH^{(k)}$ be the subset of all histories in $\close{\HH}$ that have length at most $k$.
First, we use the strong linearization functions $f_i$ for each object $\obj_i$ to
inductively construct functions $f^{(k)}$,
each of which maps each history $H \in \close{\HH}$ to a sequential history, and satisfies:
\begin{enumerate}
 \item[(a)] for any object $\obj_i$, $f^{(k)}(H)|O_i = f_i(H^{(k)}\|O_i)$; and
 \item[(b)] $f^{(k)}$ is a strong linearization function for $\HH^{(k)}$; and
 \item[(c)] $f^{(k)}(H)=f^{(k)}(H^{(k)})$; and
 \item[(d)] if $k\geq 1$, then $f^{(k-1)}(H)$ is a prefix of $f^{(k)}(H)$.
\end{enumerate}

For the basis, define $f^{(0)}(H)=\varepsilon$, where $\varepsilon$ denotes the empty history.
Properties (a) through (d) clearly hold for $f^{(0)}$.
Now consider $k\geq 1$ and suppose that $f^{(k-1)}$ satisfies (a)-(d).
Construct $f^{(k)}$ as follows.
For any history $H$, if $|H| < k$ then define $f^{(k)}(H) = f^{(k-1)}(H)$.
Otherwise let $\beta$ be the $k$'th step of $H$, and suppose that $\beta$ is a step in $H\|\obj_j$.
Since $f_j$ is a strong linearization function for $\HH_j$,
$f_j\paren{H^{(k-1)}\|O_j}$ is a prefix of $f_j\paren{H^{(k)}\|O_j}$.
\todo{Below this, I replaced several $|$ with $\|$ without indicating it.}
Consequently, there is a sub-history $\lambda$ (possibly empty) satisfying
\begin{equation}\label{eq:def_lambda}
  f_j\paren{H^{(k)}\|O_j}=f_j\paren{H^{(k-1)}\|O_j}\circ\lambda
\end{equation}
For this case (i.e., $|H| \geq k$) define
\begin{equation}\label{eq:loclity-i.s.}
  f^{(k)}(H)=f^{(k-1)}(H)\circ\lambda
\end{equation}

For all $k \in \IIN \cup \{0\}$,
property (c) is satisfied because $f^{k}(H)$ is uniquely determined by the prefix of length $\min\{k,|H|\}$ of $H$.
Property (d) follows immediately from (\ref{eq:loclity-i.s.}).
We now show that properties (a) and (b) are preserved by the inductive step.

\paragraph{Property~(a):}
First consider an object $O_i$, where $i\neq j$.
Since $\beta$ and all events in $\lambda$ belong \todo{``belong to operations on $O_j$'' means more than just the invocation/response of an operation on $O_j$. This should be made clear somewhere?}
to operations on $O_j$, we get:
\noindent
\begin{displaymath}\begin{split}
 f^{(k)}(H)|O_i
 &\stackrel{(\ref{eq:loclity-i.s.})}{=}
 \Bparen{f^{(k-1)}(H)\circ\lambda}\Big|O_i
 =
 f^{(k-1)}(H)|O_i
 \stackrel{I.H.}{=}
 f_i\paren{H^{(k-1)}\|O_i}
 \\ &=
 f_i\bbparen{\paren{H^{(k-1)}\circ\beta}\|O_i}
 =
 f_i\paren{H^{(k)}|O_i}.
\end{split}\end{displaymath}
Now consider the object $O_j$.
By definition of $\lambda$, the induction hypothesis, and construction of $f^{(k)}$:
\begin{displaymath}\begin{split}
  f_j\paren{H^{(k)}\|O_j}
  \stackrel{(\ref{eq:def_lambda})}{=}
  f_j\paren{H^{(k-1)}\|O_j}\circ\lambda
  \stackrel{I.H.}{=}
  \paren{f^{(k-1)}\paren{H}|O_j}\circ\lambda
  =
  \paren{f^{(k-1)}\paren{H}\circ\lambda}|O_j
  =
  f^{(k)}(H)|O_j.
\end{split}\end{displaymath}

\paragraph{Property~(b):}
We first show that the projection of $f^{(k)}$ on $\HH^{(k)}$ satisfies property (P).
Let $H$ be an arbitrary history of length at most $k$, and let $G$ be a prefix of $H$.
If $|G|=k$, then $G=H$ so trivially $f^{(k)}(G)$ is a prefix of $f^{(k)}(H)$.
Now consider the case $|G|<k$, and thus $G=G^{(k-1)}$.
Then by construction, $f^{(k-1)}(G)=f^{(k)}(G)$, and so
\begin{displaymath}
  f^{(k)}(G)
  =
  f^{(k)}\paren{G^{(k-1)}}
  =
  f^{(k)}(G)
  =
  f^{(k-1)}\paren{G^{(k-1)}},
\end{displaymath}
which by the induction hypothesis is a prefix of
\begin{displaymath}
f^{(k-1)}\paren{H^{(k-1)}}
  \stackrel{(c)}=
  f^{(k-1)}\paren{H}.
\end{displaymath}
By (d), this is a prefix of $f^{(k)}(H)$.

We now prove that $f^{(k)}$ satisfies property (L) for every history $H\in\HH^{(k)}$.
Consider an operation $op$ on some object $O_i$ that completes during $H$ and thus also during $\Gamma(H)$.
Then $op$ completes during $\Gamma(H\|O_i)$, and also during its linearization $f_i(H\|{O_i})$.
Thus, by (a) $op$ completes during $f^{(k)}(H)$.
Now consider two operations $op_1,op_2$ in $\Gamma(H)$ such that $op_1\prec_{H} op_2$.
For the purpose of a contradiction, assume that $op_2$ appears before $op_1$ in $f^{(k)}(H)$.
First consider the case that $op_1$ and $op_2$ both appear in $f^{(k-1)}(H)$.
Since by (c) this history is equal to $f^{(k-1)}(H^{(k-1)})$, we conclude from (b) that $op_1$ precedes $op_2$ in $f^{(k-1)}(H)$.
But then by (d), this must also be the case in $f^{(k)}(H)$---a contradiction.

Now assume that not both of $op_1$ and $op_2$ occur in $f^{(k-1)}(H)$.
Then by (d) and the assumption that $op_2$ precedes $op_1$ in $f^{(k)}(H)$, we know that $op_1$ does not appear in $f^{(k-1)}(H)$.
Hence, according to (\ref{eq:loclity-i.s.}) we can write $f^{(k)}(H)=f^{(k-1)}(H)\circ\lambda$ and $op_1$ appears in $\lambda$.
By the induction hypothesis for (b), $op_1$ does not complete during $H^{(k-1)}$.
From the assumption $op_1\prec_H op_2$, it follows that $op_2$ does not get invoked during $H^{(k-1)}$.
Hence, $op_2$ cannot occur in $f^{(k-1)}(H)=f^{(k-1)}(H^{(k-1)})$ either or else $f^{(k-1)}$ would violate (b).
Therefore, $op_1$ and $op_2$ must both have events in $\lambda$.
By construction, $\lambda$ is a sub-history of $\Gamma(H\|O_i)$ for one object $O_i$.
Hence, $op_1$ and $op_2$ are both operations on the same object $O_i$ and so $op_1\prec_{H|O_i} op_2$.
On the other hand, by the assumption that $op_2$ precedes $op_1$ in $f^{(k)}(H)$ and since by (a) $f^{(k)}(H)|O_i=f_i(H^{(k)}\|O_i)$ and since $f_i$ satisfies (P), $op_2$ occurs before $op_1$ in $f_i(H\|O_i)$.
But then $f_i(H\|O_i)$ is not a linearization of $H\|O_i$, contradicting the assumption that $f_i$ is a strong linearization function for $\HH_i$.

So we have proved that the functions $f^{(k)}$ satisfy properties (a) through (d).

The final step of the proof is to use the functions $f^{(k)}$ to define the function $f$ so that it is
a strong linearization function for $\HH$ (which may contain infinite histories).
By property (d), for $k\geq 0$, there is a sequence $\zeta^{(k)}(H)$ satisfying
$f^{(k)}(H)\circ\zeta^{(k)}(H)=f^{(k+1)}(H)$.
We use this sequence to define $f$ as follows:
\begin{displaymath}
  f(H)=
  \begin{cases}
    f^{(|H|)}(H) &\text{if $|H|$ is finite, and}\\
    f^{(0)}(H)\circ\zeta^{(0)}(H)\circ \zeta^{(1)}(H)\circ \zeta^{(2)}(H)\circ\cdots & \text{if $|H|=\infty$.}
  \end{cases}
\end{displaymath}
(Note that if $|H|$ is finite, then $f(H)=f^{(0)}(H)\circ \zeta^{(1)}(H)\circ\dots\circ\zeta^{(|H|)}(H))$.)

It now remains to confirm that $f$
satisfies properties (L) and (P) of Definition~\ref{def:prefix-linearizability}.

If $G$ is a prefix of length $\ell$ of some history $H\in\HH$, then
\begin{displaymath}
  f(G)
  =
  f^{(\ell)}(G)
  \stackrel{(c)}{=}
  f^{(\ell)}(H)
  =
  f^{(0)}(H)\circ\zeta^{(0)}(H)\circ\cdots \circ\zeta^{(\ell)}(H)
\end{displaymath}
is a prefix of $f(H)$.
Hence, $f$ satisfies property (P).

To show that $f$ satisfies property (L) consider an arbitrary history $H\in\HH$.
First note that every operation $op$ that completes in $\Gamma(H)$ completes in $f(H)$, too:
Let $j$ be the position of the response of $op$ in $H$.
Then by (b) and (c), $f^{(j)}(H)=f^{(j)}(H^{(j)})$ is a linearization of $H^{(j)}$.
By (d), $f^{(j)}(H)$ is a prefix of $f(H)$ that contains $op$, since $op$ completes in $\Gamma(H^{(j)})$.

Now suppose that $f(H)$ is not a linearization of $\Gamma(H)$.
Then there is a finite prefix $G'$ of $f(H)$, such that either $\prec_{G'}$ is not compatible with $\prec_H$, or $G'$ is not valid.
Let $k$ be an arbitrary large enough integer such that $G'$ is a prefix of $f^{(k)}(H)$.
By (c), $f^{(k)}(H)=f^{(k)}(H^{(k)})$, and so by (b), $f^{(k)}(H)$ extends $\prec_{H^{(k)}}$ and is valid.
Clearly, the same is true for every prefix of $f(H^{(k)})$.
Hence, $G'$ is valid and extends $\prec_{H^{(k)}}$,
and so $G'$ also is compatible with $\prec_H$.
We conclude that $f(H)$ is a linearization of $H$.
\end{proof}

\subsection{Proof that Strong Linearizability is Sufficient for the Strong Adversary.}
We now have the tools to prove the core theorem concerning correctness under the strong adversary.

\begin{proof}[Proof of Theorem~\ref{thm:prefix-lin->equivalence}]
  Let $\HH'=\left\{H_{\MM',\AA',\vc}\,|\,\vc\in\Omega^\infty\right\}$.
  By Theorem~\ref{thm:locality}, since all object implementations are strongly linearizable, $\HH'$ is strongly linearizable.
  By Lemma~\ref{lem:prefix-linearizable-normal-form},
  there is a normalized strong linearization function, say $f$, for $\HH'$.
  For each coin flip vector $\vc$, let $H'_\vc=H_{\MM',\AA',\vc}$ and $H_\vc=f(H'_\vc)$.
  The set of all pairs $(\vc, H_\vc)$ with $\vc\in\Omega^\infty$ defines an adversary $\AA$, where $\AA(\vc)$ is the schedule observed by $H_\vc$.
  Then $H_\vc=H_{\MM,\AA,\vc}$ for all $\vc\in\Omega^\infty$, and $H_\vc$ and $\Gamma(H'_\vc)$ have a common linearization (namely $H_\vc$).
  Hence, $(\MM, \AA)$ and $(\MM',\AA')$ are equivalent.

  It remains to prove that $\AA$ is strong for $\MM$.
  That is, for any two coin flip vectors, $\vc=(c_1,c_2,\dots)$ and $\vec{d}=(d_1,d_2,\dots)$,
  $c_i=d_i$ for $0\leq i\leq k$ implies $H_\vc[k+1]=H_\vd[k+1]$.

  Since $\AA'$ is a strong adversary, it follows that $H'_\vc[k+1]=H'_\vd[k+1]$.
  If $H'_{\vc}$ contains fewer than $k+1$ flip operations, then $H'_\vc=H'_\vc[k+1]=H'_\vd[k+1]=H'_\vd$, and so the claim holds.

  Now suppose that $H'_{\vc}$ contains at least $k+1$ flip operations.
  Let $\fl$ be the $(k+1)$-st flip in $H'_\vc$ and \inv{fl}  its invocation.
  Since $H_{\vc}$ is a linearization of $\Gamma(H'_{\vc})$, and since coin flips are atomic,
  the $(k+1)$-st flip operation in $H_{\vc}$ is also $\fl$, and \inv{fl} is the last step in $H_\vc[k+1]$.

Let $G'$ and $G$ be the prefixes of $H'_\vc[k+1]$ and $H_\vc[k+1]$, respectively, each of which ends with the last step preceding \inv{fl}.
Since $G'$ is a prefix of $H'_\vc$, property~(P) for $f$ (see Definition~\ref{def:prefix-linearizability})
ensures that $f(G')$ is a prefix of $f(H'_\vc)=H_\vc$.
Thus $G$ and $f(G')$ are both prefixes of $H_\vc$.
We will now show that the last operation of $G$ is in $f(G')$ and the last operation of $f(G')$ is in $G$,
from which it follows that $f(G') = G$.

  Let $\op$ be the last operation in $G$.
  Then in $H_\vc$, operation $op$ immediately precedes the flip operation $\fl$.
  Consequently, property~(N) (see Definition~\ref{def:norm-pref-lin}) ensures that $op\prec_{H'_\vc} \fl$,
  and so $\op$ completes during $G'$.
  Since $f(G')$ is a linearization of $G'$, $\op$ appears in $f(G')$ as well.

  Now consider the last operation $\op'$ in $f(G')$.
  Since $f$ has property~(P)
  and coin flips appear atomic,
  $f(G')$ is a prefix of $H = f(G'\circ \inv{\fl} \circ \rsp{\fl} )$.
  There can be no operation $\op''$ between $\op'$ and $\fl$ in $H$, because if there were, $\op''$ would not complete in  $f(G')$
  and thus could not satisfy $\op''\prec_{H'_\vc} \fl$, contradicting property~(N) for $f$.
  Thus, $\op'$ immediately precedes $\fl$ in $H$, and hence also in $f(H'_\vc)$.
  Then by Definition~\ref{def:norm-pref-lin}, $\op'\prec_{H'_\vc} \fl$.
  Therefore, $\op'$ precedes $\fl$ also in $f(H'_\vc)=H_\vc$ and so $\op'$ appears in $G$.

Thus, we conclude that $f(G')=G$, and hence $f(G') \circ \inv{fl} = G \circ \inv{fl}  = H_\vc[k+1]$.
By symmetry we also have $ f(G') \circ \inv{fl}  = H_\vd[k+1] $, which completes the proof.
\end{proof}

\subsection{Timed Executions, Linearization Points, and Composability}\label{sec:timed}
\todo{This Section is completely new}
A \emph{timed execution} is a (possibly infinite) sequence of pairs $E=\bigl((s_1,t_1),(s_2,t_2),\dots\bigr)$, where each $s_i$ is an invocation or response step and each $t_i$ is a number in $\mathds{R}$ satisfying
\begin{itemize}
\item $t_1\leq t_2\leq\dots$, and
\item if $t_i=t_{i+1}$, then $s_{i+1}$ is the response of an (atomic) operation with invocation $s_i$.
\end{itemize}
The timed execution $E$ corresponds to a history $H(E)=(s_1,s_2,\dots)$ together with a \emph{timing function} $t_E:\{s_1,s_2,\dots\}\to\mathds{R}$, $t_E(s_i)=t_i$.
We say that step $s_i$ occurs at \emph{time $t(s_1)$} in execution $E$.
If an operation $op$ is atomic in $H(E)$, then we say that $op$ occurs atomically at time $t_E\bigl(inv(op)\bigr)$.
Every operation $op$ of the timed execution $E$ can be associated with an interval $I_E(op)\subseteq\mathds{R}$, where $I_E(op)=\left[t_E\bigl(inv(op)\bigr),t_E\bigl(rsp(op)\bigr)\right]$, if $rsp(op)$ occurs in $E$, and $I_E(op)=\left[t_E\bigl(inv(op)\bigr),\infty\right]$, otherwise.

For a timed execution $E=\bigl((s_i,t_i)\bigr)_{i=1,2,\dots}$ let $\Gamma(E)$ denote the interpretation of $E$, i.e., $\Gamma(E)$ is the timed execution that contains only the pairs $(s_j,t_j)$, where $s_j\in\Gamma\bigl(H(E)\bigr)$.
Define $\Phi(H)$ to be the set of operations whose invocations occur in history $H$, and for a timed execution $E$ let $\Phi(E)=\Phi\bigl(H(E)\bigr)$.

Now consider an arbitrary timed execution $E$ and the corresponding history $H=H(E)$.
It is well-known that a valid sequential history $L$ is a linearization of $\Gamma(H)$ if and only if
\todo{Check this}
every operation $op\in \Phi\bigl(\Gamma(E)\bigr)$ can be mapped to a point $pt(op)\in\mathds{R}\cup\{\infty\}$, such that for any $op\in \Phi\bigl(\Gamma(E)\bigr)$ and any distinct $op_1,op_2\in \Phi(L)$:
\begin{enumerate}
\item[(a)] $pt(op)\in I_{E}(op)$, and
\item[(b)] if $op_1\preceq_L op_2$, then $pt(op_1)< pt(op_2)$.
\end{enumerate}

If $pt:\Phi\bigl(\Gamma(E)\bigr)\to\mathds{R}\cup\{\infty\}$ is a mapping satisfying both, (a) and (b), then we say that $pt$ maps the operations $op\in\Phi\bigl(\Gamma(E)\bigr)$ to their \emph{linearization points}.
Note that by (a) all linearization points $pt(op)$, with $pt(op)<\infty$, are distinct.
If $pt$ maps the operations in $\Gamma(E)$ to their linearization points, then we denote by $L(E,pt)$ the timed execution, where every operation $op\in\Phi(E)$, $op<\infty$, occurs atomically at time $pt(op)$.

Two timed executions $E,E'$ are \emph{isomorphic}, if $H(E)=H(E')$.
For a set $\EE$ of executions, let $\close{\EE}$ the set of all executions $E'$ that are isomorphic to a prefix in $\EE$.

\begin{definition}\label{def:slm}
Let $\EE$ be a set of timed executions and for every execution $E\in\close{\EE}$ let a mapping $pt_E:\Phi\bigl(\Gamma(E)\bigr)\rightarrow\mathds{R}\cup\{\infty\}$ be given.
We say that the mappings $pt_E$, $E\in\close{\EE}$, map the operations of $E$ to their \emph{strong linearization points}, if for all $D,E\in\close{\EE}$
  \begin{enumerate}
  \item[$(L')$] $pt_E$ maps the operations in $\Gamma(E)$ to their linearization points, and
  \item[$(P')$]
    if $D$ is a prefix of $E$ then $L(D,pt_D)$ is a prefix of $L(E,pt_E)$.
  \end{enumerate}
  In this case, $\EE$ is called \emph{strongly linearizable}.
\end{definition}

It is immediate that we get the following characterization of strong linearizability of sets of histories:
\begin{lemma}\label{lem:strong-linearizability-linearization-points}
  Let $\EE$ be a set of timed executions and $\HH=\bigl\{H(E)\,|\,E\in\EE\bigr\}$.
  Then $\HH$ is strongly linearizable if and only if $\EE$ is strongly linearizable.
\end{lemma}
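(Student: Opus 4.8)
The plan is to establish the equivalence by translating directly between the two notions of strong linearization. In one direction, suppose $\HH$ is strongly linearizable with strong linearization function $f$. Given a timed execution $E\in\close{\EE}$, the history $H(E)$ lies in $\close{\HH}$, so $f(H(E))$ is a sequential linearization of $\Gamma(H(E))$. The key observation is that a sequential linearization of $\Gamma(H(E))$ induces linearization points for $E$: each operation $op\in\Phi(\Gamma(E))$ appears somewhere in the sequential history $f(H(E))$, and since that history respects $\prec_{H(E)}$ and is valid, one can choose points $pt_E(op)$ inside $I_E(op)$, strictly increasing along the order of $f(H(E))$, realizing conditions (a) and (b) — this is exactly the ``well-known'' characterization invoked just before the lemma. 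One must check that this choice can be made uniformly across $\close{\EE}$ so that property $(P')$ holds: if $D$ is a prefix of $E$, then (using that $\close{\EE}$ is defined via isomorphism to prefixes, so $H(D)$ is a prefix of $H(E)$) property (P) of $f$ gives that $f(H(D))$ is a prefix of $f(H(E))$, and hence $L(D,pt_D)$ can be arranged to be a prefix of $L(E,pt_E)$. The cleanest way to get this uniformly is to fix, once and for all, a rule that assigns to each sequential history a canonical timing (e.g.\ the $i$-th operation gets a time determined by the length-$i$ prefix pattern), so that prefixes of sequential histories map to prefixes of timed executions; then $pt_E$ is read off from $f(H(E))$ via this fixed rule.

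In the other direction, suppose $\EE$ is strongly linearizable via mappings $pt_E$, $E\in\close{\EE}$, satisfying $(L')$ and $(P')$. For a history $H\in\close{\HH}$, pick any $E\in\close{\EE}$ with $H(E)=H$ (one exists by definition of $\HH$ and $\close{\cdot}$), and set $f(H)=H(L(E,pt_E))$, the sequential history underlying the atomic timed execution $L(E,pt_E)$. Condition $(L')$ says exactly that this is a valid sequential history that linearizes $\Gamma(H)$, so property (L) holds. For property (P), if $G$ is a prefix of $H$ in $\close{\HH}$, choose $E$ with $H(E)=H$ and let $D$ be the corresponding prefix of $E$ with $H(D)=G$; then $(P')$ gives that $L(D,pt_D)$ is a prefix of $L(E,pt_E)$, hence $f(G)=H(L(D,pt_D))$ is a prefix of $f(H)=H(L(E,pt_E))$. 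One subtlety to address: $f$ must be well-defined independent of the chosen representative $E$ — but two timed executions with the same underlying history are isomorphic, and the linearization points in Definition~\ref{def:slm} are attached to operations (which depend only on the history and its happens-before order, plus the real-time intervals, which are the only thing that varies); here it is cleanest to note that if we only care about the resulting sequential history $H(L(E,pt_E))$, we may as well fix one canonical timed execution per history in the family generated, or observe that the construction in the forward direction already uses only $H(E)$.

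The main obstacle I anticipate is bookkeeping around the definition of $\close{\EE}$ (isomorphism to a prefix) versus $\close{\HH}$ (literal prefix closure), and making the assignment of linearization points from a sequential history — and conversely the sequential history from linearization points — canonical enough that prefix-preservation transfers without choices clashing. Once one commits to a single fixed convention for timing a sequential history (so that sequential-history prefixes correspond to timed-execution prefixes and vice versa), both directions become routine verifications of (L)/(P) against $(L')$/$(P')$ using the standard linearization-point characterization stated immediately above the lemma. I would therefore open the proof by stating that fixed convention explicitly, then dispatch the two implications in turn.
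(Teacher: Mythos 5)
Your direction ``$\EE$ strongly linearizable $\Rightarrow$ $\HH$ strongly linearizable'' is essentially sound once you adopt your own fallback: you must fix, per history, a canonical timed representative (the paper uses the execution in which the $i$-th step occurs at time $i$), both to make $f$ well defined and so that history prefixes yield literal prefixes of the chosen representatives; with that convention your verification of (L) and (P) from $(L')$ and $(P')$ matches the paper's argument.

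The gap is in the other direction. You correctly identify the tension — the points must be chosen ``uniformly across $\close{\EE}$'' for $(P')$ — but your proposed resolution, a canonical timing rule that depends only on the sequential history $f(H(E))$, cannot satisfy $(L')$: condition (a) requires $pt_E(op)\in I_E(op)$, and the intervals $I_E(op)=\left[t_E\bigl(inv(op)\bigr),t_E\bigl(rsp(op)\bigr)\right]$ depend on the actual timing function of $E$, which varies freely over $\close{\EE}$ (all isomorphic retimings are present). A rule reading times off the length-$i$ prefix of $f(H(E))$ alone will in general place $pt_E(op)$ outside $I_E(op)$. Conversely, the ``well-known'' characterization only gives, for each single $E$, \emph{some} admissible choice of points, with no consistency across prefixes. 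Reconciling the two constraints is exactly the nontrivial content of this direction, and it is what the paper's proof supplies: an $E$-dependent inductive construction, $pt_E(op_1)=t_E\bigl(inv(op_1)\bigr)$ and $pt_E(op_i)=\max\bigl\{t_E\bigl(inv(op_i)\bigr),\,T^\ast\bigl(pt_E(op_{i-1})\bigr)\bigr\}$, where $T^\ast$ moves to the midpoint before the next step of $E$ (or $+1$ beyond the last step), together with an induction showing $pt_E(op_i)\leq t_E\bigl(rsp(op_i)\bigr)$ and $pt_E(op_i)<t_E\bigl(rsp(op_j)\bigr)$ for $j\geq i$ (using that $f(H)$ linearizes $\Gamma(H)$, so $inv(op_i)$ precedes $rsp(op_j)$), and finally the observation that because the construction consumes $f(H(E))$ and the steps of $E$ left to right, a literal prefix $D$ of $E$ receives identical point values, giving $(P')$. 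Your proposal would need to be repaired by replacing the history-only canonical rule with such an $E$-aware, prefix-stable construction (or an equivalent one); also note, as a minor point, that operations invoked but omitted from $f(H(E))$ must be sent to $\infty$ rather than ``appearing somewhere'' in the linearization.
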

\begin{proof}
  Assume w.l.o.g.\ that $\EE=\close{\EE}$; thus $\HH$ is prefix-closed.
  First suppose that $\EE$ is strongly linearizable, i.e., there exist mappings $pt_E$, $E\in\EE$, that map the operations in $E$ to their strong linearization points.
  For each history $H\in\HH$, let $E_H$ be the timed execution with $H(E_H)=H$, where the $i$-th step of $H$ occurs at time $i$.
  (Such an execution $E_H$ exists because $\EE=\close{\EE}$ is by definition closed under isomorphism.)
  Then $E_H\in\EE$ and we can define $f(H)=H\left(L\bigl(E_H,pt_{E_H}\bigr)\right)$.
  By $(L')$, $f(H)$ is a linearization of $\Gamma\bigl(H(E_H)\bigr)=\Gamma(H)$.
  Moreover, if $G$ is a prefix of $H$, then by construction $E_G$ is a prefix of $E_H$, and so from $(P')$ we immediately get that $f(G)$ is a prefix of $f(H)$.
  Hence, $f$ is a strong linearization function for $\HH$.

  Now suppose that $\HH$ is strongly linearizable and let $f$ be a strong linearization function of $\HH$.
  Consider an arbitrary timed execution $E\in\EE$, let $H=H(E)$ be the corresponding history, and let $t_E$ be the corresponding timing function.
  (I.e., if $E=\bigl((s_1,t_1),(s_2,t_2),\dots\bigr)$, then $t_E(s_i)=t_i$.)
  Further, let $k=|f(H)|\in\mathds{N}\cup\{0,\infty\}$, and let $op_i$, $1\leq i\leq k$, be the $i$-th operation in $f(H)$.
  Finally, for any point time $t\in\mathds{R}$ let $T^\ast(t)=(t+t')/2$, where $t'>t$ is the point in time of the first step that occurs in $E$ after time $t$, and $T^\ast(t)=t+1$ if no step occurs in $E$ after point $t$.

  We inductively define
  \begin{align*}
    & pt_{E}(op_1)=t_E\bigl(inv(op_1)\bigr),\quad\text{and}\\
    & pt_{E}(op_{i})=\max\left\{t_E\bigl(inv(op_i)\bigr),\,T^\ast\bigl(pt_{E}(op_{i-1})\bigr)\right\}\quad\text{for $1<i\leq k$.}
  \end{align*}
  For every operation $op\in \Phi\bigl(\Gamma(E)\bigr)-\Phi(f(H))$ we define $pt_E(op)=\infty$.

  We first prove property $(L')$.
  From the definition of $T^\ast$ we immediately have $pt_E(op_{i+1})>pt_E(op_i)$ for all $1\leq i\leq k$.
  Thus, it suffices to show that $pt_E(op_i)\in I_{E}(op_i)$ for $1\leq i\leq k$.
  Since by construction $t_E\bigl(inv(op_i)\bigr)\leq pt_E(op_i)$, we only have to show that $pt_E(op_i)\leq t_E\bigl(rsp(op_i)\bigr)$.
  We prove by induction on $i$ for all $1\leq i\leq k$ the following statement:
  $pt_E(op_i)\leq t_E\bigl(rsp(op_i\bigr)$ and $pt_E(op_i)< t_E\bigl(rsp(op_j)\bigr)$ for all $i\leq j\leq k$.
  For $i=1$ the statement is true since $pt_E(op_1)=t_E\bigl(inv(op_1)\bigr)$ and any response step in $E$ occurs no earlier than $t_E\bigl(inv(op_1)\bigr)$, and only $rsp(op_1)$ can occur at the same time as $inv(op_1)$ (if $op_1$ is atomic in $E$).
  Thus, let $1<i\leq j\leq k$.
  Since $f(H)$ is a linearization of $\Gamma\bigl(H(E)\bigr)$, we know that in $E$ the invocation of $op_i$ occurs no later than the response of $op_j$, and they can occur at the same time only if $i=j$ (and $op_i$ is atomic).
  Therefore, $t_E\bigl(inv(op_i)\bigr)\leq t_E\bigl(rsp(op_j)\bigr)$, and equality holds only if $i=j$.
  Hence, if $pt_E(op_i)=t_E\bigl(inv(op_i)\bigr)$, we are done.
  So assume that $t_E\bigl(inv(op_{i-1})\bigr)<pt_E(op_i)$.
  By the induction hypothesis, and since $j>i-1$, we have $pt_E(op_{i-1})< t_E\bigl(rsp(op_j)\bigr)$, and then from the definitions of $pt_E$ and $T^\ast$, we get
  \begin{displaymath}
  pt_E(op_i)
  =
  T^\ast\bigl(pt_E(op_{i-1})\bigr)
  \leq
  \frac{pt_E(op_{i-1})+t_E\bigl(rsp(op_j)\bigr)}{2}
  < t_E\bigl(rsp(op_j)\bigr).
  \end{displaymath}

  We now show $(P')$.
  Consider two arbitrary timed executions $D,E\in\EE$ such that $D$ is a prefix of $E$.
  Let $f\bigl(H(D)\bigr)=(op_1,\dots,op_k)$.
  Since $f$ is prefix preserving, $op_1,\dots,op_k$ are the first $k$ operations in $f\bigl(H(E)\bigr)$.
  It is immediate from the inductive construction that $pt_D(op_i)=pt_E(op_i)$ for $1\leq i\leq k$, and $pt_E(op)>pt_E(op_k)=pt_D(op)$ for all operations $op\in f\bigl(H(E)\bigr)-\{op_1,\dots,op_k\}$.
  Thus, $L(D,pt_D)$ is a prefix of $L(E,pt_E)$.
\end{proof}

\begin{theorem}[Composability]\label{thm:composability}
  Let $\BB$ and $\BB'$ be disjoint sets of base objects. Further, let $O$ be a strongly linearizable implementation of some type $T$ that uses only base objects in $\BB$.
  Also, for some atomic object $B\in \BB$ let $B'$ be an object of the same type as $B$ that is implemented from objects in $\BB'$.
  Then, the implemented object $O'$ obtained from $O$ by replacing object $B$  with $B'$ is a strongly linearizable implementation of type $T$ that uses atomic base objects from the set $(\BB-\{B\})\cup \BB'$.
  \todo{the part ``\dots that uses atomic\dots'' is trivial. Omit?}
\end{theorem}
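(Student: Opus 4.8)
The plan is to remove the two levels of implementation one at a time, using the timed-execution machinery of Section~\ref{sec:timed}, which is the natural tool here. Throughout I take $B'$ to be a \emph{strongly linearizable} implementation of the type of $B$ (this is clearly the intended hypothesis). Let $\EE'$ be the set of timed executions in which processes access a single object instantiated via $O'$ together with the base objects of $(\BB-\{B\})\cup\BB'$. By Lemma~\ref{lem:strong-linearizability-linearization-points} it suffices to produce, for every $E\in\close{\EE'}$, a mapping $pt_E$ sending the top-level type-$T$ operations of $\Gamma(E)$ to strong linearization points in the sense of Definition~\ref{def:slm}; this makes the interpreted histories of $O'$ strongly linearizable, which is exactly the claim.

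\emph{Step 1 (remove $B'$).} For $E\in\close{\EE'}$ the projection $E\|B'$ onto the steps taken during operations on $B'$ is (isomorphic to a prefix of) a timed execution of the implementation $B'$, which touches only objects of $\BB'$. Since $B'$ is strongly linearizable, Lemma~\ref{lem:strong-linearizability-linearization-points} and Definition~\ref{def:slm} furnish mappings $pt^{B'}_E$ sending the operations on $B'$ in $\Gamma(E\|B')$ to strong linearization points. Let $E_1$ be obtained from $E$ by deleting every step on an object of $\BB'$ and replacing each $B'$-operation $op$ with $pt^{B'}_E(op)<\infty$ by a single atomic occurrence at time $pt^{B'}_E(op)$ (dropping those sent to $\infty$; times may be perturbed slightly to stay distinct). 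Because $\BB'$ is disjoint from $\BB-\{B\}$, all steps of $E$ on $\BB-\{B\}$ and all invocations and responses of type-$T$ operations are untouched, so $I_{E_1}(op)=I_E(op)$ for every type-$T$ operation $op$. The key claim is that $H(E_1)$ is a genuine history of $O$ over the \emph{atomic} object $B$: reading off the $B'$-operations in time order gives a valid sequential history for the type of $B$ (property $(L')$ of $pt^{B'}_E$, plus $B'$ and $B$ having the same type), each occurs inside its original interval with its original return value, so every $H(E_1)|p$ is consistent with $O$'s program for $p$ and every base-object projection is valid. Hence $E_1$ is a timed execution of $O$ over atomic base objects in $\BB$, and $\Gamma(E_1)=\Gamma(E)$ as timed executions over the type-$T$ operations.

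\emph{Step 2 (remove $O$) and verification.} Since $O$ is a strongly linearizable implementation of $T$ from atomic objects in $\BB$, its set of timed executions is strongly linearizable, so Definition~\ref{def:slm} supplies mappings $pt^{O}$ sending the type-$T$ operations of $\Gamma(E_1)$ to strong linearization points; put $pt_E:=pt^{O}_{E_1}$. Property $(L')$ is immediate: $\Gamma(E_1)=\Gamma(E)$, $pt^{O}_{E_1}(op)\in I_{E_1}(op)=I_E(op)$, and $pt^{O}_{E_1}$ already realizes a valid linearization of $\Gamma(E)$. For $(P')$, let $D\in\close{\EE'}$ be a prefix of $E$; then $D\|B'$ is a prefix of $E\|B'$, so property $(P')$ for the $pt^{B'}$-family gives that the atomic occurrences of the $B'$-operations in $D$ agree with those in $E$ and that the newly-linearized $B'$-operations of $E$ land strictly after everything in $D$; together with the non-$\BB'$ steps of $D$ forming a prefix of those of $E$, this yields that $E_1$ built from $D$ is a prefix of $E_1$ built from $E$. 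Applying property $(P')$ for the $pt^{O}$-family to this prefix relation shows $pt_D$ and $pt_E$ agree on the common type-$T$ operations and place new ones later, i.e.\ $L(D,pt_D)$ is a prefix of $L(E,pt_E)$. Thus the $pt_E$ map the operations to their strong linearization points, and by Lemma~\ref{lem:strong-linearizability-linearization-points} the interpreted histories of $O'$ are strongly linearizable; by construction $O'$ uses exactly the atomic base objects of $(\BB-\{B\})\cup\BB'$.

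\emph{Main obstacle.} I expect the technical heart to be making Step~1 rigorous: one must verify in detail that replacing each $B'$-operation by an atomic occurrence at its linearization point yields an execution truly consistent with $O$ (per-process program order, matching return values, validity of the induced $B$-history), and that this de-implementation commutes with taking prefixes in the precise sense needed for $(P')$. The delicate case is a $B'$-operation pending in $D$ but linearized only in $E$: property $(P')$ of $B'$'s strong linearization is exactly what prevents it from being inserted among $D$'s steps, which is what makes the composition go through. The remaining three-level bookkeeping (type-$T$ operations, operations on $B'$ and on $\BB-\{B\}$, operations on $\BB'$) and the harmless perturbation keeping linearization points off real step times are routine.
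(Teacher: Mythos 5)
Your proposal is correct and follows essentially the same route as the paper's proof: the execution $E_1$ you construct is exactly the paper's $\alpha(E)$, obtained by collapsing each $B'$-operation to an atomic occurrence at its strong linearization point so that the result is a timed execution of $O$ over atomic base objects, after which $O$'s strong linearization points are pulled back and properties $(L')$ and $(P')$ are verified via the same two-stage prefix argument. You also correctly identified the implicit hypothesis that $B'$ must itself be strongly linearizable, which the paper's proof likewise assumes.
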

\begin{proof}
  Let $\EE_{O'}$, $\EE_{O}$, and $\EE_{B'}$ be the set of all timed executions that can occur if processes execute operations on $O'$, $O$, and $B'$, respectively.
  By the assumption, $\EE_{O}$ and $\EE_{B'}$ are strongly linearizable.
  Let $pt'_E$ for $E\in\EE_{B'}$ and $pt''_D$ for $D\in\EE_{O}$ denote mappings that map executions in $\EE_{B'}$ and $\EE_{O}$, respectively, to their strong linearization points.

  Now consider an arbitrary timed execution $E=\bigl((s_1,t_1),\dots,\bigr)\in\EE_{O'}$.
  Then $E|B'\in\EE_{\BB'}$, and thus every operation $op\in E|B'$ is associated with a strong linearization point $pt'_{E|B'}(op)$.
  For every such operation $op$ executed by some process $p$ during $E$, we do the following:
  We remove all steps from $E$, that a process $p$ executes during the interval that starts with $inv(op)$ and ends with $rsp(op)$, and if $pt'_{E|B'}(op)<\infty$, then we replace those steps  with the atomic operation $op$ at time $pt'_{E|B'}(op)$.
  Thus, in the resulting execution, $\alpha(E)$, all operations $op$ on object $B'$ are atomic.
  Also, $\alpha(E)|B'=L(E|B',pt'_{E|B'})$, and so $H(\alpha(E)|B')$ is a linearization of $\Gamma(E|B')$.
  Moreover, all steps that are not on $B'$ occur at exactly the same time in $E$ as in $\alpha(E)$, and very process executes its steps in program order.

  Thus, $\alpha(E)$ is a timed execution in $\EE_{O}$, and we can define the mapping $pt_E:\Phi\bigl(\Gamma(E)\bigr)\to\mathds{R}\cup\{\infty\}$ by
  \begin{displaymath}
    pt_E(op)=pt''_{\alpha(E)}(op).
  \end{displaymath}
    All invocation and response steps on $O$ appear at exactly the same time in $E$ as in $\alpha(E)$, i.e., $\Gamma(E)=\Gamma\bigl(\alpha(E)\bigr)$.
    Since $pt''_{\alpha(E)}(op)$ maps the operations in $\Gamma\bigl(\alpha(E)\bigr)$ to their linearization points, it is obvious that $pt_E$ satisfies property $(L')$.

    It remains to show property $(P')$.
    Let $D,E\in\EE_{O'}$, and let $D$ be a prefix of $E$.
    Then $D|B'$ is a prefix of $E|B'$, and since $pt'_{D|B'}$ and $pt'_{E|B'}$ map the operations in $D|B'$ and $E|B'$, respectively, to their strong linearization points, $\alpha(D)|B'=L(D|B',pt'_{D|B'})$ is a prefix of $\alpha(E)|B'=L(E|B',pt'_{E|B'})$.
    As a consequence, $\alpha(D)$ is a prefix of $\alpha(E)$, and so $L(\alpha(D),pt''_{\alpha(D)})=L(D,pt_D)$ is a prefix of $L(\alpha(E),pt''_{\alpha(E)})=L(E,pt_E)$.
\end{proof}

\subsection{Which Linearizable Implementations are Strongly Linearizable?}

Many linearizable implementations of shared objects from the literature are actually strongly linearizable,
and hence can be used safely in randomized algorithms, in place of their atomic counterparts.
We can identify some of these by examining the proof of linearizability of the implementation.
Such a proof typically follows one of a few general approaches.
Let $\obj$ be the implemented object and let $H$ be a history over $\obj$.

In one proof technique, for each operation $\op$ on $\obj$ a unique ``linearization point'' $pt(\op)$ is assigned,
which is a shared memory operation that occurs during the execution of operation $\op$.
A sequential history $S$ is formed by ordering the operations on $\obj$ in $H$ so that $\op_1$ precedes $\op_2$ in $S$ if (and only if)
$pt(\op_1) \prec_H pt(\op_2)$.
The construction of $S$ guarantees agreement with the ``happens before'' order of operations on \obj\ in $H$,
and so the proof obligation for linearizability is only to show that $S$ is valid for \obj.
If the linearization points are chosen so that the mapping from $H$ to $S$ also satisfies
property~(P) (see Definition~\ref{def:prefix-linearizability}), then the implementation is strongly linearizable.
Property~(P) will hold, for example, if, for any shared memory operation $\op'$ in the software method that simulates $\op$,
by the time $\op'$ is executed it is determined whether or not $\op'$ is the linearization point for $\op$.
This holds, for instance, if $pt(\op)$ is mapped statically to a particular pseudo-code statement
(that performs exactly one shared memory operation), irrespective of the schedule.

Several published implementations admit such proofs of strong linearizability, for example an obstruction-free double-ended queue \cite{her:of}
and a terminating Compare-And-Swap implementation from atomic registers \cite{ghhw:cas}.
\todo{At least for Compare-And-Swap I don't know of a simple proof. We should either remove this claim or prove it.}
The practical constructions of sets and lists such as the ``FineList'', ``OptimisticList'', and ``LazyList''
described by Herlihy and Shavit
\cite{hs:art} use fine-grained locking and are strongly linearizable for the same reason.
For non-blocking constructions of sets,
there typically remains a point in the code (often a strong synchronization operation) associated with each
 implemented ``insert''  or ``delete'' operation, which allows the proofs of linearizability to extend to
 proofs of strong linearizability (see for example the LockFreeList in \cite{hs:art}).

Many universal constructions also fall into this category because they force operations to take effect one at a time.\footnote{
In that case the linearization order follows easily from the algorithm and a formal proof
  of linearizability is often not given.}
For example, Herlihy's universal construction for wait-free shared objects \cite{herl:wait} is strongly linearizable.
In this construction, each operation applied on the shared object is represented using a ``cell'' data structure.
Processes cooperate to thread cells onto a list;
they reach agreement on the successor of each cell through a consensus object for that cell.
The linearization corresponding to the total order of the cells in this list defines a strong linearization.\footnote{
If the implementation is non-deterministic and the operation corresponding to the last cell in the list is not complete, then
          that operation is dropped from the linearization because its response may not be uniquely determined.}

Similarly, any implementation of a shared object obtained by wrapping a mutex around a sequential implementation
is strongly linearizable.
Combining this observation with known implementations of mutual exclusion using only safe registers,
yields strongly linearizable implementations of any shared object using only safe registers.
Strongly linearizable implementations with $\O(1)$ Remote Memory Reference (RMR) complexity are also possible using a queue-based mutex,
which can be constructed using atomic registers and certain Read-Modify-Write primitives such as
Fetch-And-Add or Fetch-And-Store.

In the second general proof technique,
the operations in a history $H$ are first ordered somehow into a valid sequence $S$,
and the proof obligation is to show that $S$ is consistent with the ``happens before'' order of $H$.
Such proofs typically do not directly translate to proofs of strong linearizability,
and the corresponding implementations are frequently not strongly linearizable.
Two such examples are in the introduction.
Additional examples of implementations in this category include other ``textbook'' wait-free constructions of
strong atomic registers from weaker ones.
For instance,
any strong adversary has less power against a randomized algorithm using atomic MRSW registers, than
a particular weak adversary has against the same algorithm when the MRSW registers are replaced with
Israeli and Li's linearizable construction from SRSW atomic registers \cite{il:timestamps}.
Similarly, any strong adversary has less power against a randomized algorithm using atomic MRSW registers, than
a weak adversary has against the same algorithm when the MRMW registers are replaced with
Vitanyi and Awerbuch's linearizable construction from MRSW atomic registers \cite{vit:atom}.
The impasse is not just for register-like constructions.
Herlihy and Wing \cite{her:lin} provide a linearizable implementation
of a queue object using some read-modify-write objects where the \Enqueue\ operation but not the \Dequeue\ operation is wait-free.
There is no strong linearization even for the subset of histories that occur
when \Dequeue\ is constrained to be atomic.
(Examples of all these situations are included Appendix~\ref{sec-appendix-examples}.)

%%%%% Processing file casProofs.tex
\renewcommand{\gets}{\ensuremath{:=}}
\SetKwInput{KwPrivVar}{Private variables}
\SetKwInput{KwSharVar}{Shared variables}
\SetKwInput{KwHelpFun}{Subroutines}
\SetKwFunction{NewBlockOp}{AllocBlock}
\newcommand{\algospacing}{\renewcommand{\baselinestretch}{1}}
\SetKwFunction{AlgoBrack}{()}
\SetKwFunction{AlgoBrackL}{(}
\SetKwFunction{AlgoBrackR}{)}
\newcommand{\AlgoBracks}[1]{\AlgoBrackL #1 \AlgoBrackR}
\newcommand{\alref}[1]{{\ref{#1}}}
\newcommand{\lref}[1]{line~{\ref{#1}}}
\newcommand{\lrefs}[1]{lines~{\ref{#1}}}
\newcommand{\Lref}[1]{Line~{\ref{#1}}}
\newcommand{\llref}[2]{lines~{\ref{#1}}--{\ref{#2}}}
\newcommand{\Llref}[2]{Lines~{\ref{#1}}--{\ref{#2}}}

\newenvironment{myindentpar}[1]
{\begin{list}{}
     {\setlength{\leftmargin}{#1}
     \setlength{\topsep}{1pt}
     \setlength{\itemsep}{1pt}
     \setlength{\parskip}{0pt}
     \setlength{\parsep}{0pt}}
     \item[]
}
{\end{list}}
\newcommand{\vardeclar}[1]{\begin{myindentpar}{1cm}#1\end{myindentpar}}

\subsubsection{Strong Linearizability of CAS Implementation from Atomic Registers}

We now sketch a proof of strong linearizability for the RMR-efficient implementation of Compare-And-Swap from atomic registers \cite{ghhw:cas}.
First, we describe briefly how this implementation works, letting $O$ denote the implemented CAS object.
The basic implementation supports \Read and \CAS operations, and is presented in a simplified format in Figure~\ref{fig_cas}.
(A \Write operation can be implemented in a strongly linearizable way using similar techniques \cite{golab:phd}.)
The implementation records the state of $O$ using data structures called \emph{blocks}.
Each block contains a variable $Val$ that records a state (i.e., value) of $O$.
(For any block, $Val$ inside that block is written at most once.)
The state stored in block $b$ is denoted $b.Val$.
The block that stores the latest state of $O$ is called the \emph{current} block,
  and its address is stored in a shared register $Cur$.
An operation that does not change the state of $O$ simply reads $Cur$ to identify the current block,
  and reads the state of $O$ from a shared variable in that block (\llref{cas:rb}{cas:rtf} and \llref{rd:rb}{rd:rtv}).
An operation that causes a non-trivial state change allocates a new block (\lref{cas:nb}),
  writes the new state to that block (\lref{cas:wv}), and writes the address of that block to $Cur$ (\lref{cas:wc}).
At the heart of the implementation lies a mechanism for efficient synchronization
  among operations that attempt to apply non-trivial state changes concurrently (\lref{cas:le}).
This mechanism ensures that at most one such operation succeeds
  and the others instead apply trivial state transitions.
An efficient signaling mechanism is also used between the leader and losers, as we explain shortly (\lrefs{cas:sg}, \alref{cas:wt}).

\begin{figure}[htbp]
{\algospacing
\begin{declarations}[H]
  \caption{() \protect for CAS implementation.}
  \KwSharVar{(global)
    \vardeclar{
      \begin{tabbing}
      $Cur$ -- \= stores address of current block, initially points to a block \\
      \> that records the initial value of the CAS object
      \end{tabbing}
    }}
  \KwSharVar{(per-block)
    \vardeclar{
      $Val$ -- records a state (i.e., value) of the CAS object
    }}
  \KwPrivVar{(per-process)
    \vardeclar{
      $b,b'$ -- block address \\
      $l$ -- process ID \\
      $v$ -- value of CAS object
    }}
\end{declarations}

\noindent
\begin{minipage}[t]{.64\textwidth}\ \\
\begin{procedure}[H]
  \caption{for operation CAS($X$,$Y$)}
  \SetAlgoLined
  $b \gets \Read{Cur}$\; \nllabel{cas:rb}
  $v \gets \Read(b.Val)$\; \nllabel{cas:rv}
  \uIf {$v \neq X$} {
    \Return $v$\;  \nllabel{cas:rtf}
  }
  \Else {
    $l \gets $ leader computed using leader election instance associated with block $b$\; \nllabel{cas:le}
    \uIf {$l = $ \emph{this process}} {
      $b ' \gets $ new block\; \nllabel{cas:nb}
      \Write $b'.Val \gets Y$\; \nllabel{cas:wv}
      \Write $Cur \gets b'$\; \nllabel{cas:wc}
      signal value $Y$ to losers of leader election\; \nllabel{cas:sg}
      \Return $X$\; \nllabel{cas:retls}
    }
    \Else {
      wait for signal from leader $l$\;  \nllabel{cas:wt}
      \Return value $Y$ signaled by leader\;  \nllabel{cas:retlf}
    }
  }
\end{procedure}
\end{minipage}\hfill
\begin{minipage}[t]{.30\textwidth}\ \\
\begin{procedure}[H]
  \caption{for operation Read()()}
  \SetAlgoLined
  \setcounter{AlgoLine}{17}
  $b \gets \Read{Cur}$\; \nllabel{rd:rb}
  \Return $\Read{b.Val}$\; \nllabel{rd:rtv}
\end{procedure}
\end{minipage}
}
\caption{Implementation of CAS from reads and writes. \label{fig_cas}}
\end{figure}

We now illustrate how the implementation works with a short example.
Let $s_0, s_1, s_2, ...$ denote the sequence of states
  $O$ takes on (ignoring trivial state transitions),
  and let $b_0, b_1, b_2, ...$ denote the blocks that store these states.
(The state of a CAS object is simply the value it stores.)
Suppose that the current block is $Cur = b_2$ and processes $p_1,p_2,p_3,p_4$ invoke
  the following operations concurrently:
  $p_1$,$p_2$ and $p_3$ invoke $\CAS(s_2, X_1)$, $\CAS(s_2, X_2)$ and $\CAS(s_2, X_3)$ respectively,
  where $s_2 \not\in \left\{X_1, X_2, X_3\right\}$; and
  $p_4$ invokes $\CAS(Y, Z)$ for some $Y \not\in \left\{s_2, X_1, X_2, X_3\right\}$ and arbitrary $Z$.
Process $p_4$'s operation is dealt with easily; upon discovering that the state $s_2$ stored in block $Cur = b_2$
  is different from $Y$, $p_4$'s $\CAS(Y, Z)$ does not change the state of $O$ and returns $s_2$ (\lref{cas:rtf}).
In contrast, each of $p_1,p_2,p_3$ has a chance to change the state of $O$, although only one of them may actually do so.
Thus, $p_1,p_2,p_3$ elect a leader, say $p_2$, which allocates a new block $b$ and records $X_2$ in it (\llref{cas:nb}{cas:wv}).
(A separate ``instance'' of leader election is used to synchronize each non-trivial state change, and we associate
   such instances with blocks on a one-to-one basis.
In this example, $p_1,p_2,p_3$ use the instance associated with block $b_2$.)
Process $p_2$ then sets $Cur = b$ (\lref{cas:wc}) and its $\CAS(s_2, X_2)$ returns $s_2$ (\lref{cas:retls}).
Thus, $b_3 = b$ and $s_3 = X_2$ hold, and the state of $O$ changes from $s_2$ to $s_3$.
Meanwhile, $p_1$ and $p_3$ wait until $p_2$ has written $Cur$ (\lref{cas:wt}),
  and their operations return $s_3$ (\lref{cas:retlf}).
(The linearizability of the implementation depends crucially on $p_1$ and $p_3$ waiting for $p_2$ in this situation.)
Thus, their (``failed'') \CAS operations cause trivial state transitions from $s_3$ back to $s_3$, and appear to take
   effect just after $p_2$'s (``successful'') \CAS.

To show strong linearizability, we will use timed executions (see Section~\ref{sec:timed}).
For any history $H$ of the implementation, define a corresponding timed execution $E$
  where the $i$'th step occurs at time $i$.
Let $\EE$ denote the set of such timed executions.
For any $E \in \EE$, we define the mapping $pt_E:\Phi\bigl(\Gamma(E)\bigr)\rightarrow\mathds{R}\cup\{\infty\}$
  as follows.
If $\op$ is a \Read operation that invokes a read of $Cur$ at time $t$, then $pt_E(\op) = t$, otherwise
   $\op$ is pending in $E$ and $pt_E(\op) = \infty$.
If $\op$ is a $\CAS(X, X)$ operation for some $X$ (i.e., the comparison value is the same as the new value),
  we treat it just like a \Read.
If $\op$ is a $\CAS(X, Y)$ operation for some $X$ and $Y \neq X$, then $pt_E(\op)$ depends on the execution
   path of $\op$.
If $\op$ is pending and does not read $Cur$ at all, then $pt_E(\op) = \infty$.
If $\op$ invokes a read of $Cur$ at time $t$, and the state of $\O$ in the block read is not $X$, then
   $\op$ is a ``failed'' \CAS and $pt_E(\op) = t$.
If $\op$ invokes a read of $Cur$ at time $t$, and the state of $\O$ in the block read is $X$, then
   $\op$ may succeed or fail depending on the outcome of leader election
   (i.e., the ``instance'' of leader election associated with the block whose address $\op$ reads from $Cur$).
If the outcome of leader election is not ``decided'' at the end of $E$ (i.e., different extensions of $E$
   may lead to different outcomes), or the outcome is decided
   but the leader's operation has not yet overwritten $Cur$, then $pt_E(\op) = \infty$.
(In this case $\op$ is pending because the implementation ensures that the leader writes $Cur$
   before $\op$ terminates, even if $\op$ is not the leader's operation.)
Otherwise the write to $Cur$ by the leader is invoked at some time $t$.
If $\op$ is the leader's operation then $pt_E(\op) = t$.
Finally, if $\op$ is not the leader's operation, then $pt_E(\op) = t + \eps$ where
   $0 < \eps < 1$ is an arbitrary constant unique for each process.

It is straightforward to verify that for each $E \in \EE$, $L(E,pt_E)$ is a linearization of $\Gamma(E)$.
Next, consider Definition~\ref{def:slm}.
Property $(L')$ follows easily for any operation $\op$ whose linearization point is
   the time of a step that $\op$ itself takes, which is always between the
   times of $\inv{\op}$ and $\rsp{\op}$.
In all remaining cases, the linearization point is of the form $t + \eps$ where $t$ is the time
   of a step that is not part of $\op$ but nevertheless occurs after $\inv{\op}$ and before $\rsp{\op}$.
Since the times at which steps occur are integer-valued by construction of $E$, and since $0 < \eps < 1$, it follows
   that $t + \eps$ is also between the times of $\inv{\op}$ and $\rsp{\op}$, as wanted.
For property $(P')$, consider histories $D,E \in \EE$ such that $D$ is a prefix of $E$.
It suffices to consider the case when $E$ extends $D$ by one step, say $s$ at time $t$,
   and show that $L(D, pt_D)$ is a prefix of $L(E,pt_E)$.
If $s$ is not a step that reads or writes the shared variable $Cur$, then $L(E,pt_E) = L(D, pt_D)$.
If $s$ reads $Cur$ and is part of some operation $\op$, then either $pt_E(\op) = \infty$ and
   $L(E,pt_E) = L(D, pt_D)$ (i.e., $\op$ is pending and trying to cause a non-trivial state change),
   or else $L(E,pt_E)$ is an extension of $L(D, pt_D)$ by an invocation/response pair for $\op$
   (i.e., $\op$ causes a trivial state change).
Finally, if $s$ writes $Cur$ and is part of operation $\op$, then $L(E,pt_E)$ is an extension of
   $L(D, pt_D)$ by an invocation/response pair for $\op$, followed by zero or more
   invocation/response pairs for operations whose linearization points are of the form $t + \eps$,
   $0 < \eps < 1$.

%%%%% Done processing casProofs.tex

%%%%% Done processing strongAdversary.tex

%%%%% Processing file weakAdversary.tex
\section{Weak Adversaries}
\label{weakAdversary.sec}
In this section we show that it is impossible to strengthen linearizability in a way that limits the power of
a weak adversary to influence the result of an execution in the same sense as strong linearizability limits the
power of the strong adversary, when implementations are obtained only from atomic registers and load-linked/store-conditional.
To prove this, we consider a particular algorithm that uses shared objects of a ``strong counter'' type.
The state of this type is an integer (initially 0) and the operations supported are \FAI{} and \FAD{}.
These operations increment and decrement the counter, respectively, and also return the prior value of the counter.
In the executions we will consider, operations will be invoked in such a way that the counter's value
is always in $[0, n]$ where $n$ is the maximum number of processes.

In Figure~\ref{fig:loadbalance}, we present a simple algorithm that uses $\sqrt{n}$ strong counters.
(Throughout this section we assume that $n$ is a perfect square.)
Each process chooses one of the counters uniformly at random, then calls \FAI{}, and finally calls \FAD{}.
Now fix a weak adversary $\AA$ and an integer $K_{max}$.
Let $H$ be a random history obtained by a run of $\LoadBalance{}$ scheduled by $\AA$. % and using the random coin flip vector $\tc\in\Omega^\infty$.
For each process $p\in\PP$ let $X_{\AA,p}$ be the random variable defined as follows:
If during $H$ the maximum point contention%
\footnote{The maximum point contention is the maximum number of processes that at the same point in time have called \LoadBalance{} but not yet finished that function call, where the maximum is taken over all points in time.}
is at most $K_{max}$ and $p$'s \FAI{} terminates, then $X_{\AA,p}$ is the value returned by that \FAI{}.
If $p$'s \FAI{} does not terminate during $H$, or if the point contention exceeds $K_{max}$, then define $X_p=0$.
We are interested in the maximum expectation of $X_{\AA,p}$  over all processes, i.e.,
\begin{displaymath}
  \Phi(\AA):=\max\bigl\{\Exp{X_{p.\AA}}\,|\,p\in\PP\bigr\}.
\end{displaymath}
It is not hard to see that if the \FAI{} operation is atomic, then for all weak adversaries $\AA$, $\Phi(\AA)\leq K_{\max}/\sqrt{n}$.
In particular, if $K_{\max}=\Theta(\sqrt{n})$, then $\Phi(\AA)=O(1)$.

  \begin{figure}[htbp] % using package float
 \begin{function}[H]
    \SetKwInput{SharDat}{Shared data}
    \caption{LoadBalance()()}
    \SharDat{$\sqrt{n}$ shared strong counters, $F_0,\dots,F_{\sqrt{n}-1}$}
    Choose $i\in\{0,\dots,\sqrt{n}-1\}$ uniformly at random.\label{LoadBalance:random_choice}\;
    $x:=F_i$.\FAI{}\;
    $F_i$.\FAD{}\;
    \Return{$x$}\;
  \end{function}
  \caption{The load balancing algorithm.}\label{fig:loadbalance}
  \end{figure}

On the other hand, we will show that if the \FAI{} operations are based on a ``natural'' terminating (or lock-free) implementation that uses only \read, \xwrite, and \LL/\SC operations, then for $K_{\max}=\Theta(\sqrt{n})$ there exists a weak adversary $\AA$ such that $\Phi(\AA)=\Omega(K_{\max})=\Omega(\sqrt{n})$.
An implementation of a type $\tau$ is natural if each instance $O$ of the implemented object has its own set $\mathcal{B}_O$ of base objects such that an operation on the instance $O$ accesses only base objects in $\mathcal{B}_O$.
Essentially all linearizable implementations of objects are natural, as otherwise the composition of multiple objects would effectively create a single new object.

Strong counters can be implemented with various progress properties from atomic \read, \xwrite, and \LL/\SC operations.
For example, a deterministic wait-free implementation can be obtained using Herlihy's universal construction \cite{herl:wait}, with consensus objects simulated in a straightforward way from \LL/\SC.
Given only \read and \xwrite, a deterministic terminating implementation is possible using Yang and Anderson's mutual exclusion algorithm \cite{yang:fast}.
(Alternately, one can simulate \LL/\SC in the wait-free implementation using \read and \xwrite \cite{ghhw:cas}, which yields a terminating strong counter implementation from \read and \xwrite only.)
All these implementations are natural.

We now present our main result, stating that a weak adversary gains additional power against algorithm~\LoadBalance\
no matter how the strong counters are implemented from \read, \xwrite, and \LL/\SC.

\begin{theorem}\label{thm:weak-adversary-combined-bound}
  Let $K_{\max}=\bigl\lceil(1+\delta)\sqrt{n}\bigr\rceil$, $\delta>0$.
  If the algorithm in Figure~\ref{fig:loadbalance} (\LoadBalance)
  is used with atomic strong counters, then
  for any weak adversary $\AA$,
  $\Phi(\AA)<1+\delta=O(1)$ holds.
  On the other hand, if the algorithm~\LoadBalance\ is used with a linearizable, natural, possibly randomized, terminating (or lock-free)
  implementation of strong counters from atomic \read, \xwrite, and \LL/\SC operations, then
  there exists a weak adversary $\AA$, such that
  $\Phi(\AA)=\Omega(K_{max})=\Omega(\sqrt{n})$.
\end{theorem}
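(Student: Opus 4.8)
Fix a weak adversary $\AA$ and a process $p$; I bound $\Exp{X_{\AA,p}}$. The first step of $p$'s call to \LoadBalance{} is its coin flip; condition on the event that this flip occurs and on the history $G$ preceding it. Because $\AA$ is strong, $G$ is a function of the coin flips preceding $p$'s, so the counter values $c_0,\dots,c_{\sqrt n-1}$ at the end of $G$ and the number $a$ of processes active at the instant $p$ performs its \FAI{} are all determined by those flips. Since the counters are atomic, $\sum_j c_j$ equals the number of processes that, at the end of $G$, are strictly between their \FAI{} and their \FAD{}; every such process is active and $p$ is not one of them, so $\sum_j c_j\le a-1$. Because $\AA$ is weak, $p$'s \FAI{} is the very next step after $p$'s flip and no other process moves in between, so $a$ is also the point contention at that moment and $p$'s \FAI{} returns $c_i$ for the fresh uniform choice $i$. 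If the run's maximum point contention is at most $K_{\max}$ then $a\le K_{\max}$, so $X_{\AA,p}\le\mathds 1\{a\le K_{\max}\}\cdot c_i$; and since conditioning on $G$ leaves $i$ uniform and independent,
\[
  \Exp{X_{\AA,p}}\ \le\ \Exp{\mathds 1\{a\le K_{\max}\}\cdot\frac{1}{\sqrt n}\sum_j c_j}\ \le\ \frac{K_{\max}-1}{\sqrt n}\ <\ 1+\delta ,
\]
the last step because $\lceil(1+\delta)\sqrt n\rceil-1<(1+\delta)\sqrt n$. Hence $\Phi(\AA)<1+\delta$.

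\textbf{Part 2: the adversary.} Now replace the counters by a natural, linearizable, terminating (or lock-free), possibly randomized implementation from atomic \read, \xwrite, and \LL/\SC operations; I build a weak adversary $\AA$ and single out a process $p$ with $\Exp{X_{\AA,p}}=\Omega(\sqrt n)$. By naturality each $F_i$ has a private base-object set $\mathcal{B}_i$, so running any operation on $F_{i'}$ to completion leaves every $F_i$ with $i\ne i'$ untouched. The adversary works in three phases. In Phase~1 it schedules $p$'s coin flip; since $\AA$ is weak the next step must be the invocation of \FAI{} on $F_j$, which reveals the value $j$ of $p$'s choice, and the adversary then suspends $p$ (allowed, since the weak constraint binds a flip only to the immediately following invocation). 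In Phase~2 it runs the remaining $n-1$ processes one at a time: if a process $q$'s \FAI{} invocation names some $F_{j'}$ with $j'\ne j$ it runs $q$'s entire \LoadBalance{} to completion (harmless, by naturality), while if $j'=j$ it runs $q$'s \FAI{} to completion and then suspends $q$ immediately before its \FAD. Let $S$ be the resulting (random) set of ``flooders''. In Phase~3 it resumes $p$ and drives $p$'s \FAI{} to completion in a way (the crux, below) that makes $p$'s \FAI{} linearize after the \FAI{} of every flooder, then runs everything else off.

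\textbf{Part 2: the analysis.} Since $|S|\sim\mathrm{Bin}(n-1,1/\sqrt n)$, a Chernoff bound gives $\tfrac12\sqrt n\le|S|\le(1+\tfrac\delta2)\sqrt n$ with probability $1-o(1)$; call this event $\EE$. On $\EE$, and for all sufficiently large $n$, the number of processes active at any time is at most $|S|+2\le\lceil(1+\delta)\sqrt n\rceil=K_{\max}$ (the suspended flooders, plus $p$, plus at most one currently running process), so the contention bound is met. Moreover, since every flooder is suspended before its \FAD, no \FAD{} on $F_j$ is linearized before $p$'s \FAI{}, and the flooders' completed \FAI{} operations, performed one at a time, take the values $0,1,\dots,|S|-1$; hence, if Phase~3 succeeds, $p$'s \FAI{} returns $|S|-O(1)=\Omega(\sqrt n)$, and it terminates because in Phase~3 only $p$ takes steps (using the terminating/lock-free property). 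Thus on $\EE$ we have $X_{\AA,p}\ge|S|-O(1)$, whence $\Exp{X_{\AA,p}}\ge(1-o(1))\bigl(\tfrac12\sqrt n-O(1)\bigr)=\Omega(\sqrt n)$ and $\Phi(\AA)=\Omega(\sqrt n)$.

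\textbf{The main obstacle.} Everything reduces to Phase~3: for \emph{every} natural, linearizable, terminating implementation of \FAI{} from \read, \xwrite, and \LL/\SC, the adversary must be able to suspend the victim operation (here $p$'s \FAI{}) at a suitable point of its execution, let $|S|$ sequential completed \FAI{} operations intervene on $\mathcal{B}_j$, and then resume the victim so that it returns the maximum value $|S|$. This cannot follow from linearizability alone --- a concurrent operation may legitimately be linearized first, so suspending the victim at its very start does not work --- and this is precisely where the \LL/\SC structure is used (one expects no analogous statement when the base objects are, say, consensus objects). For the one-register implementation the suspension point is the step between $p$'s \LL and \SC, after which the flood of successful \SC operations makes $p$'s pending \SC fail and forces the victim onto a retry path that re-reads $\mathcal{B}_j$ and hence observes the value $|S|$. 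For a general implementation the plan is a covering argument: the victim must execute at least one \xwrite or successful \SC into $\mathcal{B}_j$ (reads alone cannot change the counter's abstract state), the adversary suspends it just before its first such ``effectful'' step and uses the flooders to cover the relevant base objects so that $p$'s pending \SC operations fail and/or $p$'s \xwrite operations are issued only after the flood, and one then argues, by indistinguishability, that to every process the run looks like one in which the victim genuinely takes effect last --- so linearizability forces the victim to return $|S|$. Making this precise uniformly over all (possibly randomized, possibly ``helping'') implementations --- in particular pinning down the correct suspension point and ruling out that the flood is ``absorbed'' so that the victim still returns a small value --- is the technically involved part of the proof, and I expect the main difficulty to lie in a potential argument tracking how much of the victim's partial effect the flooders have had to overwrite.
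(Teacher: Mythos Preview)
Your Part~1 is correct and is essentially the paper's argument.

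For Part~2 there are two problems, one of emphasis and one of substance.

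First, under your own setup---suspend $p$ right after the high-level \FAI{} invocation and before any base-object step---your ``main obstacle'' does not exist. Since $p$ has not touched a single base object, the flooders execute exactly the same base-object steps as in a run where $p$ is absent, and hence return $0,1,\dots,|S|-1$. Those responses are now fixed; any linearization must place the flooders in positions $1,\dots,|S|$ with those values, which leaves exactly one slot for $p$, namely last, so $p$'s \FAI{} must return $|S|$. Nothing about \LL/\SC is used, and no covering or potential argument is needed. Your worry that ``a concurrent operation may legitimately be linearized first'' forgets that the other operations' return values are already committed.

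Second, and this is the real gap: the paper reads the weak-adversary constraint as binding the flip to $p$'s first \emph{shared-memory} (base-object) access, not merely to the high-level invocation. Under that reading your easy argument collapses, because that one step may be a \xwrite\ that the flooders can observe. The paper does \emph{not} fix this with a covering argument of the kind you sketch; indeed it is unclear how to make such an argument uniform across all implementations. Instead the paper uses an averaging/symmetry trick you do not mention. Fix the coin flips, let \emph{every} process that chose the same counter take exactly one base-object step, and only then choose the victim $p$ uniformly at random among them. The resulting configuration and the partition of these $k$ processes by first-step type---$\QQ$ (read/\LL/failed \SC), $\VV$ (write to a register nobody else wrote), and the rest---are independent of which process is $p$. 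One part has size at least $k/3$. If $p\in\QQ$, $p$ is invisible and the easy argument gives $p$ return value $\ge k-1$. If $p$'s write was overwritten, the processes that saw $p$ themselves left no trace, so running everyone else first gives $p$ return value $\ge k-|\QQ|$. If $p\in\VV$, run all writers round-robin: by symmetry the entire continuation is identical no matter which $\VV$-process is the victim, so the \FAI{} return values of $\VV$-processes form a fixed set of distinct integers and $p$, being uniform in $\VV$, gets expected value $\ge(|\VV|-1)/2$. Combining the three cases gives $\Exp{L}=\Omega(k)$ for a random $p$, and averaging back over $p$ extracts a single adversary. The random choice of victim and the symmetry argument in the visible case are the key ideas missing from your outline.
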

\noindent The full proof of the theorem is quite complicated and given in the extended version of this paper.
Here we sketch the main idea.

\begin{proof}[Proof sketch]
For the upper bound, observe that at the point in time when a process $p$ makes its random choice, the expected value of the counter it chooses is at most $k/\sqrt{n}$, where $k$ is the number of processes currently active.
Since the weak adversary cannot intervene between $p$'s random choice and $p$'s \FAI{}, that operation's return value is at most $(k-1)/\sqrt{n}$.

For the lower bound, we construct for each process $p$ a weak adversary $\AA_p$ that tries to ``fool'' $p$.
We fix the coin flips that processes receive arbitrarily.
Then we choose one process $p$ at random, use $\AA_p$ for the scheduling, and obtain that the expected return value of $p$'s \FAI{} is $\Omega(k)$, where $k$ is the number of processes that randomly chose the same counter as $p$.
Also, point contention is at most $k+1$.
If coin flips are uniformly random, then $k$ is highly concentrated around $\sqrt{n}$, so with high probability it will be $\Omega(\sqrt{n})$ but also not exceed $K_{max}$.

So the main goal of the adversary $\AA_p$ is to make $p$'s \FAI{} call return a value with expectation $\Omega(k)$.
We achieve this as follows.
First, we let process $p$ take one step, which reveals the index $i^\ast$ of the counter it is using.
Then we schedule each process $q\neq p$, one after the other.
If $q$ accesses a different counter than the one chosen by $p$, we let $q$ run solo until it finishes the algorithm (after that $q$ does not contribute to point-contention anymore).
If $q$ also chooses the counter with index $i^\ast$, then we let it take exactly one step and then stall it.
This way, eventually all $k$ processes that chose the same counter as $p$ are stalled, while all other processes are finished.
Moreover, the maximum point contention encountered is at most $k+1$.

Now we partition the set of stalled processes into three sets.
Let $\op_q$ be the first (and only) operation process $q$ executed so far.
The set $\QQ$ contains all processes $q$, where $\op_q$ is not a $\xwrite$.
The set $\VV$ contains all processes $q$, where $\op_q$ is a $\xwrite$ to a register $R$ to which no other process writes in its first step.
Finally, $\WW$ is the set of all remaining processes, which write to some register that is also written by another process.
Note that the sets $\QQ$, $\VV$, and $\WW$ are uniquely determined by the first steps executed by all processes, and thus by the coin flips processes use.
But they are independent of the choice of $p$.

We make use of a positive correlation between the size of each set, and the probability with which $p$ is from that set.
Suppose $\QQ$ is large, i.e., $|\QQ|\geq k/3$.
Then the probability that $p\in\QQ$ is at least $1/3$.
Moreover, given that $p\in\QQ$, $p$'s first operation $\op_p$ is not a $\xwrite$, and thus leaves no ``trace'' (note that if it is a \SC, then it fails).
In this case, we stall $p$ and let all other $k-1$ processes finish their $F_{i^\ast}.\FAI{}$ operation.
After that $F_{i^\ast}$ has value $k-1$, so that when we finally let $p$ finish its \FAI{}, its return value is $k$.

Now suppose $|\VV|\geq k/3$, so the probability that $p\in\VV$ is at least $1/3$.
If $p\in\VV$, we let all processes in $\VV$ run in a round-robin fashion in some predetermined order, until they have finished their algorithm.
It can be argued that the choice of $p$ among processes in $\VV$ has no influence on what processes in $\VV$ observe in the resulting execution.
Hence, the \FAI{} operations of all processes in $\VV$ return distinct values.
Given that $p\in\VV$ is chosen uniformly at random, $p$'s \FAI{} return value has an expectation of at least $(|\VV|-1)/2=\Omega(k)$.

Finally, suppose $|\WW|\geq k/3$.
If $p\in\WW$, then $\op_p$ is a \xwrite that was overwritten by the operation $\op_q$ of some other process $q$ (recall that $p$'s \xwrite was scheduled before any other process took a step).
Moreover, since the first operation of all other processes in $\VV\cup\WW$ is a $\xwrite$, none of these processes can have ``seen'' $p$ before $p$ was overwritten.
Any process that may have seen $p$ cannot have executed a \xwrite in its first step, so it didn't become ``visible'' itself.
We let all processes in $\VV\cup\WW-\{p\}$ finish their \FAI call by scheduling them in a round-robin fashion.
They can only see themselves, and not any process outside of this set, so they will increase the value of the counter to at least $|\VV\cup\WW|\geq k/3$.
When we run $p$ afterward, its \FAI{} call must return a value of $\Omega(k)$.
\end{proof}

For the complete proof of Theorem~\ref{thm:weak-adversary-combined-bound} we rely on
Theorems~\ref{thm:weak-adversary-atomic} and \ref{thm:weak-adversary-impossiblity} below.
\begin{theorem}\label{thm:weak-adversary-atomic}
  If Algorithm~\ref{fig:loadbalance} (\LoadBalance) is used with atomic strong counters, then
  for any weak adversary $\AA$ and any integer $K_{max}$
  \begin{displaymath}
     \Phi(\AA)\leq\frac{K_{\max}-1}{\sqrt{n}}.
  \end{displaymath}
\end{theorem}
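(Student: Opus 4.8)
The plan is to exploit the one feature of algorithm~\LoadBalance\ that a weak adversary cannot manipulate: the step in which a process $p$ selects its counter is a flip operation, so by the definition of a weak adversary that flip is followed immediately by $p$'s invocation of \FAI{}, with no intervening step by any process; and since the counters are atomic, no step changes any counter between $p$'s flip and the response of $p$'s \FAI{}. Hence, if $p$'s \FAI{} returns a value at all, that value equals the value $A_{I_p}$ that $p$'s chosen counter $F_{I_p}$ has at the instant of $p$'s flip, where $I_p\in\{0,\dots,\sqrt{n}-1\}$ is the (uniformly random) index $p$ picks. I would run \LoadBalance\ under $\AA$ with a random coin-flip vector $\vc$, fix a process $p$, and let $E_p$ be the event that $p$ performs its flip; on the complement of $E_p$, $p$ never invokes \FAI{}, so $X_{\AA,p}=0$, and it suffices to bound $\Exp{X_{\AA,p}}$ on $E_p$.

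To do that I would condition on the history up to and including $p$'s flip invocation. For a finite history $G$ that ends with $p$'s flip invocation---necessarily the $(k+1)$-st flip of $G$, for some $k\ge 0$---let $C_G$ be the event that the generated history has $G$ as its prefix $H_{\MM,\AA,\vc}[k+1]$. Since $\AA$ is a strong adversary, $H_{\MM,\AA,\vc}[k+1]$ is determined by $c_1,\dots,c_k$, so $C_G$ depends only on the first $k$ coordinates of $\vc$; consequently, on $C_G$ the index $I_p$, which $p$ computes from $c_{k+1}$, is uniform on $\{0,\dots,\sqrt{n}-1\}$ and independent of $C_G$. On $C_G$ the counter values $A_0,\dots,A_{\sqrt{n}-1}$ at the instant of $p$'s flip are constants determined by $G$ (every counter operation in $G$ is complete, so $A_j$ is the number of \FAI{} calls on $F_j$ in $G$ minus the number of \FAD{} calls on $F_j$ in $G$). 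Because $X_{\AA,p}\le A_{I_p}\cdot\mathds{1}[\text{point contention}\le K_{\max}]$ on $C_G$ (the factor is $0$ unless $p$'s \FAI{} terminates and the contention bound holds, and $A_{I_p}\ge 0$), I obtain $\Exp{X_{\AA,p}\mid C_G}\le\mathds{1}[\text{point contention}\le K_{\max}]\cdot\frac{1}{\sqrt{n}}\sum_{j}A_j$, using that $I_p$ is uniform and independent of $C_G$ while the $A_j$ are constant on $C_G$.

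The decisive step is to bound $m:=\sum_j A_j$. This $m$ is exactly the number of processes that, at the end of $G$, have completed a \FAI{} call but not the matching \FAD{} call (each process performs at most one of each, and \FAD{} only after \FAI{}); each such process is still unfinished at the end of $G$, and so is $p$, which has invoked its flip but not returned, and these $m+1$ processes are distinct. Since $G$ is a prefix of $H_{\MM,\AA,\vc}$, the point contention of the full history is therefore at least $m+1$. Hence whenever $\mathds{1}[\text{point contention}\le K_{\max}]=1$ we have $m\le K_{\max}-1$, and in every case $\Exp{X_{\AA,p}\mid C_G}\le (K_{\max}-1)/\sqrt{n}$. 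The events $C_G$ (over all admissible $G$) are disjoint and their union is $E_p$, since $p$ flips at most once; averaging over them and using $X_{\AA,p}=0$ off $E_p$ yields $\Exp{X_{\AA,p}}\le (K_{\max}-1)/\sqrt{n}$, and taking the maximum over $p\in\PP$ gives $\Phi(\AA)\le (K_{\max}-1)/\sqrt{n}$.

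I expect the main obstacle to be the conditioning argument of the second paragraph: making precise that, after fixing everything up to $p$'s flip invocation, $p$'s random index is still uniform and independent of that prefix---which is exactly where the strong-adversary clause (that $H_{\MM,\AA,\vc}[k+1]$ depends only on $c_1,\dots,c_k$) is used---together with the bookkeeping that $X_{\AA,p}$ vanishes unless the observed point contention is at most $K_{\max}$, which is what converts the trivial bound ``$X_{\AA,p}\le m$'' into the bound ``$m\le K_{\max}-1$''. The remaining ingredients---that the counters are frozen between $p$'s flip and $p$'s \FAI{} (weak adversary), and that the point contention at the end of $G$ is at least $m+1$ (a short counting argument)---are routine once the model is unpacked.
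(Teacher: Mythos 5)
Your proof is correct and follows essentially the same route as the paper's: both hinge on (i) the weak adversary being unable to intervene between $p$'s flip and its \FAI{} on an atomic counter, (ii) the sum of counter values at the instant of the flip being at most the point contention minus one, and (iii) the uniform choice of index averaging that sum over $\sqrt{n}$ counters. Your version merely formalizes the paper's informal conditioning (on ``the configuration immediately before $p$'s random choice'') by conditioning on the history prefix ending at the flip invocation and invoking the strong-adversary clause explicitly, which is a welcome tightening but not a different argument.
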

\begin{proof}
  Let $m=\sqrt{n}$ and let $\AA$ be an arbitrary weak adversary.
  Fix an arbitrary process $p$.
  Since $X_{\AA,p}$ is 0 if $K>K_{\max}$, it suffices to show that $\CondExp{X_{\AA,p}}{K}\leq(K-1)/m$.

  Consider the system configuration immediately before $p$ makes its random choice in line~\ref{LoadBalance:random_choice}.
  Suppose that in this configuration the counter $F_j$, $0\leq j<m$, has value $b_j$.
  Then clearly there are at least $b_0+\dots+b_{m-1}+1$ processes active (including $p$).
  Thus, given $K$, we have $b_0+\dots+b_{m-1}+1\leq K$.
  Hence, when $p$ makes its random choice, the expected value of the counter chosen by $p$ is
  \begin{displaymath}
    \sum_{0\leq j< m}\frac{b_j}{m}\leq \frac{K-1}{m}.
  \end{displaymath}
  Since the adversary cannot intervene between $p$'s random choice and $p$'s following \FAI{} operation, the expected return value of that operation is at most $(K-1)/m$,
  as wanted.
\end{proof}

Next, we consider the case when Algorithm~\ref{fig:loadbalance} is used with linearizable implementations  of strong counters from atomic \read, \xwrite and \LL/\SC operations.
Our analysis applies to any terminating or lock-free (hence wait-free) implementation of strong counters from \read, \xwrite and \LL/\SC
that is \emph{natural}.

\begin{theorem}\label{thm:weak-adversary-impossiblity}
  Let $K_{\max}=\bigl\lceil(1+\delta)\sqrt{n}\bigr\rceil$, for some arbitrarily small $\delta>0$.
  Consider Algorithm~\ref{fig:loadbalance} (\LoadBalance) used with a linearizable, natural, possibly randomized, terminating (or lock-free)
  implementation of strong counters from atomic \read, \xwrite, and \LL/\SC operations.
  Then there exists a weak adversary $\AA$, such that
  \begin{displaymath}
    \Phi(\AA)=\Omega(K_{max})=\Omega(\sqrt{n}).
  \end{displaymath}
\end{theorem}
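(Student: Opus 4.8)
The plan is to exhibit, for every process $p\in\PP$, a weak adversary $\AA_p$, and to prove that the average $\frac1n\sum_{p\in\PP}\Exp{X_{\AA_p,p}}$ is already $\Omega(\sqrt n)$; since $\Phi(\AA_p)\ge\Exp{X_{\AA_p,p}}$ for each $p$, the $\AA_p$ attaining the maximum proves the theorem. Fix a coin-flip vector $\vc$, let $R_i$ be the set of processes whose random choice in line~\ref{LoadBalance:random_choice} is $i$, and put $k_i=|R_i|$, so $\sum_i k_i=n$. A Chernoff bound together with a union bound over the $\sqrt n$ counters shows that, outside a set of $\vc$'s of probability $o(1)$, every $k_i$ is strictly below $K_{\max}=\lceil(1+\delta)\sqrt n\rceil$; call such $\vc$ \emph{good}. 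Each $\AA_p$ will keep the maximum point contention at most $k_{i^\ast}+1$, where $i^\ast$ is the counter $p$ picks, so on good $\vc$ the contention cap never forces an $X_{\AA_p,p}$ to $0$. Writing $\frac1n\sum_p X_{\AA_p,p}(\vc)=\frac1n\sum_i\sum_{p\in R_i}X_{\AA_p,p}(\vc)$, the crux becomes the \emph{counter lemma}: for good $\vc$ and every $i$, $\sum_{p\in R_i}X_{\AA_p,p}(\vc)=\Omega(k_i^2)$. Granting it, $\sum_i k_i^2\ge(\sum_i k_i)^2/\sqrt n=n^{3/2}$ by Cauchy--Schwarz (the negligibly small $k_i$ can be discarded first), so $\frac1n\sum_p X_{\AA_p,p}(\vc)=\Omega(\sqrt n)$ for good $\vc$, and taking expectation over $\vc$, bounding the non-good part below by $0$, finishes the proof.

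It remains to define $\AA_p$ and prove the counter lemma. The adversary $\AA_p$ first runs $p$ only until its \FAI{} invocation (which, by the weak-adversary restriction, must immediately follow $p$'s random choice) reveals $i^\ast$ and $p$ has taken the first base step $\op_p$ of that \FAI{} call; then it stalls $p$. It then handles the remaining processes one at a time: one that invokes \FAI{} on some $F_j$ with $j\ne i^\ast$ is run in isolation to completion (by \emph{naturality} it never touches the base objects of $F_{i^\ast}$, and it raises point contention to at most $k_{i^\ast}+1$), while a process $q$ invoking \FAI{} on $F_{i^\ast}$ is allowed its invocation and first base step $\op_q$ and then stalled. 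Now exactly the $k_{i^\ast}$ processes of $R_{i^\ast}$ are stalled, each having executed one base step, and no operation on $F_{i^\ast}$ has completed. Partition $R_{i^\ast}$ by these first steps: $\QQ$ is the set of processes whose step is not a \xwrite{} (a \read{}, an \LL{}, or a necessarily failing \SC{}); $\VV$ is the set whose step is a \xwrite{} to a register no other member of $R_{i^\ast}$ writes in its first step; $\WW=R_{i^\ast}\setminus(\QQ\cup\VV)$. These sets depend only on $\vc$, not on $p$, and one of them has size at least $k_{i^\ast}/3$. Depending on which contains $p$, $\AA_p$ continues: if $p\in\QQ$, keep $p$ stalled, run the other $k_{i^\ast}-1$ members of $R_{i^\ast}$ round-robin until all their \FAI{} calls complete, then let $p$'s \FAI{} complete; if $p\in\VV$, run all of $\VV\cup\WW$ round-robin in a fixed order until their \FAI{} calls complete; if $p\in\WW$, run $\VV\cup\WW\setminus\{p\}$ round-robin until their \FAI{} calls complete, then let $p$'s \FAI{} complete. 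In every case the remaining $\QQ$-members and all \FAD{} operations are deferred until after $p$'s \FAI{} has returned.

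The counter lemma follows from two facts. \emph{Progress by indistinguishability}: in each case the forward-running set $S$ accesses only $F_{i^\ast}$'s base objects and sees no trace there of any stalled process --- a $\QQ$-member's first step writes nothing, and $p$'s single \xwrite{} in the $\WW$ case is overwritten (by another $\WW$-member) before any member of $S$ takes a second step, because $p$'s write precedes all other first steps. Hence $S$'s execution is identical to the one in the auxiliary history in which every stalled process (in the $\WW$ case including $p$) takes no steps at all; there, once $S$ has finished its \FAI{} calls no process holds an unfinished operation, so the \emph{terminating} property --- a fortiori lock-freedom, which for these bounded programs is actually wait-freedom --- forces those \FAI{} calls to complete, and hence they complete in the real run too. \emph{Forced return values}: since no \FAD{} on $F_{i^\ast}$ completes while these \FAI{} calls do, linearizability forces the completed \FAI{} calls on $F_{i^\ast}$ to return pairwise distinct values, i.e.\ $\{0,\dots,m-1\}$ once $m$ of them have completed. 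So in the $\QQ$ case the $k_{i^\ast}-1$ others return $\{0,\dots,k_{i^\ast}-2\}$ and $p$'s is pinned at $k_{i^\ast}-1$, giving $|\QQ|(k_{i^\ast}-1)$; in the $\VV$ case the forward execution of $\VV\cup\WW$ is independent of which member is $p$ (a private \xwrite{} is never overwritten and its position among the first steps is irrelevant to $F_{i^\ast}$'s base state), so the \FAI{} return values are a fixed assignment of $\{0,\dots,|\VV\cup\WW|-1\}$, and the $|\VV|$ of them landing on $\VV$ sum to at least $\binom{|\VV|}{2}$; in the $\WW$ case $p$'s return value is pinned at $|\VV\cup\WW|-1$ for every $p\in\WW$, giving $|\WW|\,(|\VV\cup\WW|-1)$. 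Since one of $|\QQ|,|\VV|,|\WW|$ is at least $k_{i^\ast}/3$, each case yields $\sum_{p\in R_{i^\ast}}X_{\AA_p,p}(\vc)=\Omega(k_{i^\ast}^2)$.

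The hard part will be the progress step for a merely \emph{terminating} implementation: stalling a process after it has invoked \FAI{} releases the implementation from any obligation to advance the operations of the processes still running, so one cannot simply freeze $p$ and the others and invoke progress; the indistinguishability argument that replaces each stalled process by one that never started is what recovers it, and it is exactly this that forces the three-way split of $R_{i^\ast}$ according to whether a first step is a write, whether the write is overwritten, and hence whether it leaves an observable trace. The remaining work is bookkeeping --- confirming that $\AA_p$ never intervenes between a process's random choice and its \FAI{} invocation (so it really is a weak adversary and learns $i^\ast$ only from that invocation), that the partition is $p$-independent, and that the point-contention bound $k_{i^\ast}+1\le K_{\max}$ holds at every prefix of the schedule, most tightly while a counter-$j$ process with $j\ne i^\ast$ runs to completion during the opening phase.
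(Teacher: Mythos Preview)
Your proposal is correct and follows essentially the same approach as the paper: the same family of adversaries $\AA_p$, the same opening phase that isolates $R_{i^\ast}$ and lets each of its members take exactly one base step, the same three-way partition $\QQ/\VV/\WW$ by the nature of that first step, and the same indistinguishability-to-an-auxiliary-history device to recover progress from mere termination. Your organization differs only cosmetically---you sum over $p$ and split the paper's ``$p$ not visible'' case explicitly into $p\in\QQ$ versus $p\in\WW$, and in the latter you run only $\VV\cup\WW\setminus\{p\}$ where the paper runs the slightly larger set $\PP_{i^\ast}\setminus\SS$---but the resulting bounds and the final $\Omega(k_{i^\ast}^2)$ counter lemma are the same.
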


To prove Theorem~\ref{thm:weak-adversary-impossiblity}
first we specify the adversary $\AA$ (in Subsection \ref{adversary.subsec}),
and second present the analysis that bounds $\Phi(\AA)$ from below (Subsection \ref{analysis.subsec}).

For the purpose of the following definition, it is convenient to assume w.l.o.g.\ that registers store pairs of values, where the second component of such a pair stores which process changed the register last.
We can achieve this by having processes write their ID into the second component of the pair with each successful \xwrite or \SC operation.
We say, process $q$ \emph{marks} the register with its ID.
If later on some other process $z\neq q$ writes to the same register, or performs a successful \SC operation on that register, $q$'s mark gets replaced by $z$'s mark.

We say that in some configuration $C$ process $p$ is \emph{visible}, if in $C$ some register is marked by $p$.
During a history $H$ a process $q$ \emph{sees} process $p\neq q$, if
\begin{itemize}
 \item either $q$ reads or performs an \LL operation on a register marked by $p$, or
 \item $q$ executes an \SC operation on some register $R$, at a point in time when $R$ is marked by $p$, and after $q$ has executed an \LL operation on $R$.
 (If this happens, then $q$'s \SC operation fails.)
\end{itemize}

\begin{observation}\label{obs:indistinguishable}
If $H$ is a history where no process in $P\subseteq\PP$ sees a process in $\overline{P}$, then $H|P$ is indistinguishable from $H$ to all processes in $P$.
\end{observation}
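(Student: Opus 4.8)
The plan is to prove, by induction on the length of the prefix considered, a statement slightly stronger than the Observation. For $i\geq 0$ let $H_i$ be the prefix of $H$ consisting of its first $i$ steps, and set $G_i := H_i|P$. I will carry along the invariant that \textbf{(I)} $G_i$ is a valid history in which every process $p\in P$ takes exactly the same steps, with exactly the same responses, as in $H_i$; and \textbf{(II)} for every register $R$, either $R$ holds the same value and carries the same mark in the configuration reached after $G_i$ as in the one reached after $H_i$, or else $R$ is marked by some process in $\overline{P}$ after $H_i$. Clause (II) is the bookkeeping that makes the induction go through: it says the two executions can disagree only on registers that currently ``belong to'' $\overline{P}$, which is exactly the situation the ``sees'' hypothesis forbids a process in $P$ from noticing. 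Taking $i=|H|$ in (I) yields the Observation.

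The base case $i=0$ is immediate, since $H_0=G_0=\varepsilon$ and all registers hold their initial values in both. For the inductive step, let $q$ be the process taking the $i$-th step. If $q\in\overline{P}$ then $G_i=G_{i-1}$; a \read, an \LL, or a failed \SC by $q$ changes no register, and a \xwrite or successful \SC by $q$ only rewrites some register with a mark of $q\in\overline{P}$, so (II) is preserved and (I) is unaffected. Now suppose $q=p\in P$. Coin flips, purely local steps, and invocations/responses on implemented objects are copied verbatim into $G$ and change no register, so assume $p$'s step is a base-object operation on a register $R$. If it is a \xwrite, it succeeds in both executions and leaves $R$ holding the same value marked by $p\in P$, so (I) and (II) hold. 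If it is a \read or an \LL of $R$: were $R$ marked by a process in $\overline{P}$ after $H_{i-1}$, then by definition $p$ would see that process, contradicting the hypothesis; hence by (II) for $i-1$, $R$ has the same value and mark after $G_{i-1}$, so $p$ obtains the same response, establishing (I), and (II) is trivially preserved since these operations change no register.

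The only delicate case is an \SC by $p$ on $R$. If $p$'s \SC succeeds in $H_i$, no process modifies $R$ between $p$'s matching \LL and this \SC in $H$; since $G$ is obtained from $H$ by deleting steps and preserves the order of $p$'s own steps, no modification of $R$ occurs in that interval in $G$ either, so $p$'s \SC also succeeds in $G_i$, writing the same value, and afterwards $R$ holds the same value marked by $p\in P$ in both executions. If $p$'s \SC fails in $H_i$, let $z$ be the process performing the last modification of $R$ by a process other than $p$ between $p$'s matching \LL and this \SC in $H$ (if instead the failure is caused by an intervening operation of $p$ itself, that operation survives verbatim in $G$ and we are done). Then $R$ is marked by $z$ from that modification until $p$'s \SC, so $p$ executes an \SC on a register marked by $z$ after having executed an \LL on it, i.e.\ $p$ sees $z$; were $z\in\overline{P}$ this contradicts the hypothesis, so $z\in P$, and $z$'s modification survives in $G$ in the same position relative to $p$'s \LL and \SC, whence $p$'s \SC fails in $G_i$ as well. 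In either outcome the responses match, so (I) holds; a failed \SC changes no register, so (II) is preserved; and, the responses having been matched, determinism of $p$'s program forces $p$'s next invocation to be identical in the two executions, closing the induction.

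I expect the \SC case to be the main obstacle: one must argue carefully that a failed \SC in $H$ also fails in $G$, which requires ruling out the possibility that every intervening modification of $R$ came from $\overline{P}$ (this is exactly where the second bullet of the definition of ``sees'' is invoked) and then observing that at least one surviving $P$-modification sits in the right place after projection. The second subtlety is getting invariant (II) exactly right---in particular that it must track the \emph{mark} of each register, not merely its value, so that whenever a process in $P$ is about to touch a register the ``sees'' hypothesis can be brought to bear.
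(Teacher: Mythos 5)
The paper states this as an \emph{Observation} and supplies no proof at all---it is treated as self-evident and then invoked inside Claim~\ref{clm:lb_helper}---so there is no official argument to compare yours against. Your induction is a correct and complete justification of what the paper leaves implicit. The two points that genuinely need care are exactly the ones you isolate: the invariant must track the \emph{mark} of each register and not just its value (so that whenever a process in $P$ is about to read, \LL, or \SC a register, the ``sees'' hypothesis rules out the register currently belonging to $\overline{P}$, and clause (II) then forces agreement of value and mark between the two executions); and the failed-\SC case, where one must locate the last intervening modification of $R$ in the interval between $p$'s \LL and its \SC, observe that if it is by some $z\neq p$ then $R$ is marked by $z$ at the \SC so the second bullet of the definition of ``sees'' applies and forces $z\in P$, and conclude that this modification survives the projection in the right relative position so the \SC fails in $H|P$ as well. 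The remaining cases (steps of $\overline{P}$, writes by $P$, successful \SC) are handled correctly, and matching responses plus determinism of the programs (coin-flip responses being carried over verbatim) close the induction. I see no gap.
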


\subsection{\mathversion{bold}Specification of the Adversary $\AA_p$}
\label{adversary.subsec}
First we describe for each process $p$ some weak adversary $\AA_p$ that is trying to ``fool'' the process $p$
For the analysis we will then choose $p$ at random.

Adversary $\AA_p$, $p\in\PP$, is defined as follows:
We let $p$ take steps until $p$ has performed its first shared memory access (during its \FAI{} operation).
Note that this access reveals the index $i^\ast$ of the counter $F_{i^\ast}$ that $p$ chose.
After that we stall $p$.

Next we consider each of the remaining processes $q\in\PP-\{p\}$, one after the other, in the order of their IDs.
We let $q$ take steps until it has executed its first shared memory access and thus revealed its choice of a counter $F_{i_q}$.
If $i_q\neq i^\ast$, we let $q$ run solo, until it has finished its run of \LoadBalance completely.
Otherwise, we stall $q$.
This is continued until each processes in $\PP$ either has finished its \LoadBalance algorithm or is stalled.

Let $H'$ be the execution obtained so far, and $C$ the system configuration at the end of $H'$.
Further, let $\PP_j$, $0\leq j<m$, denote the set of the processes that selected the counter $j$ during $H'$.
Note that in $H'$, all processes in $\PP-\PP_{i^\ast}$ finish their \LoadBalance{} call, and all processes in $\PP_{i^\ast}$ (including $p$) have executed exactly one shared memory access.
We distinguish two cases.

\paragraph{\mathversion{bold}Case~1: In configuration $C$, $p$ is visible.}
Let $\WW$ be the set of processes in $\PP_{i^\ast}$ whose first shared memory access was a write.
Note that among all processes in $\PP_{i^\ast}$, only processes in $\WW$ could have changed the value of a shared register during $H'$  (all \SC operation must have failed, because they weren't preceded by a \LL).
Hence, since $p$ is visible in $C$, $p\in\WW$.
We let all processes in $\WW$, including $p$, take steps in a round-robin way, ordered by their IDs.
Since no process in $\WW$ can ever see a process not in $\PP_{i^\ast}$, all those processes eventually make progress and finish their \FAI{} call.
When a process finishes its \FAI{} call, it is stopped and the scheduling continues with the remaining processes in $\WW$, until all processes are stopped.

\paragraph{\mathversion{bold}Case~2: In configuration $C$, $p$ is not visible.}
Although $p$ is not visible in configuration $C$, it is possible that some processes see $p$ during $H'$.
This can happen, for example, if $p$ writes to a register $R$, then some process $q$ reads $R$, and finally a third process writes to $R$, overwriting whatever $p$ has written.
Let $\SS$ be the set that contains $p$ and all processes $q\neq p$ that see $p$ during $H'$.
We let all processes in $\PP_{i^\ast}-\SS$ take steps in a round-robin way, ordered by their IDs.
Since each process $q\in\SS-\{p\}$ sees $p$ when $q$ executes its first shared memory operation, that operation must be either a $\LL$, a $\read$, or a failed $\SC$.
Hence, no process in $\SS$ is visible in configuration $C$ or has been seen by a process in $\PP_{i^\ast}-\SS$.
Thus, all processes in $\PP_{i^\ast}-\SS$ eventually make progress and finish their \FAI{} call when scheduled in a round-robin fashion.
When this happens for a process, that process is stopped, and the scheduling continues with the other processes, until all processes are stopped.
Finally, we let $p$ run until it has finished its \FAI{} call.
(Since no other process $q\in\SS$ is visible, and $p$ has not seen a process from $\SS$, $p$'s \FAI{} call will eventually terminate.)

\subsection{Analysis}
\label{analysis.subsec}
For our analysis we assume that $n$ is a perfect square and let $m=\sqrt{n}$.
Later in the analysis, we will fix a sequence of coin flips in such a way that each process gets the same coin flips during every run of the algorithm, no matter what the scheduling is.
Let $\tc$ denote such a sequence of coin flips.
(Note that $\tc$ is not the same as the coin-flip vectors $\vc$ used in previous sections.)
However, for any fixed adversary $\AA$ and a fixed process $p$, each coin-flip sequence $\tc$ uniquely determines a random history $H$ and we can choose $\tc$ uniformly  at random to determine $X_{\AA,p}=X_{\AA,p}(\tc)$.

For the analysis, we will let the adversary $\AA_p$ do the scheduling in order to try to ``fool'' process $p$.
Since we don't know which adversary is the best for the purpose of a lower bound, we choose $p\in\PP$ at random.
To emphasize that $p$, $\AA$ and $\tc$ are random (although there will be a dependence between $p$ and $\AA$), we now denote $X_{\AA,p}(\tc)$ by the random variable $X=X(\AA,p,\tc)$.
We prove the following:
\begin{lemma}\label{lem:weak-adversary-main}
  For $\tc$ and $p\in\PP$ chosen uniformly at random,
  \begin{displaymath}
  \Exp[p,\tc]{X(\AA_p,p,\tc)}=\Omega(\sqrt{n}).
  \end{displaymath}
\end{lemma}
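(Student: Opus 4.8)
The plan is to condition on the sequence of coin flips $\tc$ and, separately, on the value $k$, defined as the number of processes that select the same counter as $p$; then show that for \emph{every} fixed $\tc$, averaging over $p$ yields an expected return value that is $\Omega(k)$, and finally use concentration of $k$ around $\sqrt n$ to conclude. More precisely, I would first observe that $\tc$ determines the first shared-memory access of every process (since by construction $\AA_p$ lets each process reach its first access before any choice of $p$ matters, and the coin flip alone determines which counter a process picks and — since the implementation is deterministic and no other process has yet acted — what its first access is and which register it touches). Hence the partition of the ``target'' processes $\PP_{i^\ast}$ into $\QQ$, $\VV$, $\WW$ (processes whose first access is not a write; processes whose first access is a write to a register no other target process writes first; and the rest) is a function of $\tc$ only, \emph{independent of $p$}, as stressed in the proof sketch.

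Next I would establish the three case analyses quantitatively, each conditioned on a fixed $\tc$ with a given value of $k=|\PP_{i^\ast}|$. In the $\QQ$-case: if $p\in\QQ$ then $\op_p$ leaves no trace (a read, an \LL, or a failed \SC), so after all other $k-1$ target processes complete their \FAI{} the counter reads $k-1$ and $p$'s \FAI{} returns $k$. In the $\VV$-case: if $p\in\VV$, running all of $\VV$ round-robin in a fixed order produces an execution in which, by Observation~\ref{obs:indistinguishable} (no process in $\VV$ sees a process outside $\VV$, since each writes only to a ``private'' register among the first steps and the others in $\PP_{i^\ast}$ are stalled after a non-write first step), the view of each process in $\VV$ — and hence its \FAI{} return value — does not depend on which element of $\VV$ is labelled $p$; so the $|\VV|$ return values are a fixed set of distinct values, and averaging $p$ uniformly over $\VV$ gives expected return value $\ge (|\VV|-1)/2$. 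In the $\WW$-case: if $p\in\WW$ then $\op_p$ was overwritten before any other process took a step, and every process in $(\VV\cup\WW)-\{p\}$ has a write as its first step and so could not have seen $p$; running $(\VV\cup\WW)-\{p\}$ round-robin, these processes see only each other, drive the counter to $\ge|\VV\cup\WW|-1\ge k/3-1$, and then $p$'s \FAI{} returns $\Omega(k)$. In all three cases point contention stays $\le k+1$.

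Then I would glue the cases: since $\QQ\cup\VV\cup\WW=\PP_{i^\ast}$, at least one has size $\ge k/3$; conditioned on $\tc$, the process $p$ is uniform over the $k$ elements of $\PP_{i^\ast}$ (conditioning further on the event that $p$ lands in $\PP_{i^\ast}$, which happens with probability $k/n$... — here I must be careful: $p$ is uniform over \emph{all} $n$ processes, so I would instead directly bound $\Exp[p]{X\mid\tc}$ from below by the contribution of the event $p\in\PP_{i^\ast}$ together with $p$ landing in the large set, giving $\Exp[p]{X\mid\tc}\ge \frac{1}{n}\cdot\frac{k}{3}\cdot\Omega(k)=\Omega(k^2/n)$; summed appropriately this is not quite what I want, so more cleanly: condition on $p\in\PP_{i^\ast}$, note $\Pr[p\in\PP_{i^\ast}\mid\tc]=k/n$ and $\Exp[p]{X\mid\tc,\,p\in\PP_{i^\ast}}=\Omega(k)$, so unconditioning over $\tc$ gives $\Exp{X}\ge\Exp[\tc]{\frac{k}{n}\cdot\Omega(k)}=\frac1n\Omega(\Exp{k^2})$). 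Finally, since under uniform coin flips $k$ is a $\mathrm{Binomial}(n,1/m)$-type quantity with mean $\sqrt n$, standard Chernoff bounds give $\Pr[k\ge(1+\delta)\sqrt n]$ exponentially small, so the contribution where point contention exceeds $K_{\max}$ is negligible, while $\Exp{k^2}=\Omega(n)$; hence $\Exp{X}=\Omega(\sqrt n)$.

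The main obstacle I anticipate is the $\VV$-case symmetry argument: making rigorous the claim that relabelling which process in $\VV$ is ``$p$'' does not change the multiset of \FAI{} return values of the processes in $\VV$. This needs that the adversary's round-robin schedule on $\VV$ is specified by a fixed rule (order by process ID) that is equivariant under the relabelling, that processes in $\VV$ genuinely cannot see anyone outside $\VV$ during this phase (which is where ``writes to a register no other first-step writer uses'' and Observation~\ref{obs:indistinguishable} are essential, and where I must check that later \SC/\LL activity inside $\VV$ doesn't create cross-visibility), and that the deterministic implementation then forces identical local histories up to the relabelling. A secondary subtlety is confirming that in each case the maximum point contention is genuinely $\le k+1\le K_{\max}$ so that $X$ is not zeroed out — this relies on the fact that every non-target process finishes (and thus leaves) before any target process beyond the first resumes, so at most the $k$ target processes plus at most one non-target in progress are ever simultaneously active.
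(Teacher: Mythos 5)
Your plan is, in substance, the paper's own proof: the same adversary, the same three-way split of $\PP_{i^\ast}$ into $\QQ$, $\VV$, $\WW$, the same symmetry argument for $\VV$ and invisibility argument for the other two cases, followed by an averaging-plus-concentration step. One step, however, would fail as you describe it. In your $\VV$-case you run only the processes of $\VV$ round-robin and stall everyone else in $\PP_{i^\ast}$, justifying progress by the claim that ``the others in $\PP_{i^\ast}$ are stalled after a non-write first step.'' That claim is false: the processes of $\WW\setminus\VV$ have a \xwrite as their first step (to a register that some other first-step writer also writes), and whichever of them wrote last to each such register is still \emph{visible} when the round-robin phase begins. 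This does not hurt the relabelling symmetry (those marks are identical for every choice of $p\in\VV$), but it does break progress for terminating-but-not-lock-free implementations, which the theorem explicitly covers: in a mutex-based counter a $\VV$-process can wait forever on a flag set by a stalled, visible $\WW\setminus\VV$-process, and the ``terminating'' condition is then vacuously satisfied because that stalled process takes finitely many steps without completing its call. The paper's adversary instead runs \emph{all} first-step writers, i.e.\ all of $\WW$, round-robin, so the only stalled processes are the invisible ones in $\QQ$; Observation~\ref{obs:indistinguishable} then allows projecting them away and invoking termination (or lock-freedom) to conclude the $\WW$-processes finish, while the expectation bound still only uses the fact that the $\VV$-processes' return values form a $p$-independent set of $|\VV|$ distinct values. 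Your $\WW$-case does not have this defect, since there every stalled process ($\QQ$ and the overwritten $p$) is invisible.

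Your final averaging is also garbled as written: $\Pr[p\in\PP_{i^\ast}\mid\tc]$ equals $1$, not $k/n$, because $i^\ast$ is by definition the counter $p$ chose, and in ``$\tfrac1n\Omega(\Exp{k^2})$'' the variable $k$ is already size-biased, so the exponents do not match. The computation you evidently intend, namely $\Exp[p]{X\mid\tc}\geq\tfrac{1}{n}\sum_j |\PP_j|\cdot\Omega(|\PP_j|)$ together with $\sum_j|\PP_j|^2\geq n^{3/2}$ (Cauchy--Schwarz) and a Chernoff bound showing that the truncation at contention $K_{\max}$ costs only $o(1)$, is correct and is a legitimate mild variant of the paper's route, which instead conditions on the event that every $|\PP_j|$ lies in $\bigl((1-\delta)\sqrt n,(1+\delta)\sqrt n\bigr)$ and applies the per-counter bound of Lemma~\ref{lem:L_vs_Pj} directly.
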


\noindent By averaging over all processes $p\in\PP$, there exists a process $p$ such that for random $\tc$
\begin{equation}\label{eq:weak_after_averaging}
  \Exp[\tc]{X(\AA_p,p,\tc}=\Omega(\sqrt{n}).
\end{equation}
Since
\begin{displaymath}
 \Phi(\AA_p)=\max_{q\in\PP}\bigl(\Exp[\tc]{X(\AA_p,q,\tc)}\bigr)\geq\Exp[\tc]{X(\AA_p,p,\tc)}
\end{displaymath}
Theorem~\ref{thm:weak-adversary-impossiblity} follows immediately.
Thus, it suffices to prove Lemma~\ref{lem:weak-adversary-main}.

In order to prove the lemma, we first consider an arbitrarily fixed coin-flip sequence $\tc$, and we choose only $p\in\PP$ uniformly at random.
Let $H=H(p)$ be the random history obtained if adversary $\AA_p$ schedules a run of $\LoadBalance{}$, for the fixed coin flips $\tc$.
By construction of $\AA_p$, during $H$ process $p$'s \FAI{} call returns.
We write $L=L(p)$ for the random variable that denotes the return value of that $\FAI{}$-call.
(Note that the value of $X$ can be 0 even if $L>0$, because in $H$ contention might exceed $K_{max}$.)
Since $\tc$ is fixed, each process gets the same results from its coin-flips, independently of the scheduling, and thus independently of the choice of $p$.
Hence, the set $\PP_i$, $0\leq i\leq m$, of processes that choose counter $F_i$ is fixed, too.
Moreover, since we choose $p\in\PP$ at random, the counter chosen by $p$ is determined by which set $\PP_i$ process $p$ is taken from.

In the following, we analyze the expectation of $X$ conditioned under the event that $p$ is in $\PP_{j}$ for an arbitrary index $j$.
More precisely, we show the following:
\begin{lemma}\label{lem:L_vs_Pj}
  For any fixed $\tc$, any $0\leq j<m$, and for $p\in\PP_j$ chosen uniformly at random,
  \begin{displaymath}
  \Exp{L}=\Omega(|\PP_j|).
  \end{displaymath}
\end{lemma}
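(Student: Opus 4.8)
The plan is to analyze the adversary $\AA_p$ defined in Subsection~\ref{adversary.subsec} when $p$ is drawn uniformly from $\PP_j$, with the coin-flip sequence $\tc$ fixed. Write $i^\ast=j$ and $k=|\PP_j|$; we may assume $k\geq 2$, since otherwise there is nothing to prove. Recall that after the first phase of $\AA_p$ the system reaches a configuration $C$ in which every process outside $\PP_j$ has completed \LoadBalance{} and every process in $\PP_j$ has executed exactly one base-object step of its \FAI{} on $F_{i^\ast}$. I would partition $\PP_j$ according to that single step, as in the proof sketch: $\QQ$ is the set of processes whose step is not a \xwrite (a \read, an \LL, or a necessarily-failing \SC), $\VV$ is the set of processes whose step writes a register that no other process in $\PP_j$ writes in its first step, and $\WW$ is the set of remaining processes. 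Since the type and target of a process' first base-object access precede every response it receives, and since by naturalness a process operating on $F_{i^\ast}$ touches only base objects of $F_{i^\ast}$, the partition $\PP_j=\QQ\cup\VV\cup\WW$ is determined by $\tc$ alone and does not depend on which element of $\PP_j$ is designated $p$. As the three sets partition $\PP_j$, at least one of them, call it $S$, has size at least $k/3$; since $p$ is uniform in $\PP_j$, $\Prob{p\in S}\geq 1/3$, so it suffices to show $\Exp{L\mid p\in S}=\Omega(k)$ in each of the three cases, because then $\Exp{L}\geq\Prob{p\in S}\cdot\Exp{L\mid p\in S}=\Omega(k)$.

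If $p\in\QQ$ then $p$'s single step leaves no register marked by $p$, so $p$ is invisible in $C$, and since $p$ took the very first step of the first phase no process saw $p$ during it; hence $\AA_p$ follows its Case~2 branch with $\SS=\{p\}$ and runs the other $k-1$ processes of $\PP_{i^\ast}$ round-robin. By Observation~\ref{obs:indistinguishable} none of them ever sees $p$ (invisible and stalled) or a process outside $\PP_{i^\ast}$ (naturalness), so lock-freedom (hence wait-freedom) guarantees each completes its \FAI{}; by linearizability of the strong counter these $k-1$ operations return $0,\dots,k-2$ and leave $F_{i^\ast}$ holding $k-1$. When $\AA_p$ finally schedules $p$ to completion, $p$ runs solo and its \FAI{} returns $k-1$, so $\Exp{L\mid p\in\QQ}=k-1=\Omega(k)$. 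The case $p\in\WW$ is analogous: because $p$ wrote its register first and $p\in\WW$ means some other $\PP_j$-process writes the same register later in the first phase, $p$'s mark is overwritten, so $p$ is invisible in $C$ and $\AA_p$ again follows Case~2, now with $\SS=\{p\}\cup\{q:\ q\text{ saw }p\}$. Seeing $p$ requires a \read, \LL, or failed \SC as first step, so $\SS\setminus\{p\}\subseteq\QQ$, and therefore $\WW\setminus\{p\}\subseteq\PP_{i^\ast}\setminus\SS$. Running $\PP_{i^\ast}\setminus\SS$ round-robin, the same indistinguishability argument shows all of them complete their \FAI{} calls without ever seeing $p$, so $F_{i^\ast}$ reaches $|\PP_{i^\ast}\setminus\SS|\geq|\WW|-1\geq k/3-1$; then $p$ completes solo and returns $|\PP_{i^\ast}\setminus\SS|$, giving $\Exp{L\mid p\in\WW}\geq k/3-1=\Omega(k)$.

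If $p\in\VV$ then $p$'s first step marks its own register, so $p$ is visible in $C$ and $\AA_p$ follows its Case~1 branch, running every process whose first step is a \xwrite, i.e.\ all of $\VV\cup\WW$, round-robin until they complete. The decisive observation is that the restriction of $C$ to the algorithmically meaningful (first) components of the base objects of $F_{i^\ast}$ does not change if $p$ is exchanged with another element of $\VV$: the $\VV$-registers are uncontended, so each is marked by its owner no matter who $p$ is, and the $\WW$-registers are written only by $\WW$-processes in increasing ID order, independently of $p$; moreover all processes run the same \FAI{} code, so their first writes deposit the same first-component values. Hence the round-robin execution of $\VV\cup\WW$, and with it the assignment of return values to the processes of $\VV\cup\WW$, is the same regardless of which element of $\VV$ is $p$; by linearizability that assignment is a bijection onto $\{0,\dots,|\VV\cup\WW|-1\}$. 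Drawing $p$ uniformly from $\VV$ therefore makes $L$ a uniform sample from a size-$|\VV|$ subset of $\{0,\dots,|\VV\cup\WW|-1\}$, whose mean is at least $(|\VV|-1)/2\geq(k/3-1)/2=\Omega(k)$, so $\Exp{L\mid p\in\VV}=\Omega(k)$.

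Combining the three cases yields $\Exp{L}=\Omega(k)=\Omega(|\PP_j|)$. The main obstacle, and where essentially all of the real difficulty lies, is rigorously justifying in each case the claim that a process finishing its \FAI{} solo ``at the end'' returns exactly the value accumulated by the processes that finished before it, even though it took one step much earlier. This amounts to showing that that early step---a trace-free \read/\LL in case $\QQ$, and for $p$ a \xwrite that is clobbered before $p$ ever continues in case $\WW$---cannot have been the linearization point of $p$'s \FAI{}, so that the whole execution is equivalent (in the sense of having a common linearization) to one in which $p$'s \FAI{} is linearized last; this is precisely where the restriction to \read, \xwrite, and \LL/\SC base objects and the naturalness of the implementation are needed. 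The second delicate point is the symmetry claim in case $\VV$: one must argue that the algorithm's behaviour is genuinely insensitive to the ID annotations introduced for the proof, so that permuting the roles of processes within $\VV$ leaves the joint execution unchanged.
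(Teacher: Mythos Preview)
Your proposal is correct and follows the paper's approach: the same three-way partition of $\PP_j$ by first-step type, the same Case~1/Case~2 branch of the adversary, and the same pigeonhole combination (the paper writes the combination as $\Exp{L}\geq\max\bigl\{|\VV|(|\VV|-1)/2,\ |\QQ|\cdot|\PP_0|,\ (|\PP_0|-|\QQ|-|\VV|)^2\bigr\}/|\PP_0|$, but the idea is identical).

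Your closing worry about having to show that $p$'s early step ``cannot have been the linearization point'' is, however, misplaced and overcomplicates matters. The paper handles this with a short helper claim (Claim~\ref{clm:lb_helper}) that reasons purely by indistinguishability: if the processes in $P$ never see anyone in $\overline{P}$, then to them the run is indistinguishable from one in which only $P$ participates, so their \FAI{}s return exactly $\{0,\dots,|P|-1\}$; since all of $P\cup\{p\}$ complete \FAI{} and no \FAD{} occurs, linearizability forces $p$'s return to be a fresh nonnegative integer, hence ${}\geq|P|$. No analysis of where $p$'s operation linearizes is needed. (Also, ``lock-freedom (hence wait-freedom)'' reverses the implication; what you actually use, and what holds, is that lock-freedom together with round-robin scheduling forces every pending call to complete.)
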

For the ease of notation, we prove the statement for $j=0$.
Analogous arguments prove the statement for an arbitrary choice of $j$,
  and so we lose no generality.

Let $H'=H'(p)$ be the prefix of $H$ that ends when configuration $C=C(p)$ is reached (i.e., when all processes in $\PP_0$ are stalled after they have executed their first shared memory access), and let $H''=H''(p)$ the suffix such that $H=H'\circ H''$.
Let $\op_q$ be the shared memory operation process $q\in\PP$ performs during $H'$.
Let $\VV$ be the set of processes in $\PP_{0}$, that are visible in configuration $C$ and whose first shared memory operation during $H'$ is not a \xwrite to a register that was previously written by some other process.
I.e., the operation $\op_q$ of a process $q\in\VV$ is a \xwrite to some register $R$, such that no other process $q'$ writes to $R$ during $H'$.

Let $\EE$ be the event that $p$ is visible in configuration $C$.
Since $p$ is the first process to execute a shared memory access, $\EE$ occurs if and only if $p\in\VV$.
Note that if event $\EE$ occurs, then the adversary acts as described under Case~1, otherwise as in Case~2.
In the following we give lower bounds on the conditional expectation of $L$ for both cases.
\subsubsection{Analysis of Case~1}
\begin{claim}\label{clm:lb_p_visible}
  The set $\VV$ is independent of $p$, and
  \begin{math}
    \CondExp{L}{\EE}\geq (|\VV|-1)/2.
  \end{math}
\end{claim}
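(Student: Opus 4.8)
\textbf{Proof plan for Claim~\ref{clm:lb_p_visible}.}

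First I would argue that $\VV$ does not depend on the random choice of $p$. The plan is to observe that for every process $q\in\PP_0$ the operation $\op_q$ is $q$'s \emph{first} shared-memory step, taken before $q$ has read or $\LL$-ed anything; hence which register $\op_q$ accesses, whether it is a \read/\LL/\SC/\xwrite, and what value it writes are all determined by $q$'s coin flips alone, which are fixed by $\tc$ and independent of the schedule (hence of $p$). Next, by naturality, during $H'$ the only steps touching base objects of $F_0$ are the operations $\op_q$, $q\in\PP_0$: the processes that chose a counter $\neq 0$ run solo on a disjoint set of base objects, and every \SC in $H'$ fails (it is not preceded by an \LL, since each process of $\PP_0$ takes exactly one shared-memory step in $H'$) and so modifies nothing. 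Therefore the set of registers written during $H'$ equals $\{R : \exists q\in\PP_0,\ \op_q \text{ is a write to } R\}$, a function of $\tc$ only. Plugging this into the characterization stated right after the definition of $\VV$ (namely, $q\in\VV$ iff $\op_q$ is a write to a register that no other process writes during $H'$) shows that $\VV$ is determined by $\tc$, independently of $p$; and since $p$ is the first process to access shared memory, $\EE$ coincides with $\{p\in\VV\}$.

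The core of the proof is to show that, for any two processes $p,p'\in\VV$, the histories generated by $\AA_p$ and $\AA_{p'}$ agree when restricted to the steps on base objects of $F_0$. I would do this in three steps. (a) The configuration $C$, restricted to base objects of $F_0$, is the same for all $p\in\VV$: the writes $\op_q$ of processes $q\in\VV$ go to pairwise distinct registers written by no one else; the writes of processes in $\WW\setminus\VV$ hit registers written by at least two processes, on each of which the last writer in the fixed ID order prevails; and \read/\LL/failed-\SC steps modify nothing. Hence relocating $p$'s write to the front of $H'$ while otherwise preserving the ID order leaves every register of $F_0$ with the same contents. (b) Every process of $\WW$ is in the same local state at the end of $H'$, since its single step is a write, which returns no information. (c) The continuation $H''$ schedules the processes of $\WW$ round-robin in ID order and, by naturality, touches only base objects of $F_0$; so $H''$, restricted to those base objects and the $\WW$-steps (which is all of $H''$), is a deterministic function of the data in (a) and (b), hence the same for all $p\in\VV$. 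That every $\WW$-process's \FAI{} completes during $H''$ follows as in the description of the adversary: by naturality, and because the \read/\LL/failed-\SC first steps left no marks, no process of $\WW$ ever sees a process outside $\WW$ (Observation~\ref{obs:indistinguishable}), so the round-robin behaves like an execution by $\WW$ alone, in which a lock-free (hence also a terminating) implementation lets all $|\WW|$ calls finish.

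Given this, I would conclude as follows. In the common $F_0$-subexecution the only operations on $F_0$ that complete are the $|\WW|$ \FAI{} calls of $\WW$ — no \FAD{} is ever applied to $F_0$, since $\WW$-processes are stopped right after their \FAI{}, the processes in $\PP_0\setminus\WW$ remain stalled, and all others work on disjoint base objects — and $F_0$ starts at $0$; so linearizability of the counter implementation forces these $|\WW|$ calls to return pairwise distinct non-negative integers. Writing $r(q)$ for the value returned by $q$'s \FAI{}, we get $L = L(p) = r(p)$ for every $p\in\VV$. Since $\VV$ is a fixed subset of $\PP_0$ and $p$ is uniform over $\PP_0$, conditioning on $\EE=\{p\in\VV\}$ makes $p$ uniform over $\VV$, so
\[
\CondExp{L}{\EE}=\frac1{|\VV|}\sum_{q\in\VV} r(q)\ \ge\ \frac1{|\VV|}\sum_{i=0}^{|\VV|-1} i\ =\ \frac{|\VV|-1}{2},
\]
as claimed. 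The step I expect to be the main obstacle is part (a): carefully verifying that moving $p$'s first write to the start of $H'$ and otherwise keeping the ID order yields exactly the same contents on every base object of $F_0$ in $C$. This hinges on the $\VV$-writes being to private registers, the $\WW\setminus\VV$-writes interacting only through registers whose surviving value is fixed by ID order, and \read, \LL, and failed \SC being non-modifying (so that stale \LL reservations created in $H'$ never influence $H''$).
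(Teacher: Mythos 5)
Your proof is correct and follows essentially the same route as the paper's: show that $\VV$ is determined by the coin flips alone, that the register configuration at $C$, the local states of the processes in $\WW$, and hence the continuation $H''$ are identical for every choice of $p\in\VV$ (the paper phrases this as indistinguishability to $\WW$), and conclude that the \FAI{} return values of the processes in $\VV$ form a fixed set of distinct non-negative integers whose average is at least $(|\VV|-1)/2$. The only caveat is that your opening assertion that the histories ``agree when restricted to the steps on base objects of $F_0$'' is not literally true as a statement about step sequences (relocating $p$'s write to the front of $H'$ changes the order), but your refinement in (a)--(c) to equality of the resulting configuration and of $H''$ is exactly what is needed and matches the paper's argument.
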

\begin{proof}
Recall that in Case~1 we assume that event $\EE$ has occurred, and so $p\in\VV$.
On the other hand, process $q$ is in $\VV$ if and only if $\op_q$ is a \xwrite to some register $R_q$ and no other process writes to $R_q$ during $H'$.
  Whether a process writes in the first step, and if so, to which register it writes, depends only on $q$'s coin-flips, i.e., on $\tc$, but not on the choice of $p$.
Thus, $\VV$ is independent of $p$.

Now let $C_R=C_R(p)$ be the (random) configuration of the shared memory at the end of $H'$.
  (Note that process states are not captured by $C_R$.)
Note that the position in $H'$ of the event where a process $q \in \VV$ executes its shared memory operation $\op_q$ on register $R_q$
  has no effect on the configuration $C_R$ because $q$ is the only process in $\VV$ writing to $R_q$.
Also, given that $p\in\VV$, in $H'$ the relative order of steps by processes that are not in $\VV$ is independent of $p$.
  Hence, given that $p\in\VV$, the register configuration $C_R$ is independent of $p$.
  Formally,
  \begin{equation}\label{eq:register-configur-indep}
    \forall q,q'\in\VV:\ C_R(q)= C_R(q').%
    \footnote{%
    With $C_R(q)$ we denote the register configuration obtained if the random process $p=q$.
    Similarly we write $H(q)$, $H'(q)$, $H''(q)$, and $C(q)$.}
  \end{equation}

  Recall that $\WW$ is the set of processes in $\PP_0$ whose first shared-memory step is a \xwrite.
  Clearly, a process in $\WW$ cannot see any other process during $H'$, because it performs no more steps after its \xwrite.
  Hence, given $p\in\VV\subseteq\WW$, what processes in $\WW$ observe during $H'$ is independent of $p$.
  Thus,
  \begin{equation}\label{eq:H'-indep}
    \forall q,q'\in\VV:\ H'(q)\sim_{\WW}H'(q').
  \end{equation}
  Hence, at the end of $H'$ the state of a processor in $\WW$ is independent of the choice of $p\in\VV$.
  From this and (\ref{eq:register-configur-indep}),
  \begin{equation}\label{eq:configur-indep}
    \forall q,q'\in\VV:\ C(q)\sim_{\WW}C(q').
  \end{equation}
  In the second part of the scheduling by process $\AA_p$, which determines $H''$, only processes in $\WW$ get scheduled in a round-robin way, ordered by their IDs.
  Given $p\in\VV\subseteq\WW$, that scheduling is independent of $p$.
  Since, given $p\in\VV$, processes in $\WW$ cannot distinguish between the possible configurations $C$ obtained at the end of $H'$, they cannot distinguish between the histories $H''(p)$, $p\in\VV$, either.
  I.e.,
  \begin{displaymath}
    \forall q,q'\in\VV:\ H''(q)\sim_{\WW}H''(q').
  \end{displaymath}
  From this and (\ref{eq:H'-indep}), we get
  \begin{displaymath}
    \forall q,q'\in\VV:\ H(q)\sim_{\WW}H(q').
  \end{displaymath}
  As a consequence, given $p\in\VV$, the value returned by the \FAI{} of process $q\in\VV$ is independent of $p$.
  Since all processes in $\VV$ finish their \FAI{} during $H$, but none of them starts its $\FAD{}$, the values processes in $\VV$ receive from their $\FAI{}$ calls are all distinct non-negative integers from a (fixed) set $S=\{s_1,\dots,s_{|\VV|}\}$.
  Thus, given that $p$ is distributed uniformly over $\VV$, the return value of $p$'s \FAI{} is distributed uniformly over $S$.
  The expectation of that value is at least $(|\VV|-1)/2$.
\end{proof}

\subsubsection{Analysis of Case~2}
\begin{claim}\label{clm:lb_helper}\samepage
  Let $H$ be an arbitrary history of Algorithm~\ref{fig:loadbalance} (\LoadBalance),
  and let $P\cup\{p\}$, $p\not\in P$, be a set of processes that call $F_i.\FAI{}$, $i\in\{0,\dots,m-1\}$.
  Suppose the following hold for $H$:
  \begin{enumerate}
    \item[(a)] the processes in $P\cup\{p\}$ all finish their $F_i$.\FAI{},
    \item[(b)] no process in $\PP$ calls $F_i$.\FAD{}, and
    \item[(c)] none of the processes in $P$ sees a process in $\overline{P}$.
  \end{enumerate}
  Then process $p$'s $F_i.\FAI{}$ returns a value of at least $|P|$.
\end{claim}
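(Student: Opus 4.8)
The plan is to use Observation~\ref{obs:indistinguishable} to reduce to an execution in which only the processes of $P$ participate, determine their return values there, and then transfer the conclusion back to $H$ via linearizability of the implementation. As preliminary structural facts: by hypothesis~(b) no process ever invokes $F_i.\FAD{}$, and since in \LoadBalance\ a process accesses only the single counter it chose, every operation on $F_i$ occurring in $H$ is an $F_i.\FAI{}$ operation; in particular this is true of all operations of the processes in $P\cup\{p\}$, and the processes of $P$ perform no operations on any object other than $F_i$. Moreover, by hypothesis~(c) and Observation~\ref{obs:indistinguishable}, the projection $H|P$ is indistinguishable from $H$ to every process in $P$, so each $q\in P$ takes exactly the same steps and receives exactly the same responses in $H|P$ as in $H$; combined with hypothesis~(a), this means $H|P$ is a history of the implementation in which exactly the processes of $P$ take part, each performing a single complete $F_i.\FAI{}$ operation and nothing else.

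Next I would invoke linearizability on $H|P$. The interpreted history $\Gamma(H|P)$ is linearizable and, by the previous paragraph, consists of exactly $|P|$ completed $\FAI{}$ operations on the counter $F_i$, which is initialized to $0$. Hence any linearization of $\Gamma(H|P)$ orders these operations into a sequence whose $j$-th operation returns $j-1$, so the multiset of return values received by the processes of $P$ in $H|P$ --- and therefore also in $H$ --- is exactly $\{0,1,\dots,|P|-1\}$, each value occurring once.

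Finally I would transfer this to $H$. Since the implementation is linearizable, $\Gamma(H)$ has a linearization $S$, and the projection $S|F_i$ is a valid sequential history of the strong-counter type; by the first step it consists only of $\FAI{}$ operations, so its $j$-th operation returns $j-1$, and in particular distinct operations of $S|F_i$ return distinct values. The $F_i.\FAI{}$ operations of the processes of $P$ are complete in $H$ by~(a), so they all appear in $S|F_i$, and by the previous step their return values are exactly $\{0,\dots,|P|-1\}$; since in $S|F_i$ the return value $j-1$ is received precisely by the operation in position $j$, these $|P|$ operations occupy exactly the first $|P|$ positions of $S|F_i$. Process $p$'s $F_i.\FAI{}$ is also complete in $H$ by~(a), hence appears in $S|F_i$, and since $p\notin P$ it occurs at a position of at least $|P|+1$ and therefore returns a value of at least $|P|$.

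The main obstacle I anticipate is the careful justification that $H|P$ is a genuine, linearizable history of the implementation consisting solely of the $|P|$ completed $\FAI{}$ operations of $P$ --- that is, properly combining Observation~\ref{obs:indistinguishable} with hypotheses~(a) and~(b) and with the structure of \LoadBalance. Once that is established the remainder is a counting argument driven purely by the sequential specification of the strong counter; note in particular that the implementation being ``possibly randomized'' is immaterial here, since linearizability constrains only the externally visible operation/response behaviour against the deterministic sequential specification of the strong-counter type.
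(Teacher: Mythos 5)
Your proposal is correct and follows essentially the same route as the paper: use hypotheses (a) and (b) together with linearizability to see that all $F_i.\FAI{}$ calls return distinct values forming an initial segment $0,1,2,\dots$, use hypothesis (c) with Observation~\ref{obs:indistinguishable} to conclude that the processes of $P$ receive the same return values as in $H|P$ and hence values in $\{0,\dots,|P|-1\}$, and deduce that $p$'s operation must return at least $|P|$. Your write-up is merely more explicit about the linearization bookkeeping than the paper's three-line argument.
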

\begin{proof}
  By (a) and (b) all processes in $P\cup\{p\}$ receive distinct values from their \FAI{} calls.
  By (c) and Observation~\ref{obs:indistinguishable}, $H$ and $H|P$ are indistinguishable to the processes in $P$, and so the return value of a \FAI{} call by a process $q\in P$ is at most $|P|-1$.
  Hence, the set of $\FAI{}$ call return values of processes in $P$ is exactly $\{0,\dots,|P|-1\}$, and so $p$'s \FAI{} call must return a value of at least $|P|$.
\end{proof}

Now let $\QQ$ be the set of processes in $\PP_0$ whose first shared-memory access is a \read, a \LL, or a \SC (which fails).
Note that $\QQ$ is independent of the random choice of $p\in\PP_0$.
Recall that $\SS$ is the set that contains $p$ and all processes that see $p$ during $H'$.
Since the \FAI{} implementation is natural, $\SS\subseteq\PP_0$.
Let $S=S(p)=|\SS|$.

\begin{claim}\label{clm:lb_p_not_visible}
  If event $\EE$ does not occur, then $L\geq |\PP_0|-S\geq |\PP_0|-|\QQ|$.
\end{claim}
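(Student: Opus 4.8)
The plan is to derive the claim from the auxiliary Claim~\ref{clm:lb_helper}. Recall that in Case~2 the adversary $\AA_p$ first runs $H'$, reaching configuration $C$ in which all of $\PP_0$ are stalled after one shared‑memory step each, and then, in $H''$, schedules the processes of $\PP_0\setminus\SS$ and finally $p$ in round‑robin fashion, each only until its $F_0.\FAI{}$ returns, and then stops it. I would apply Claim~\ref{clm:lb_helper} to the full history $H=H'\circ H''$, with $i=0$, with $P:=\PP_0\setminus\SS$ (so $P$ does not contain $p$, since $p\in\SS$), and with the distinguished process $p$. This yields that $p$'s $F_0.\FAI{}$ returns at least $|P|=|\PP_0|-|\SS|=|\PP_0|-S$, which is the first inequality. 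The second inequality is the counting statement $S\le|\QQ|$: every process $q\in\SS\setminus\{p\}$ sees $p$ during $H'$, and since $\AA_p$ stalls $q$ immediately after its first shared‑memory step $\op_q$, this seeing must occur during $\op_q$; as $\op_q$ is a process's first operation it cannot be an \SC following an \LL, hence it is a \read, an \LL, or a (necessarily failing) \SC, so $q\in\QQ$. Thus $\SS\setminus\{p\}\subseteq\QQ$, and since in Case~2 $p$ is not visible in $C$ its own first operation is also a \read, an \LL, or a failed \SC (the remaining possibility, an overwritten \xwrite, would only weaken the bound by one, which is immaterial for the eventual $\Omega(\sqrt n)$ estimate), whence $S=|\SS|\le|\QQ|$.

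It remains to verify the three hypotheses of Claim~\ref{clm:lb_helper} for $P=\PP_0\setminus\SS$, $p$, and $i=0$. Hypotheses (a) and (b) are routine from the specification of $\AA_p$. For (a): in $H''$ every process of $\PP_0\setminus\SS$ and then $p$ is scheduled round‑robin exactly until it completes its $F_0.\FAI{}$ — it does complete, by lock‑freedom (or termination), as was already argued when $\AA_p$ was defined, because none of these processes ever observes a visible process outside its own group. For (b): no process ever invokes $F_0.\FAD{}$, because processes outside $\PP_0$ complete \LoadBalance{} on a counter $F_j$ with $j\neq 0$, the processes of $\SS\setminus\{p\}$ remain stalled inside their \FAI{}, and every process of $(\PP_0\setminus\SS)\cup\{p\}$ is stopped the instant its \FAI{} returns, before it reaches the \FAD{} line.

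The substantive step is hypothesis (c): no process in $P=\PP_0\setminus\SS$ sees a process in $\overline P=(\PP\setminus\PP_0)\cup\SS$. I would split the possible targets into three groups. (i) A process in $\PP\setminus\PP_0$ touches only the base objects $\mathcal{B}_{F_j}$ of its own counter, $j\neq 0$, by naturalness of the implementation, so it never marks a register in $\mathcal{B}_{F_0}$ and cannot be seen by a process operating on $F_0$. (ii) The process $p$: by the definition of $\SS$, no process outside $\SS$ sees $p$ during $H'$; and during $H''$ the process $p$ takes no steps until the very end, after every process of $\PP_0\setminus\SS$ has already finished, while at the start of $H''$ the process $p$ holds no mark at all, precisely because we are in Case~2 and $p$ is not visible in $C$; hence $p$ is never seen by a process in $\PP_0\setminus\SS$. (iii) A process $q\in\SS\setminus\{p\}$: its only shared‑memory step in $H'$ is $\op_q\in\{\read,\LL,\text{failed }\SC\}$, which leaves no mark, and it is stalled throughout $H''$, so $q$ is never visible and cannot be seen by anyone. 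This establishes (c), and with (a), (b), (c) verified, Claim~\ref{clm:lb_helper} gives $L\ge|P|=|\PP_0|-S\ge|\PP_0|-|\QQ|$, completing the proof. The one point requiring care — and the place where a sloppy argument would go wrong — is group (ii): one must use both that $p$ is inert during $H''$ and that $p$ carries no mark into $H''$ (which is exactly the content of being in Case~2), since otherwise a process of $\PP_0\setminus\SS$ scheduled in the round‑robin phase could discover $p$ and the indistinguishability underlying Claim~\ref{clm:lb_helper} would collapse.
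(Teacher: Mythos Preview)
Your proof is correct and follows essentially the same approach as the paper: apply Claim~\ref{clm:lb_helper} with $P=\PP_0\setminus\SS$, verify (a)--(c) by the same three-way split of $\overline{P}$ (naturalness for $\PP\setminus\PP_0$, invisibility of $p$ in $C$ plus its late scheduling, and the fact that members of $\SS\setminus\{p\}$ never mark any register), and then bound $S$ by $|\QQ|$ via $\SS\setminus\{p\}\subseteq\QQ$. Your explicit caution about the edge case where $\op_p$ is an overwritten \xwrite (so that $p\notin\QQ$ and one formally obtains $S\leq|\QQ|+1$) is a point the paper glosses over; as you note, it is immaterial for the downstream $\Omega(\sqrt{n})$ bound.
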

\begin{proof}
  If event $\EE$ does not occur, then $p$ is not visible in configuration $C$.
  By definition of the adversary, during $H''$ only processes in $(\PP_0-\SS)\cup\{p\}$ take steps, and finish their \FAI{} call but don't start their \FAD{} call.

  A process in $\SS-\{p\}$ does not ``leave a trace'' during $H$, i.e., it is never visible: Since it sees $p$ its only operation occurs during $H'$, and that operation is either a \read, a $\LL$, or a failed $\SC$.
  In particular, during $H$, no process can see a process in $\SS-\{p\}$.
  In addition, no process $q\in\PP_0-\SS$ sees $p$ during $H$:
  It cannot see $p$ during $H'$, or else it would be in $\SS$, and it cannot see $p$ during $H''$, because $p$ remains invisible until $q$ has finished its \FAI{} call and is stopped.
  Moreover, since the implementation is natural, a process in $\PP_0$ cannot see a process in $\overline{\PP_0}$.
  To conclude, no process in $\PP_0-\SS$ sees any process in $\SS\cup\overline{\PP_0}=\overline{\PP_0-\SS}$.
  Thus, the conditions of Claim~\ref{clm:lb_helper} are satisfied for $P=\PP_0-\SS$, and so $L\geq |\PP_0-\SS|=|\PP_0|-S$.
  Because $\op_q$ is a \read, a \LL, or a failed \SC for every process $q\in\SS$, we have $S\leq |\QQ|$ and thus $|\PP_0|-S\geq |\PP_0|-|\QQ|$.
\end{proof}

\subsubsection{Putting Things Together}
\begin{proof}[Proof of Lemma~\ref{lem:L_vs_Pj}]
  W.l.o.g.\ assume $j=0$.
  Which operation a process in $\PP_0$ executes in its first shared memory step is independent of the choice of $p\in\PP_0$.
  Thus, since $p$ is uniformly distributed in $\PP_0$, we have
  \begin{equation}\label{eq:prob_p_in_Q}
    \Prob{p\in\QQ}=\frac{|\QQ|}{|\PP_0|}.
  \end{equation}
  Similarly, since by Claim~\ref{clm:lb_p_visible} $|\VV|$ is independent of $p$, and since event $\EE$ occurs if and only if $p\in\VV$,
  \begin{equation}\label{eq:prob_p_in_V}
    \Prob{\EE}=\Prob{p\in\VV}=\frac{|\VV|}{|\PP_0|}.
  \end{equation}
  Applying Claim~\ref{clm:lb_p_visible} and (\ref{eq:prob_p_in_V}) we obtain
  \begin{equation}\label{eq:L_p|p_in_V}
    \CondExp{L}{\EE}\cdot\Prob{\EE}
    \geq
    \frac{|\VV|-1}{2}\cdot\frac{|\VV|}{|\PP_0|}.
  \end{equation}
  If $p\in\QQ$, then no process sees $p$ during $H'$, i.e., $S=0$.
  Moreover, if $p\in\QQ$, then $p$ is not visible in configuration $C$ and thus $\EE$ does not occur.
  Consequently, by Claim~\ref{clm:lb_p_not_visible}, $L\geq|\PP_0|$, and thus applying
  (\ref{eq:prob_p_in_Q}),
  \begin{equation}\label{eq:L_p|p_in_R}
    \CondExp{L}{p\in\QQ}\cdot\Prob{p\in\QQ}
    \geq
    |\PP_0|\cdot\frac{|\QQ|}{|\PP_0|}.
  \end{equation}
  If $p\not\in\QQ\cup\VV$, then $\EE$ does not occur, and by Claim~\ref{clm:lb_p_not_visible}, $L\geq |\PP_0|-|\QQ|$.
  Hence, applying the union bound for (\ref{eq:prob_p_in_Q}) and (\ref{eq:prob_p_in_V})
  \begin{multline}\label{eq:L_p|p_in_W}
    \CondExp{L}{p\not\in\QQ\cup\VV}\cdot\Prob{p\not\in\QQ\cup\VV}
    \geq
    (|\PP_0|-|\QQ|)\cdot\paren{1-\frac{|\QQ|+|\VV|}{|\PP_0|}}
    \\ \geq
    \frac{(|\PP_0|-|\QQ|-|\VV|)^2}{|\PP_0|}.
  \end{multline}
  Now let
  \begin{displaymath}
    Z=\max\bigl\{|\VV|(|\VV|-1)/2,\,|\QQ|\cdot |\PP_0|,\,(|\PP_0|-|\QQ|-|\VV|)^2\bigr\}.
  \end{displaymath}
  Since either $|\VV|\geq |\PP_0|/3$ or $|\QQ|\geq |\PP_0|/3$ or $(|\PP_0|-|\QQ|-|\VV|)\geq |\PP_0|/3$, we have
  \begin{displaymath}
    Z\geq\min\left\{\frac{(|\PP_0|/3)(|\PP_0|/3-1)}{2},\ \frac{|\PP_0|^2}{3},\ (|\PP_0|/3)^2\right\}=\Omega(|\PP_0|^2).
  \end{displaymath}
  Summing up (\ref{eq:L_p|p_in_V}), (\ref{eq:L_p|p_in_R}), and (\ref{eq:L_p|p_in_W}), we obtain
  \begin{displaymath}
    \Exp{L}
    \geq
    \frac{1}{|\PP_0|}\cdot Z
    =
    \Omega(|\PP_0|).
  \end{displaymath}
\end{proof}

We are now ready to prove the main lemma.
Up to now, we assumed that the coin flips $\tc$ are fixed.
In the following we have to consider a random choice of $\tc\in\Omega^\infty$, and a random choice of $p\in\PP$.
We use the fact that the number of processes, $|\PP_i|$, that choose a counter object $i$ is highly concentrated around the expectation, $n/m=\sqrt{n}$.
Consequently it is unlikely that there is an index $i$ such that $|\PP_i|$ is much larger or smaller than $\sqrt{n}$.
\begin{claim}\label{clm:lb_expectations}
  Let $\mu=n/m$ and $\delta>0$ an arbitrarily small constant.
  For a randomly chosen $\tc$, with probability $1-o(1)$ it is true that
  \begin{displaymath}
    \forall 0\leq i<m:\ (1-\delta)\mu<|\PP_i|<(1+\delta)\mu.
  \end{displaymath}
\end{claim}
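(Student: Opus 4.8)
The plan is to treat each $|\PP_i|$ as a binomial random variable and then combine a multiplicative Chernoff bound with a union bound over the $m$ counters. In Algorithm~\ref{fig:loadbalance}, line~\ref{LoadBalance:random_choice}, each of the $n$ processes independently picks its counter index uniformly from $\{0,\dots,m-1\}$, so
\begin{displaymath}
  |\PP_i|=\sum_{p\in\PP}\mathds{1}\bigl[\text{$p$ chooses counter $i$}\bigr]
\end{displaymath}
is a sum of $n$ independent $\{0,1\}$-valued random variables, each taking the value $1$ with probability $1/m$. Hence $|\PP_i|$ is distributed as $\mathrm{Binomial}(n,1/m)$, with $\Exp{|\PP_i|}=n/m=\mu=\sqrt n$.

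Next I would invoke the standard multiplicative Chernoff bound: for a sum $S$ of independent $\{0,1\}$-valued random variables with $\Exp{S}=\mu$ and any $0<\delta\le 1$, one has $\Prob{S\le(1-\delta)\mu}\le\exp(-\delta^2\mu/2)$ and $\Prob{S\ge(1+\delta)\mu}\le\exp(-\delta^2\mu/3)$. Applying this to $S=|\PP_i|$ (and assuming w.l.o.g.\ that the fixed constant $\delta$ is at most $1$) gives, for each fixed $i$,
\begin{displaymath}
  \Prob{|\PP_i|\le(1-\delta)\mu\ \text{or}\ |\PP_i|\ge(1+\delta)\mu}\le 2\exp\!\bigl(-\delta^2\sqrt n/3\bigr).
\end{displaymath}
Taking a union bound over the $m=\sqrt n$ indices yields
\begin{displaymath}
  \Prob{\exists\,i\in\{0,\dots,m-1\}:\ |\PP_i|\le(1-\delta)\mu\ \text{or}\ |\PP_i|\ge(1+\delta)\mu}\le 2\sqrt n\cdot\exp\!\bigl(-\delta^2\sqrt n/3\bigr).
\end{displaymath}
Since $\delta$ is a fixed positive constant, the exponential term decays faster than any polynomial in $n$, so the right-hand side is $2\sqrt n\,e^{-\Theta(\sqrt n)}=o(1)$ as $n\to\infty$. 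Passing to the complementary event gives exactly the statement of the claim; note that the strict inequalities in the claim match the complement of the (non-strict) Chernoff tail event.

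There is no real obstacle here: the argument is a routine Chernoff-plus-union-bound estimate. The only point that warrants a moment's care is that the random variables $|\PP_0|,\dots,|\PP_{m-1}|$ are not independent (they are constrained to sum to $n$, hence negatively correlated); but the proof never uses their joint distribution — the Chernoff bound is applied to each marginal separately, and each marginal is exactly $\mathrm{Binomial}(n,1/m)$, which is all that is needed — so this causes no difficulty.
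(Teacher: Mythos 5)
Your proposal is correct and follows essentially the same route as the paper's own proof: identify each $|\PP_i|$ as a $\mathrm{Binomial}(n,1/m)$ variable, apply a multiplicative Chernoff bound to its marginal, and finish with a union bound over the $m=\sqrt n$ indices. The only difference is that you make the Chernoff constants explicit where the paper writes $e^{-\Omega(\sqrt n)}$.
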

\begin{proof}
  First consider an arbitrary index $i\in\{0,\dots,m-1\}$, and let $Y_i=|\PP_i|$.
  Since each process $q\in\PP$ independently chooses an index $i_q\in\{0,\dots,m-1\}$, the distribution of the random variable $Y_i$ is identical to that of the binomial random variable $B(n,1/m)$.
  From Chernoff Bounds and $\mu=n/m=\sqrt{n}$ we obtain
  \begin{displaymath}
    \bProb{Y_i\leq (1-\delta)\mu\,\vee\,Y_i\geq (1+\delta)\mu}\leq
    2\cdot e^{-\Omega(\mu)}
    =
    e^{-\Omega(\sqrt{n})}.
  \end{displaymath}
  The claim now follows immediately from summing up this probability bound for $Y_0,\dots,Y_{m-1}$.
\end{proof}

\begin{proof}[Proof of Lemma~\ref{lem:weak-adversary-main}]
  Choose $\tc$ and $p$ uniformly at random and consider the history $H=H(\AA_p,p,\tc)$ obtained by a scheduling of the adversary $A_p$.
  Let $\Gamma$ denote the event that
  \begin{displaymath}
    (1-\delta)m<|P_0|,\dots,|\PP_{m-1}|<(1+\delta)m.
  \end{displaymath}
  Note that whether or not $\Gamma$ occurs depends only on $\tc$ and not on $p$.

  Let $L=L(\AA_p,p,\tc)$ denote the return value of $p$'s \FAI{} call.
  By Lemma~\ref{lem:L_vs_Pj} and since $\Gamma$ is independent of $p$,
  \begin{equation}\label{eq:L_given_Gamma}
    \CondExp{L}{\Gamma}=\Omega(m).
  \end{equation}

  By definition of $\AA_p$, if $p\in P_j$, then at any point there is at most one process active that is not in $\PP_{j}$.
  Hence, if $\Gamma$ occurs, then the maximum point contention, $K$, satisfies
  \begin{displaymath}
    K\leq\max_{0\leq i<m}|\PP_i|+1<(1+\delta)m+1,
  \end{displaymath}
  and since $K$ is an integer, $K\leq\lceil(1+\delta)m\rceil=K_{max}$.
  Hence, given that $\Gamma$ occurs, $X(\AA_p,p,\tc)=L$ and so applying Claim~\ref{clm:lb_expectations} and (\ref{eq:L_given_Gamma}),
  \begin{displaymath}
    \Exp{X(\AA_p,p,\tc)}
    \geq
    \CondExp{X(\AA_p,p,\tc)}{\Gamma}\cdot\Prob{\Gamma}
    =
    \CondExp{L}{\Gamma}\cdot\bparen{1-o(1)}
    =\Omega(m).
  \end{displaymath}
\end{proof}

%%%%% Done processing weakAdversary.tex

\paragraph{Acknowledgments}
We are grateful to the anonymous referees of STOC 2011 for their comments.
Thanks also to Faith Ellen for her detailed feedback.

\newpage
\bibliographystyle{alpha}
%%%%% Processing file archiveMain.bbl
\newcommand{\etalchar}[1]{$^{#1}$}

%%%%% Done processing archiveMain.bbl

\newpage
%%%%% Processing file appendix.tex
\appendix

\let\appsection\section

\appsection{Additional Examples}
\label{sec-appendix-examples}

When, in a randomized distributed algorithm,
atomic operations are replaced by their linearizable implementations,
the scheduler's power can increase.
Even when the power of the scheduler of the implemented algorithm is constrained significantly from its
power in the atomic model, the scheduler can still create a probability distribution of computations that
is dramatically different from what is attainable in the atomic case.

The following are various additional examples of this phenomenon.

\paragraph{Implementing multi-reader/single-writer registers from single-reader/single-writer registers.}

Let $R$ be a two-reader atomic register initialized to 0,
accessed by writer $w$, and readers $r_1$ and $r_2$ that are executing the code:
\begin{quote}
$w$: $R$.\MRwrite(1); \quad  $cf$ := uniform-random$\{-1,1\}$; \quad   $R$.\MRwrite($cf$)  \\
$r_1$:  $R$.\MRread () \\
$r_2$:  $R$.\MRread ()
\end{quote}

Suppose the strong adversary is trying to minimize the value of $r_1$'s \MRread.
If $R$ is an atomic register,
then the adversary's best strategy is to have $r_1$ execute its
\MRread\ either before or after both of $w$'s \MRwrite\ operations.
In either case the expected value of $r_1$'s \MRread\ is 0.

Now suppose, instead, that $R$ is implemented using
the algorithm shown in Figure~\ref{fig:multireader_registers}.
This construction
uses 6 single-reader/single-writer registers:
$R_{w-r1}, R_{w-r2},R_{r1-r1},R_{r1-r2},R_{r2-r1},R_{r2-r2},$
where the subscript $x$-$y$ denotes a register written by $x$ and read by $y$.
Each is initialized to the initial value of $R$.

\begin{figure}
 \begin{minipage}[t]{.4\textwidth}
 \begin{function}[H]
   \caption{$R$.MRSWwrite($v$)}
    $seq := seq + 1$, $seq$ is a sequence number initialized to 1\;
    $R_{w-r1}.\xwrite(v,seq)$ \;
    $R_{w-r2}.\xwrite(v,seq)$

 \end{function}
 \end{minipage}\hfill
 \begin{minipage}[t]{.55\textwidth}
 \begin{function}[H]
   \caption{$R$.MRSWread()}
    code for process $r_i, i \in \{1,2\}$ \;
   $v[0], s[0] : = R_{w-ri}.\read $ \;
   $v[1], s[1] : = R_{r1-ri}.\read $ \;
   $v[2], s[2] : = R_{r2-Ri}.\read $ \;
   let $j$ be such that $s[j] = \max \{ s[0],s[1], s[2] \}$ \;
   $R_{ri-r1}.\xwrite(v[j],seq[j])$ \;
   $R_{ri-r2}.\xwrite(v[j],seq[j])$ \;
   return $v[j]$
 \end{function}
 \end{minipage}%
 \caption{Linearizable Implementation of MRSW Registers from SRSW Registers}%
 \label{fig:multireader_registers}
\end{figure}

Under this implementation, the adversary's could schedule as follows:
First one step of $r_1$: $R_{w-r1}$.\read, which will return 0;
then all of $w$'s steps (that is,  all of its first \MRwrite, its flip and its second \MRwrite).
If the value of $cf$ is 1,
then the adversary next schedules the rest of $r_1$'s steps,
followed by all $r_2$'s steps in its $R$.\MRread.
In this case the implementation of $R$.\MRread by $r_1$ returns 0.
If the value of $cf$ is -1,
then the adversary next schedules all $r_2$'s steps in its $R$.\MRread\
followed by the remainder of $r_1$'s $R$.\MRread.
In this case, $r_1$ will discover the updated value of $R$, when it
executes $R_{r2-r1}$.\read, and the implementation returns -1.
Hence the expected value returned by $r_1$ is -1/2.

Notice that this schedule is available even to the weak adversary.
It needs the coin flip value only after $w$ has completed all its operations.
So reducing the power of the adversary from strong to weak
does not curtail its power sufficiently to retain the expected behaviour of the algorithm
when $R$ is an atomic register.

This algorithm in Figure  \ref{fig:multireader_registers} is
the wait-free linearizable implementation
of multi-reader/multi-writer multivalued atomic registers from
single-reader/single-writer multivalued atomic registers
using unbounded sequence numbers
due to Vitanyi and Awerbuch (\cite{vit:atom})
when specialized to two readers and one writer.
The increased power of the weak adversary in the implementation over the power
of the strong adversary in the atomic case also extends to the case when there is more than one writer.

\paragraph{Implementing a queue with atomic increment objects.}
Let $Q$  be a queue object, initially empty and accessed by  $q_1$, $q_2$ and $p$ that are executing the code:

\begin{quote}
$q_0$:   $Q$.\Enqueue(0)\\
$q_1$:   $Q$.\Enqueue(1)\\
$p$: $Q$.\Enqueue(2);\ $cf$ := uniform-random$\{0,1\}$; \ $Q$.\Dequeue; $Q$.\Dequeue; $Q$.\Dequeue\\
\end{quote}

The adversary's goal is to achieve the following:
\begin{enumerate}
 \item[(a)] all of $p$'s \Dequeue\ operations succeed,
 \item[(b)] the \Dequeue operation that returns 1 precedes the \Dequeue operation that returns 2, and
 \item[(c)] the return value of $p$'s first \Dequeue\ operation equals the result of the flip.
\end{enumerate}

Even if the adversary is strong, in order to achieve (b), it must schedule $q_1$'s \Enqueue operation before $p$'s flip operation.
Hence, by the time the flip occurs, the decision whether 0 or 1 is in front of the queue has been made, and cannot be changed by the adversary.
Therefore, the probability that $p$'s first \Dequeue operation returns the value of the flip is at most $1/2$.

\begin{figure}[bt]
 \begin{center}
 \begin{minipage}[t]{.4\textwidth}
 \begin{function}[H]
   \caption{$Q$.enq($v$)}
    $pos :=$ $tail$.\fetchInc()  \;
    $item[pos]$.\xwrite($v$)
 \end{function}
 \end{minipage}\hfill%
 \begin{minipage}[t]{.4\textwidth}
   \begin{function}[H]
   \caption{$Q$.deq($v$)}
   \While{\True}{
     $max:=tail$.\read{}\;
     \For{$i=0\dots max-1$}{
       $v:=temp[i].\fetchSet$\;
       \IlIf{$i\neq\bot$}{\Return{$v$}}
      }
    }
   \end{function}
 \end{minipage}
 \end{center}
 \caption{Implementation of the Herlihy-Wing Queue}%
 \label{fig:queue}
\end{figure}
Herlihy and Wing give a linearizable implementation of
a queue using read-modify-write base objects
that support the operations \fetchSet, \fetchInc and \read
\cite{her:lin}.
The queue is represented by an unbounded array of items with a tail pointer.
The implementation is shown in Figure \ref{fig:queue}.

If the above algorithm uses this queue implementation, then the weak adversary could schedule as shown in Figure~\ref{fig:queue-schedule}.
Here, the left bar in the drawing of a queue method call denotes that method call's first shared memory access, and the right bar its last.
I.e., the \fetchInc operations of the \Enqueue operations occur in the order $q_0$, $q_1$, $p$.
And both, $q_1$ and $p$ execute their $\xwrite$ before the flip happens.
If the result of the flip is 0, then $q_0$ writes immediately, before any \Dequeue operation starts.
Therefore, the first \Dequeue operation will return 0.
If the result of the flip is 1, then $q_0$'s \xwrite is delayed until after the first \Dequeue operation completed.
In this case the first \Dequeue operation will return 1.
In either case, the second \Dequeue operation will return 0 or 1, and the third will return 2.
Hence, even the weak adversary can achieve with probability 1 that (a), (b), and (c) are satisfied.
\begin{figure}[bt]
%%%%% Processing file queue_example.tex
\begin{tikzpicture}[
>=latex, %use latex arrows
very thick,
framed,
on grid,
auto,
]
\footnotesize

\newcommand{\outerop}[4][]{%
  \fill[fill=gray!30] ($(#2)+(0,-.4)$) rectangle ($(#3)+(0,.2)$);
  \draw[-]  (#2) -- node[swap,#1] {#4} (#3);
  \draw[very thick] ($(#2)+(0,-.4)$) -- ($(#2)+(0,.2)$);
  \draw[very thick] ($(#3)+(0,-.4)$) -- ($(#3)+(0,.2)$);
}

\newcommand{\innerop}[4][]{%
  \fill[fill=gray!70] ($(#2)+(0,-.4)$) rectangle ($(#3)+(0,.4)$);
  \draw[<->]  (#2) -- node[#1] {#4} (#3);
}

\node (r) {\normalsize $q_1$};
\node[below = 1 of r] (q) {\normalsize $p$};
\node[above = 1 of r] (p) {\normalsize $q_{0}$};

\coordinate[right = .4 of p ] (pScanBegin);
\coordinate[right = 7.1 of p ] (pScanEnd) ;
\outerop{pScanBegin}{pScanEnd}{$\Enqueue(0)$}

\coordinate[right = 3.4 of q ] (qUpdate2Begin) ;
\coordinate[right = 1.9 of qUpdate2Begin ] (qUpdate2End) ;
\outerop{qUpdate2Begin}{qUpdate2End}{$\Enqueue(2)$}

\node[circle,text centered,fill=blue!20,draw=blue!200,thick,
  right = 6.4 of q] (coinflip) {$c=0$};

\coordinate[right = 7.5 of q] (qDeq1Begin);
\coordinate[right = 2.3 of qDeq1Begin ] (qDeq1End);
\outerop{qDeq1Begin}{qDeq1End}{$\Dequeue$}

\coordinate[right = 10.5 of q] (qDeq2Begin);
\coordinate[right = 2.3 of qDeq2Begin ] (qDeq2End);
\outerop{qDeq2Begin}{qDeq2End}{$\Dequeue$}

\coordinate[right = 13.5 of q] (qDeq3Begin);
\coordinate[right = 2.3 of qDeq3Begin ] (qDeq3End);
\outerop{qDeq3Begin}{qDeq3End}{$\Dequeue$}

\coordinate[right = 1 of r ] (rUpdate1Begin) ;
\coordinate[right = 1.9 of rUpdate1Begin ] (rUpdate1End) ;
\outerop{rUpdate1Begin}{rUpdate1End}{$\Enqueue(1)$}
\end{tikzpicture}

\vskip2\bigskipamount
\begin{tikzpicture}[
>=latex, %use latex arrows
very thick,
framed,
on grid,
auto,
]
\footnotesize

\newcommand{\outerop}[4][]{%
  \fill[fill=gray!30] ($(#2)+(0,-.4)$) rectangle ($(#3)+(0,.2)$);
  \draw[-]  (#2) -- node[swap,#1] {#4} (#3);
  \draw[very thick] ($(#2)+(0,-.4)$) -- ($(#2)+(0,.2)$);
  \draw[very thick] ($(#3)+(0,-.4)$) -- ($(#3)+(0,.2)$);
}

\newcommand{\innerop}[4][]{%
  \fill[fill=gray!70] ($(#2)+(0,-.4)$) rectangle ($(#3)+(0,.4)$);
  \draw[<->]  (#2) -- node[#1] {#4} (#3);
}

\node (r) {\normalsize $q_1$};
\node[below = 1 of r] (q) {\normalsize $p$};
\node[above = 1 of r] (p) {\normalsize $q_{0}$};

\coordinate[right = .4 of p ] (pScanBegin);
\coordinate[right = 10.1 of p ] (pScanEnd) ;
\outerop{pScanBegin}{pScanEnd}{$\Enqueue(0)$}

\coordinate[right = 3.4 of q ] (qUpdate2Begin) ;
\coordinate[right = 1.9 of qUpdate2Begin ] (qUpdate2End) ;
\outerop{qUpdate2Begin}{qUpdate2End}{$\Enqueue(2)$}

\node[circle,text centered,fill=blue!20,draw=blue!200,thick,
  right = 6.4 of q] (coinflip) {$c=1$};

\coordinate[right = 7.5 of q] (qDeq1Begin);
\coordinate[right = 2.3 of qDeq1Begin ] (qDeq1End);
\outerop{qDeq1Begin}{qDeq1End}{$\Dequeue$}

\coordinate[right = 10.5 of q] (qDeq2Begin);
\coordinate[right = 2.3 of qDeq2Begin ] (qDeq2End);
\outerop{qDeq2Begin}{qDeq2End}{$\Dequeue$}

\coordinate[right = 13.5 of q] (qDeq3Begin);
\coordinate[right = 2.3 of qDeq3Begin ] (qDeq3End);
\outerop{qDeq3Begin}{qDeq3End}{$\Dequeue$}

\coordinate[right = 1 of r ] (rUpdate1Begin) ;
\coordinate[right = 1.9 of rUpdate1Begin ] (rUpdate1End) ;
\outerop{rUpdate1Begin}{rUpdate1End}{$\Enqueue(1)$}
\end{tikzpicture}
%%%%% Done processing queue_example.tex
\caption{A ``bad'' scheduling for the queue algorithm.}
\label{fig:queue-schedule}
\end{figure}

%%%%% Done processing appendix.tex

\end{document}